\definecolor{note_fontcolor}{rgb}{0.800781, 0.800781, 0.800781}
\definecolor{ForestGreen}{rgb}{0.1333,0.5451,0.1333}
\titlespacing{\paragraph}{%
  0pt}{
  0.1\baselineskip}{
  1em}
\titlespacing\section{0pt}{8pt plus 1pt minus 1pt}{2pt plus 1pt minus 1pt}
\titlespacing\subsection{0pt}{8pt plus 1pt minus 1pt}{2pt plus 1pt minus 1pt}
\titlespacing\subsubsection{0pt}{8pt plus 1pt minus 1pt}{2pt plus 1pt minus 1pt}
\newtheoremstyle{slplain}
  {.4\baselineskip\@plus.1\baselineskip\@minus.1\baselineskip}
  {.3\baselineskip\@plus.1\baselineskip\@minus.1\baselineskip}
  {\itshape}
  {}
  {\bfseries}
  {.}
  { }
  {}
\theoremstyle{slplain} 
\newtheorem*{definition*}{Definition}
\newtheorem*{theorem*}{Theorem}
\newtheorem{theorem}{Theorem}[section]
\newtheorem{lemma}[theorem]{Lemma}
\newtheorem{proposition}[theorem]{Proposition}
\newtheorem{corollary}[theorem]{Corollary}
\newtheorem{definition}[theorem]{Definition}
\newtheorem*{rep@theorem}{\rep@title}
\newcommand{\newreptheorem}[2]{%
\newenvironment{rep#1}[1]{%
 \def\rep@title{#2 \ref{##1}}%
 \begin{rep@theorem}}%
 {\end{rep@theorem}}}
\theoremstyle{definition}
\newtheorem{remark}[theorem]{Remark}
\theoremstyle{plain} 
\numberwithin{equation}{section}
\newtheoremstyle{etplain}
  {.0\baselineskip\@plus.1\baselineskip\@minus.1\baselineskip}
  {.0\baselineskip\@plus.1\baselineskip\@minus.1\baselineskip}
  {\itshape}
  {}
  {\bfseries}
  {.}
  { }
  {}
\newcommand{\litlow}[1]{\mathord{\mathcode`\-="702D\sf #1\mathcode`\-="2200}}
\newcommand{\lit}[1]{\ensuremath{\litlow{#1}}}
\newcommand{\namedref}[2]{\hyperref[#2]{#1~\ref*{#2}}}
\newcommand{\figurerefb}[2]{\hyperref[#1]{Figure~\ref*{#1}#2}}
\newcommand{\equationref}[1]{\hyperref[#1]{(\ref*{#1})}}
\renewcommand{\eqref}{\equationref}
\newcommand{\DEBUG}[1]{}
\newcommand{\argmax}{\operatornamewithlimits{arg\,max}}
\newcommand{\argmin}{\operatornamewithlimits{arg\,min}}
\newcommand{\hull}{\text{\textnormal{conv}}}
\newcommand{\pow}{\text{\textsc{Pow}}}
\newcommand{\diag}{\text{\textsc{Diag}}}
\newcommand{\one}{\vec{1}}
\def\abs#1{\left|#1  \right|}
\def\norm#1{\left\| #1 \right\|}
\def\ssupp#1{\left\vert\textnormal{supp}\left(#1\right)\right\vert}
\newcommand\BB{\boldsymbol{\mathit{B}}}
\newcommand\sgn[1]{\textnormal{sgn}(#1)}
\newcommand\RCH{\textnormal{\textsc{Rch}}} 
\renewcommand\R{\mathbb{R}}
\renewcommand\E{\mathbb{E}}
\renewcommand\P{\mathbb{P}}
\newcommand\supp{\textnormal{supp}}
\begin{document}
 
\linespread{0.97}

\title{Tight Bounds for Approximate Carath\'{e}odory and Beyond}

\author[1]{Vahab Mirrokni}
\author[1]{Renato Paes Leme}
\author[2]{Adrian Vladu}
\author[3]{Sam Chiu-wai Wong}
\affil[1]{Google Research NY, \texttt{\{mirrokni, renatoppl\}@google.com}}
\affil[2]{MIT Math, \texttt{avladu@mit.edu}}
\affil[3]{UC Berkeley, \texttt{samcwong@berkeley.edu}}

\date{\today}
\date{}
\maketitle
\begin{abstract}

We give a deterministic nearly-linear time algorithm for approximating any point inside a convex polytope with a sparse convex combination of the polytope's vertices. Our result provides a constructive proof for the Approximate Carath\'{e}odory Problem~\cite{Bar15}, which states that any point inside a polytope contained in the $\ell_p$ ball of radius $D$ can be approximated to within $\epsilon$ in $\ell_p$ norm by a convex combination of only $O\left(D^2 p/\epsilon^2\right)$ vertices of the polytope for $p \geq 2$. We also show that this bound is tight, using an argument based on anti-concentration for the binomial distribution.

Along the way of establishing the upper bound, we develop a technique for minimizing norms over convex sets with complicated geometry; this is achieved by running Mirror Descent on a dual convex function obtained via Sion's Theorem.

As simple extensions of our method, we then provide new algorithms for submodular function minimization and SVM training. For submodular function minimization we obtain a simplification and (provable) speed-up over Wolfe's algorithm, the method commonly found to be the fastest in practice. For SVM training, we obtain $O(1/\epsilon^2)$ convergence for arbitrary kernels; each iteration only requires matrix-vector operations involving the kernel matrix, so we overcome the obstacle of having to explicitly store the kernel or compute its Cholesky factorization.

\end{abstract}

\thispagestyle{empty}

\newpage
\setcounter{page}{1}

\section{Introduction}
The (exact) Carath\'{e}odory Theorem is a fundamental result in convex geometry
which states that any point $u$ in a polytope $P\subseteq\mathbb{R}^{n}$
can be expressed as a convex combination of $n+1$ vertices of $P$.
Recently, Barman~\cite{Bar15} proposed an approximate version and showed that it can
be used to improve algorithms for computing Nash equilibria in game theory and
algorithms for the $k$-densest subgraph in combinatorial optimization. Versions
of the Approximate Carath\'{e}odory Theorem have been proposed and applied in
different settings. Perhaps its most famous incarnation is as  Maurey's
Lemma~\cite{pisier1980remarques} in functional analysis. It states
that if one is willing to tolerate an error of $\epsilon$ in $\ell_{p}$
norm, $O\left(D^{2}p/\epsilon^{2}\right)$ vertices suffice to approximate
$u$, where $D$ is the radius of the smallest $\ell_{p}$ ball enclosing $P$.
The key significance of the approximate Carath\'{e}odory Theorem is that
the bound it provides is \emph{dimension-free}, and consequently
allows us to approximate any point inside the polytope with a \emph{sparse}
convex combination of vertices.

\begin{center}
\fbox{\begin{minipage}[t]{1\columnwidth}%
\uline{The Approximate Carath\'{e}odory Problem}

Given a polytope $P$ contained inside the $\ell_{p}$ ball of radius
$D$, and $u\in P$, find a convex combination $\sum_{i=1}^{k}x_{i}v_{i}$
of vertices $v_{i}$ of $P$ such that $k = O\left(D^{2}p/\epsilon^{2}\right)$
and $\left\Vert \sum_{i=1}^{k}x_{i}v_{i}-u\right\Vert _{p}\leq\epsilon$.%
\end{minipage}}
\par\end{center}

Both Barman's proof and Maurey's original proof start from a solution
$u=\sum_{i=1}^{n+1}\lambda_{i}v_{i}$ of the exact Carath\'{e}odory problem,
interpret the coefficients $\lambda_i$ of the convex combination as a
probability distribution and generate a sparse solution by sampling from the
distribution induced by $\lambda$. Concentration inequalites are then used to
argue that the average sampled solution is close to $u$ in the $\ell_p$-norm. 
The proof is clean and elegant, but it leaves two questions: Is randomization
really necessary for this proof? And, can we bypass a solution to the exact
approximate Carath\'{e}odory problem and directly compute a solution to the
approximate version? 

The second question is motivated by the fact that 
computing the solution to the exact Carath\'{e}odory problem can be costly. In
fact, this takes $O(n^\omega)$ time even if the points $v_i$ 
are known in advance. The situation becomes even worse for 
polytopes for which it is not desirable to maintain an
explicit representation of all its vertices (e.g. the
matching polytope or the matroid base polytope) since
there may be exponentially many of them. In this
case, even finding the $n+1$ vertices whose convex hull contains $u$ becomes
significantly more difficult. 

Our first contribution addresses those two questions by giving 
a constructive proof of the approximate Carath\'{e}odory Theorem. As a
corollary, this gives the first nearly linear time deterministic algorithm
for the approximate Carath\'{e}odory problem that does not require knowing
$u=\sum_{i=1}^{n+1}\lambda_{i}v_{i}$ in advance. Our algorithm runs
in $O(D^{2}p/\epsilon^{2})$ iterations, each of which takes linear
time.

Our second contribution is to provide a lower bound showing that the
$O(D^{2}p/\epsilon^{2})$ factor is tight. This improves upon a lower bound of
$\Omega((D/\epsilon)^{p/(p-1)})$ proved by Barman. Barman's lower bound is tight
up to constant factors for $\ell_2$ but leaves a significant gap for any $p >
2$. We prove our lower bound by exhibiting a
polytope $P$ in the radius-$D$ $\ell_p$ ball and a point $u$ inside for which all
convex combinations of $O( D^2 p / \epsilon^2)$ vertices are $\epsilon$-far from the $u$
in the $\ell_p$-norm.

These are in principle the best results one can hope for. We also show that even
though the dependence on $\epsilon$ can't be improved in general, it can be
greatly improved in a special case. If $u$ is far away from the boundary of $P$,
i.e., if the ball of radius $r$ around $u$ is contained in $P$, then there exist
a solution to the approximate Carath\'{e}odory problem with $k = O\left(
\frac{D^2 p}{r^2} \log\left( \frac{r}{\epsilon} \right) \right)$.

In order to achieve the positive results for approximate Carath\'{e}odory, we develop a technique 
for minimizing norms over convex sets with complicated geometry; this is achieved by running {\em Mirror Descent} on 
a dual convex function obtained via Sion's Theorem.  This technique may be of independent interest.
To show its potential, we note that simple extensions of our method result in new algorithms for submodular function minimization and SVM training. For submodular function minimization, we obtain a simplification and (provable) speed-up over Wolfe's algorithm, the method commonly found to be the fastest in practice. For SVM training, we obtain $O(1/\epsilon^2)$ convergence for arbitrary kernels; each iteration only requires matrix-vector operations involving the kernel matrix, so we overcome the obstacle of having to explicitly store the kernel or compute its Cholesky factorization. Next, we elaborate on our technique and then discuss these applications in
more details.

\subsection{Our techniques: mirror descent and lower bounds}

Our new constructive proof of the approximate Carath\'{e}odory Theorem
employs a technique from Convex Optimization called \emph{Mirror Descent}, which
is a generalization of \emph{Subgradient Descent}. Both subgradient and mirror descent
are first order methods that minimize arbitrary convex functions to an additive
precision of $\epsilon$ using only information about the subgradient of the
function.

In particular, we formulate the approximate Carath\'{e}odory
problem as:
\[
\min_{x\in\Delta}\left\Vert Vx-u\right\Vert _{p}
\]
where the columns of $V$ are the vertices of $P$ and $\Delta$
is the unit simplex.

Our first instinct is to apply gradient or mirror descent to $f(x)=\left\Vert
Vx-u\right\Vert _{p}$. This fails to achieve any sort of sparseness guarantee
since  ts gradient is generally not sparse and the new iterate 
$x_{t+1}=x_{t}-\eta\nabla f(x_{t})$
would not be either.

Inspired by algorithms for solving positive linear programs such as
\cite{PST91,young01}, we reformulate our problem as a \emph{saddle point
problem} $\max_{y \in \BB_q(1)} \min_{x \in \Delta} y^\top (Vx-u)$, where
$\BB_q(1)$ is the $\ell_q$-ball. This can be viewed as a zero-sum game. Applying
a generalization of the minimax theorem we can obtain a \emph{dual convex
function}.

We apply the mirror descent framework to this dual function. Mirror descent is a
framework that needs to be instantiated by the choice of a \emph{mirror map},
which plays a similar role as linking functions in Online Learning. We will
provide an overview of Mirror Descent in the next section so that the paper is
self contained. But for the reader familiar with Mirror Descent terminology, our
mirror map is a truncated version of the square $\ell_q$-norm, where $q$ is chosen
such that $\frac{1}{p} + \frac{1}{q} = 1$.

The analysis is enabled by choosing the right function to optimize and the
appropriate mirror map.
At the very high level, our algorithm incrementally improves our current
choice of $x$ by expanding its support by $1$ in each iteration. The
desired sparsity of $x$ then follows as we can show that the number
of iterations is $O(D^{2}p/\epsilon^{2})$.

Our lower bound is inspired by a method proposed by Klein and Young
\cite{KleinY15} for proving conditional lower bounds on the running time for solving
positive linear programs. It again follows from interpreting the
Carath\'{e}odory problem as a zero-sum game between a maximization and a
minimization player. We construct a random instance such that with high
probability the minimization player has a dense strategy with value close to zero,
but for every sparse support, the maximization player can force the strategy to
be $\epsilon$-far from zero with high probability. The lower bound follows from
taking the union bound over the probabilities and applying the probabilistic
method.

\subsection{Applications}
In the following, we discuss a number of applications of our results and techniques. 
While the first result is a straightforward use of our improved 
approximate Carath\'{e}odory theorem, the second result is a simple application
of the mirror-descent technique, and the third one is a simple application of 
an extension of the technique to SVMs. 

\paragraph{Warm-up: fast rounding in polytopes with linear optimization oracles.}

The most direct application of our approach is to efficiently round
a point in a polytope whenever it admits a good linear optimization
oracle. An obvious such instance is given by the matroid polytope.
Given an $n$-element matroid by $\mathcal{M}$ of rank $r$ and a
fractional point $x^{*}$ inside its base polytope, our algorithm produces
a sparse distribution $\mathcal{D}$ over matroid bases such that
marginals are approximately preserve in expectation. More specifically,
for any $p\geq2$, $\mathcal{D}$ has a support of size $\frac{p\cdot r{}^{2/p}}{\epsilon^{2}}$,
and $\left\Vert \mathbb{E}_{x\sim\mathcal{D}}\left[x\right]-x^{*}\right\Vert _{p}\leq\epsilon$;
furthermore, computing $\mathcal{D}$ requires only $O\left(nr^{2/p}p/\epsilon^{2}\right)$
calls to $\mathcal{M}$'s independence oracle. 

\paragraph{Submodular function minimization.}

Fujishige's \emph{minimum-norm point} algorithm is the method typically employed
by practitioners, to minimize submodular
functions~\cite{fujishige2011submodular,bach2010convex}.
Traditionally this has been
implemented using variants of Wolfe's algorithm~\cite{Wolfe76}, which lacked a rigorous
convergence analysis (it was only known to converge in exponential number
of steps). Only recently Chakrabarty, Jain, and Kothari~\cite{ChakrabartyJK14} proved the first polynomial
time bound for this method, obtaining an algorithm that runs in time
$O\left(\left(n^{5}\cdot\mathcal{T}+n^{7}\right)F^{2}\right)$, where
$\mathcal{T}$ is the time required to answer a single query to $f$,
and $F$ is the maximum marginal difference in absolute value.

As our second application, we show that our technique can replace Wolfe's
algorithm in the analysis of \cite{ChakrabartyJK14} obtaining an 
an $O\left(n^{6}F^{2}\mathcal{T}\right)$ time algorithm for
exact sumbodular function minimization, and a $O(n^{6}F^{2}\mathcal{T}/k^{4})$
for a $k$-additive approximation. We emphasize
that those are not the best theoretical algorithm, but
a simplification and a speed-up of the algorithm that is commonly found to be
the fastest in practice.

\paragraph{Support vector machines.}
Training support vector machines (SVMs) can also be formulated as minimizing a
convex function. We show that our technique of converting a problem to a saddle
point formulation and solving the dual via Mirror Descent can be applied to the
problem of training $\nu$-SVMs. This is based on a formulation introduced by 
 Sch{\"{o}}lkopf, Smola, Williamson, and Bartlett~\cite{ScholkopfSWB00}.
 Kitamura, Takeda and Iwata~\cite{KitamuraTI14} show how SVMs can be trained using Wolfe's
 algorithm. Replacing Wolfe's algorithm by Mirror Descent we obtain 
an $\epsilon$-approximate solution in time $O\left(\max\left(\frac{1}{\epsilon},\left\Vert K\right\Vert \right)/\left(\nu n\epsilon^{2}\right)\right)$,
where $K$ is the kernel matrix. Whenever the empirical data belongs
to the unit $\ell_{2}$ ball, this yields a constant number of iterations
for polynomial and RBF kernels. Our method does not need to explicitly
store the kernel matrix, since every iteration only requires a matrix-vector
multiplication, and the entries of the matrix can be computed on-the-fly
as they are needed. In the special case of a linear kernel, each iteration
can be implemented in time linear in input size, yielding a nearly-linear
time algorithm for linear SVM training.

\subsection{Related work}

As previously mentioned, the Approximate Carath\'{e}odory Theorem was been
independently discovered
many times in the past. The earliest record is perhaps due to Novikoff
\cite{novikoff1962convergence} in 1962 who showed that the $\ell_2$ version of
Approximate Carath\'{e}odory can be obtained as a byproduct of the analyis of
the Perceptron Algorithm (as pointed out by \cite{blum2015sparse}). Maurey
\cite{pisier1980remarques} proves it in the context of functional analysis.
We refer to
the appendix of \cite{BourgainN13} for the precise statement of Maurey's lemma
as well as a self-contained proof.
Farias et al ~\cite{farias2012sparse} studies it for the special case of the the
bipartite matching polytope. Barman~\cite{Bar15} study the $\ell_p$ case and
provides several applications to game theory and combinatorial optimization.

Related to the Approximate Carath\'{e}odory problem is the question studied by
Shalev{-}Shwartz, Srebro and Zhang of minimizing the loss of a linear predictor
while bounding the number of features used by the predictor. Their main result
implies a gradient-descent based algorithm for the $\ell_2$-version of the
Approximate Carath\'{e}odory Theorem but is only able to produce
$O(1/\epsilon^p)$ for $p > 2$. A different optimization approach to  Approximate
Carath\'{e}odory is done by Garber and Hazan \cite{GarberH13} who solve the
optimization problem $\min_{x \in P} \norm{x-u}_2^2$ using Frank-Wolfe methods,
also obtaining the $\ell_2$ version of the result.

Finally, the literature on Mirror Descent is too large to survey, but we refer
to the book by Ben-Tal and Nemirovski \cite{Nemirovski} for a comprehensive
overview, including a discussion of the $\ell_q$ square mirror map. In Online
Learning a variant of this mirror map has been used in Gentile's $p$-norm
algorithms \cite{Gentile03a}.

\section{Preliminaries}

\subsection{Notation}

Given a point $x \in \R^d$, we define its $\ell_p$-norm as $\norm{x}_p =
\left(\sum_{i=1}^d \abs{x_i}^p\right)^{1/p}$ for $1\leq p < \infty$ and its
$\ell_\infty$ norm by $\norm{x}_\infty = \max_i \abs{x_i}$. Given a norm 
$\norm{\cdot}$, we denote by
$\BB_{\norm{\cdot}}(r) = \{x \in \R^d; \norm{x} \leq r\}$. For $\ell_p$ and
$\ell_\infty$ norms, we denote the balls simply by $\BB_p(r)$ and
$\BB_\infty(r)$.

Given a norm $\norm{\cdot}$, we define its dual norm $\norm{\cdot}_*$ as
$\norm{y}_* = \max_{x;\norm{x}=1} y^\top x$ in such a way that  H\"{o}lder's
inequality holds with equality: $y^\top x \leq \norm{y}_* \cdot \norm{x}$. The dual norm of
the $\ell_p$ norm is the $\ell_q$ norm for $\frac{1}{p} + \frac{1}{q} = 1$.

Given a vector $x$, let its support $\supp(x) = \left\{i | x_i \neq 0\right\}$ represent the number
of nonzero coordinates of $x$.

\subsection{Approximate Carath\'{e}odory problem}

The (exact) Carath\'{e}odory Theorem is a fundamental result in linear algebra which
bounds the maximum number of points necessary to describe a point in the convex
hull of a set. More precisely, given a finite set of points $X \subseteq \R^d$
and a point 
$u \in \hull(X) := \{\sum_{x \in X} \lambda_x \cdot x; \sum_x \lambda_x = 1, \lambda_x
\geq 0\}$, there exist $d+1$ points in $x_1, \hdots, x_{d+1} \in X$ such that $u
\in \hull\{x_1, \hdots, x_{d+1}\}$. On the plane, in particular, every point in
the interior of a convex polygon can be written as a convex combination of three
of its vertices.

The approximate version of the Carath\'{e}odory theorem bounds the number of points
necessary to describe a point $u \in \hull(X)$ approximately. Formally, given a norm
$\norm{\cdot}$, an additive error parameter $\epsilon$ and a set of points
$X \subseteq B_{\norm{\cdot}}(1) \subseteq \R^d$, for every $u \in
\hull(X)$ we want $k$ points $x_1, \hdots, x_k \in X$ such that there exists 
$u' \in \hull \{x_1,\hdots, x_k\}$ and $\norm{u-u'} \leq \epsilon$.

A general result of this type is given by Maurey's Lemma \cite{pisier1980remarques}. For the
case of $\ell_p$ norms, $p \geq 2$, Barman \cite{Bar15} showed that
$k \leq 4p/\epsilon^2$ points suffice. A notable aspect of this theorem
is that the bound is independent of the dimension of the ambient space.

\subsection{Convex functions}

We give a brief overview on the theory of convex functions. For a detailed
exposition we refer readers to \cite{rockafellar}.

\paragraph{Subgradients.} A function 
$f : Q \subseteq \R^d \rightarrow \R$ defined on
a convex domain $Q$ is said to be convex if every point $x \in Q$ has a
non-empty subgradient $\partial f(x) = \{ g \in \R^d; f(y) \geq f(x) +
g^\top(y-x), \forall y \in Q\}$. Geometrically, this means that a function is
convex iff it is the maximum of all its supporting hyperplanes, i.e. $f(x) =
\max_{x_0, g \in \partial f(x_0)} f(x_0) + g^\top (x-x_0)$.
When there is a unique element in $\partial
f(x)$ we call it the gradient and denote it by $\nabla f(x)$. We will sometimes abuse
notation and refer to $\nabla f(x)$ as an arbitrary element of $\partial f(x)$
even when it is not unique.

\paragraph{Strong convexity and smoothness.} We say that a function 
$f:Q \subseteq \R^d \rightarrow \R$ is $\mu$-strongly convex with respect to
norm $\norm{\cdot}$ if for all $x,y \in Q$ and all subgradients
$g \in \partial f(x)$:
$$f(y) - f(x) - g^\top (y-x) \geq \frac{1}{2} \mu \norm{y-x}^2$$
A function is said to be $\sigma$-smooth with respect to the $\norm{\cdot}$ 
if for all $x,y \in Q$ and $g \in \partial f(x)$:
$$f(y) - f(x) - g^\top (y-x) \leq \frac{1}{2} \sigma \norm{y-x}^2$$

\paragraph{Bregman divergence and the Hessian.} Every continuously
differentiable $f$ induces a  concept of `distance' known as the Bregman-divergence: 
given $x,y \in Q$, we define $D_f(y \Vert x) := f(y) - f(x)
- \nabla f(x)^\top (y-x)$ as the second order error when computing $f(y)$ using
the linear approximation of $f$ around $x$. The fact that $f$ is convex
guarantees $D_f(y \Vert x) \geq 0$.

If the subgradient of $f$ is unique everywhere, we can define $\mu$-strong
convexity and $\sigma$-smoothness with respect to the Bregman divergence, as
$D_f(y \Vert x) \geq \frac{1}{2} \sigma \norm{x-y}^2$ and $D_f(y \Vert x) \leq
\frac{1}{2} \mu \norm{x-y}^2$. If $f$ is also twice-differentiable,
a simple way to compute its strong convexity and 
smoothness parameters is by bounding the $\norm{\cdot}$-eigenvalues of the 
Hessian. If $\mu \cdot \norm{w}^2 \leq w^\top \nabla^2 f(x) w \leq \sigma \cdot
\norm{w}^2$ for all $x \in Q$ and $w \neq 0$, then $f$ is $\mu$-strongly convex 
and $\sigma$-smooth. This is because:
$$D_f(y \Vert x) = \int_0^1 [\nabla f(x + (y-x)t) - \nabla f(x)]^\top (y-x) dt = \int_0^1
\int_0^s (y-x)^\top \nabla^2 f(x + (y-x)s) (y-x) ds dt$$

\paragraph{Lipschitz constant.}
We say that a convex function is $\rho$-Lipschitz with respect to
norm $\norm{\cdot}$ if $\norm{\nabla f(x)}_* \leq \rho$. 
Note that $\rho$-Lipschitz continuity requires a bound on the
\emph{dual} norm, since $$\begin{aligned}
\abs{f(y) - f(x)} & = \abs{\int_0^1 
\nabla f(x+t(y-x))^\top (y-x) dt } \leq \int_0^1 \abs{\nabla f(x+t(y-x))^\top (y-x)}
dt \\ & \leq \int_0^1 \norm{\nabla f(x+t(y-x))}_* \cdot \norm{y-x} dt \leq \rho \cdot
\norm{y-x} \end{aligned}$$

\paragraph{Fenchel duality.}
It is useful to write a convex function as the maximum of its supporting
hyperplanes. One way to do that is using the \emph{Fenchel transform}.
When defining Fenchel transforms, it is convenient to identify a function $f:Q
\rightarrow \R$ to its extension $\tilde{f}:\R^d \rightarrow \R \cup \{\infty\}$
such that $\tilde{f}(x) = f(x)$ for $x \in Q$ and $\tilde{f}(x) = \infty$
otherwise. Given that identification, we can define the Fenchel transform of a
function $f$ as the function $f^*:\R^d \rightarrow \R \cup \{\infty\}$  given by
$f^*(z) = \sup_{x \in \R^d} z^\top x - f(x)$. If $f$ is convex, the Fenchel
transformation is self-invertible, i.e., $(f^*)^* = f$ or 
equivalently: $f(x) = \max_z z^\top x - f^*(z)$. Notice that the previous
expression is a way to write any convex function as a maximum over linear
functions in $x$ parametrized by $z$. The \emph{Fenchel inequality} $f(x) +
f^*(z) \geq z^\top x$ follows directly from the definition of the Fenchel
transform.

\paragraph{Envelope Theorem.} When writing a convex function $f = \max_i f_i$
as a maximum of other convex function (typically linear functions),
the Envelope Theorem gives a way to compute derivatives. Its statement is
quite intuitive: since gradients are local objects, the gradient of $f$ at a
certain point is the gradient of the function $f_i$ being maximized at that
point. Formally, if $f(x) = \max_z g(x,z)$ where $g(x,z)$ is convex in $x$ for
every fixed $z$, then if $f(x_0) = g(x_0, z_0)$, then $\partial_x g(x_0, z_0)
\subseteq \partial f(x_0)$. A direct application of this theorem is in computing
the gradients of the Fenchel dual: $\nabla f^*(z) = \argmin_x \{z^\top x -
f(x)\}$ and $f^*(z) = z^\top \nabla f^*(z) - f(\nabla f^*(z))$.

\paragraph{Smoothness and strong convexity duality.} Finally, we will use the
following duality theorem:
\begin{theorem}\label{thm:smoothness_strong_convexity}
The function $f : Q \rightarrow \R$ is a $(1/\sigma)$-strongly convex
function with respect to $\norm{\cdot}$ if and only if its Fenchel dual
$f^* : \R^d \rightarrow \R$ is a $\sigma$-smooth with respect to
$\norm{\cdot}_*$.
\end{theorem}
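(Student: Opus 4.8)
The plan is to establish the two implications by the same symmetric manipulation of the Fenchel conjugate, exploiting that the defining inequality of $(1/\sigma)$-strong convexity and that of $\sigma$-smoothness are Fenchel-conjugate to each other: after a supremum over a free dual variable, each collapses to the scalar identity $\sup_{t \ge 0}\bigl(t a - \tfrac{\sigma}{2} t^2\bigr) = \tfrac{a^2}{2\sigma}$. Throughout I would work with the closed proper extension $\tilde f : \R^d \to \R \cup \{\infty\}$, so that biduality $(f^*)^* = f$ and the subgradient correspondence $z \in \partial f(x) \iff x \in \partial f^*(z)$ are available, and I would use the finite-dimensional facts that $\norm{\cdot}_{**} = \norm{\cdot}$ and that for every $v$ there is an $r$ with $\norm{r}_* = 1$ and $r^\top v = \norm{v}$.

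For the forward direction, suppose $f$ is $(1/\sigma)$-strongly convex with respect to $\norm{\cdot}$. Then $z^\top x - f(x)$ is strongly concave and coercive, so the maximizer in $f^*(z) = \sup_x\bigl(z^\top x - f(x)\bigr)$ exists and is unique; write $x_z = \nabla f^*(z)$, which by first-order optimality satisfies $z \in \partial f(x_z)$ and $f^*(z) = z^\top x_z - f(x_z)$, and in particular $f^*$ is differentiable. Substituting this Fenchel equality together with $f^*(w) = \sup_x\bigl(w^\top x - f(x)\bigr)$ gives
\[
D_{f^*}(w \Vert z) \;=\; f^*(w) - f^*(z) - x_z^\top(w-z) \;=\; \sup_{x}\Bigl[\, w^\top(x - x_z) - f(x) + f(x_z)\,\Bigr].
\]
Applying $(1/\sigma)$-strong convexity of $f$ at $x_z$ with subgradient $z$, namely $f(x) \ge f(x_z) + z^\top(x-x_z) + \tfrac{1}{2\sigma}\norm{x-x_z}^2$ (trivially so where $f$ takes the value $+\infty$), to bound the bracket yields
\[
D_{f^*}(w \Vert z) \;\le\; \sup_{x}\Bigl[\, (w-z)^\top(x-x_z) - \tfrac{1}{2\sigma}\norm{x-x_z}^2\,\Bigr] \;=\; \sup_{t \ge 0}\Bigl[\, t\,\norm{w-z}_* - \tfrac{1}{2\sigma}t^2\,\Bigr] \;=\; \tfrac{\sigma}{2}\norm{w-z}_*^2 ,
\]
where the middle equality is the definition of the dual norm together with positive homogeneity. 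Since $f^*$ is differentiable, this is exactly $\sigma$-smoothness of $f^*$ with respect to $\norm{\cdot}_*$.

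For the reverse direction, suppose $f^*$ is $\sigma$-smooth with respect to $\norm{\cdot}_*$; as the definition quantifies over all subgradients it already forces $f^*$ to be differentiable. Fix $x, y$ and choose any $p \in \partial f(x)$, so that $x = \nabla f^*(p)$ and $f(x) = p^\top x - f^*(p)$. For every $q$, Fenchel's inequality $f(y) \ge q^\top y - f^*(q)$ combined with this equality gives
\[
f(y) - f(x) - p^\top(y-x) \;\ge\; (q-p)^\top(y-x) - \bigl[\, f^*(q) - f^*(p) - x^\top(q-p)\,\bigr] \;=\; (q-p)^\top(y-x) - D_{f^*}(q \Vert p) .
\]
Bounding $D_{f^*}(q \Vert p) \le \tfrac{\sigma}{2}\norm{q-p}_*^2$ by smoothness and taking the supremum over $q$ (again using $\norm{\cdot}_{**} = \norm{\cdot}$ and the attainment fact) yields $f(y) - f(x) - p^\top(y-x) \ge \sup_{t \ge 0}\bigl[\, t\,\norm{y-x} - \tfrac{\sigma}{2}t^2\,\bigr] = \tfrac{1}{2\sigma}\norm{y-x}^2$, which is $(1/\sigma)$-strong convexity of $f$.

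The computational heart — the conjugate pair of inequalities collapsing to a one-dimensional quadratic — is short and mechanical, so I expect the genuine care to lie in the analytic bookkeeping: checking that $f$ is closed and proper so that biduality and the correspondence between $\partial f$ and $\partial f^*$ hold, that strong convexity of $f$ indeed makes $f^*$ differentiable (so that ``$\sigma$-smooth'' is even meaningful for it) and symmetrically that the stated smoothness of $f^*$ forces unique subgradients, and that the dual norm is attained in the relevant finite-dimensional sense. None of these is deep, but they are precisely where a careless argument could slip.
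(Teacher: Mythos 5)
Your proposal is correct, and it takes a genuinely different route from the paper. The paper proves only the forward implication (strong convexity of $f$ gives smoothness of $f^*$) and cites Kakade and Shalev-Shwartz for the converse. Its argument fixes $z_1, z_2$, sets $y_i = \nabla f^*(z_i)$, writes the two strong-convexity inequalities at $y_1$ and $y_2$ with the cross subgradients $z_1, z_2$, adds them, applies H\"{o}lder to get $\sigma\norm{z_1-z_2}_* \geq \norm{\nabla f^*(z_1) - \nabla f^*(z_2)}$, and then integrates this Lipschitz bound on the gradient along a segment to recover the Bregman inequality $D_{f^*}(z_2\Vert z_1) \leq \tfrac{\sigma}{2}\norm{z_1-z_2}_*^2$. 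Your argument instead writes $D_{f^*}(w\Vert z)$ as a supremum directly, inserts the strong-convexity lower bound on $f$, and collapses to the one-dimensional Fenchel identity $\sup_{t\geq 0}\bigl(t a - \tfrac{1}{2\sigma}t^2\bigr) = \tfrac{\sigma}{2}a^2$; the reverse direction is its exact mirror image with $\sup$ replaced by the Fenchel inequality. What each buys: the paper's route produces the $(\ell_*,\ell)$-Lipschitz bound on $\nabla f^*$ as a useful intermediate fact and is a little shorter for the single direction it needs, while yours is more symmetric, handles both directions of the iff with one computation, and makes the ``strong convexity and smoothness are Fenchel-conjugate conditions'' intuition explicit. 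Your side remarks about closedness/properness, biduality, dual-norm attainment, and smoothness forcing differentiability are exactly the right hygiene points and are handled correctly.
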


\begin{proof}
Here  we prove that $\sigma$-strong convexity of a function implies
$(1/\sigma)$-smoothness of its dual, since this is the direction we will use.
We refer to \cite{Kakade12,ShalevThesis} for a proof of the converse.

Fix $z_1,z_2 \in \R^d$  and let $y_i \in \partial f^*(z_i) = \argmax_{y \in Q}
z_i^\top y - f(y)$. Since $f$ is strongly convex, there in an unique maximum, so
we can write $y_i = \nabla f^*(z_i)$.
Also, $f^*(z_i) = z_i^\top y_i - f(y_i)$. Since
the Fenchel transform is self-dual, $f(y_i) = \max_z y_i^\top z - f^*(z) =
z_i^\top y_i - f^*(z_i)$. In particular, this means that $z_i \in \partial
f(y_i)$.

Using the strong-convexity of $f$, we can write:
$$f(y_2) - f(y_1) - z_1^\top(y_2-y_1) \geq \frac{1}{2\sigma} \norm{y_1 - y_2}^2 $$
$$f(y_1) - f(y_2) - z_2^\top(y_1-y_2) \geq \frac{1}{2\sigma} \norm{y_1 - y_2}^2 $$
Summing the expressions above and applying Holder's inequality, we get:
$$\frac{1}{\sigma} \norm{y_1-y_2}^2 \leq (z_2 - z_1)^\top(y_2 - y_1) \leq
\norm{z_1 - z_2}^* \cdot \norm{y_1 - y_2} $$
Therefore:
$$\sigma \cdot \norm{z_1 - z_2 }_* \geq \norm{y_1 - y_2} =
\norm{ \nabla f^*(z_1) - \nabla f^*(z_2) }$$
which implies the smoothness bound:
$$D_{f^*}(z_2 \Vert z_1) = \int_0^1 [\nabla f^*(z_1 + t(z_2 - z_1)) - \nabla
f^*(z_1)]^\top (z_2-z_1) dt \leq \frac{1}{2} \sigma \norm{z_1 - z_2}_*^2$$
\end{proof}

\subsection{A primer on Mirror Descent}

For the sake of completeness, we will present here an elementary exposition
of the Mirror Descent Framework, which is used in our proof.
For a complete exposition we refer to Nemirovskii \cite{Nemirovski}
or Bubeck  \cite{Bubeck14}.

The goal of Mirror Descent is to minimize a convex function
$f:Q \subseteq \R^d \rightarrow \R$ with Lipschitz constant $\rho$ with respect to norm
$\norm{\cdot}$. To motivate Mirror Descent, it is useful to think of dot
products $y^\top x$ as a product of vectors in two different vector spaces,
which can be thought as \emph{vectors} vs \emph{linear forms} or \emph{column
vectors} vs \emph{row vectors}. In the spirit of H\"{o}lder's
inequality, we can think of $x$ as living in the $\R^d$ space equipped
with $\norm{\cdot}$ norm while $y$ lives in $\R^d$ equipped with the
dual norm $\norm{\cdot}_*$. When we approximate $f(y) - f(x) \approx \nabla
f(x)^\top (y-x)$, the second term is a dot-product of a vector in the domain 
$y-x$, which we call the \emph{primal space} and measure using $\norm{\cdot}$ norm
and a gradient vector, which we call the \emph{dual space} and measure with
dual norm $\norm{\cdot}_*$.

Keeping the discussion in the previous paragraph in mind, we can revisit the
most intuitive method to minimize convex functions: \emph{gradient descent}. The
gradient descent method consists in following the directions of steepest
descent, which is the direction opposite to the gradient. This leads to an
iteration of the type:
$y_{t+1} = y_t - \eta \cdot \nabla f(y_t)$. In the view
of primal space and dual space, this iteration suddenly looks strange, because
one is summing a primal vector $y_t$ with a dual vector $\nabla f(y_t)$ which
live in different spaces. In some sense, the gradient descent for Lipschitz
convex functions only makes sense in the $\ell_2$ norm, in which $\norm{\cdot} =
\norm{\cdot}_*$ (see the subgradient descent method in \cite{Nesterov2004}).

This motivated the idea of a map $M:\R^d
\rightarrow Q$ connecting the primal and the dual space. The idea in the mirror
descent algorithm is to keep two vectors $(y_t, z_t)$ one in the primal space
and one in the dual space. In each iteration we compute $\nabla f(y_t)$,
obtaining a dual vector and update:
$$z_{t+1} = z_t - \eta \nabla f(y_t) \qquad \qquad y_{t+1} = M(z_{t+1})$$

It is convenient in the analysis to think of this map as the gradient of a
convex function $M = \nabla \omega^*$. In the usual setup, we define the \emph{mirror map}, which is a convex function
$\omega:Q \rightarrow \R$, $\sigma^{-1}$-strongly convex with respect to $\norm{\cdot}$. Let
$\omega^* : \R^d \rightarrow \R$ be the Fenchel-dual $\omega^*(z) = \sup_{y \in
Q} z^\top y - \omega(y)$ which is a $\sigma$-smooth convex function with respect
to $\norm{\cdot}_*$ by Theorem \ref{thm:smoothness_strong_convexity}.

Notice that $\omega^*$ is defined as a maximum over linear functions of $z$
indexed by $y$. The result known as the envelope theorem states that $\nabla
\omega^*(z_0)$ is the gradient of the linear function maximized at $z_0$.
Therefore: $\nabla \omega^*(z_0) = y \in \argmax_y \{ z_0^\top y - \omega(y) \}$.
This in particular implies that $\nabla \omega^*(z) \in Q$ since $\omega(y) =
\infty$ for $y \notin Q$.

Using the definition of $\omega$ and $\omega^*$ we can define the Mirror Descent
iteration as:
$$z_{t+1} = z_t - \eta \nabla f(y_t) \qquad \qquad y_{t+1} = \nabla \omega^*(z_{t+1})$$

\begin{theorem}\label{thm:mirror-descent}
In the setup described above with $D = \max_{z \in Q} D_\omega(z \Vert
z_0)$, $\eta = \epsilon / \sigma \rho^2$ then in $T \geq 2D\sigma \rho^2 /
\epsilon^2$ iterations, it holds that $\frac{1}{T} \sum_t
\nabla f(y_t)^\top (y_t - y) \leq \epsilon, \forall y \in Q$.
\end{theorem}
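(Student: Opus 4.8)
The plan is to run the classical potential-function argument for Mirror Descent, using the Fenchel dual $\omega^*$ as the potential and exploiting the fact that it is $\sigma$-smooth with respect to $\norm{\cdot}_*$, which is exactly what \theoremref{thm:smoothness_strong_convexity} gives us from the $\sigma^{-1}$-strong convexity of the mirror map $\omega$.

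First I would control the one-step growth of the potential. Since $z_{t+1}-z_t = -\eta\nabla f(y_t)$ and $\omega^*$ is $\sigma$-smooth, the smoothness inequality yields
\[
D_{\omega^*}(z_{t+1}\Vert z_t) \le \tfrac{1}{2}\sigma\norm{z_{t+1}-z_t}_*^2 = \tfrac{1}{2}\sigma\eta^2\norm{\nabla f(y_t)}_*^2 \le \tfrac{1}{2}\sigma\eta^2\rho^2,
\]
where the last step uses that $f$ is $\rho$-Lipschitz. Expanding $D_{\omega^*}(z_{t+1}\Vert z_t) = \omega^*(z_{t+1}) - \omega^*(z_t) - \nabla\omega^*(z_t)^\top(z_{t+1}-z_t)$ and substituting $\nabla\omega^*(z_t) = y_t$ (the envelope theorem) together with $z_{t+1}-z_t = -\eta\nabla f(y_t)$, this rearranges into the per-step estimate
\[
\eta\,\nabla f(y_t)^\top y_t \le \omega^*(z_t) - \omega^*(z_{t+1}) + \tfrac{1}{2}\sigma\eta^2\rho^2.
\]

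Next I would deal with the term linear in the comparison point $y$: from $z_{t+1}-z_t = -\eta\nabla f(y_t)$ we have $\eta\,\nabla f(y_t)^\top y = (z_t - z_{t+1})^\top y$, which telescopes. Subtracting this from the per-step estimate and summing over $t = 1,\dots,T$, the $\omega^*$ terms also telescope, leaving
\[
\eta\sum_{t=1}^T \nabla f(y_t)^\top(y_t - y) \le \omega^*(z_1) - \omega^*(z_{T+1}) + (z_{T+1}-z_1)^\top y + \tfrac{1}{2}T\sigma\eta^2\rho^2 .
\]
The delicate point — the one I expect to require the most care — is showing that the first three boundary terms collapse to at most $D$. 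The Fenchel inequality gives $-\omega^*(z_{T+1}) \le \omega(y) - z_{T+1}^\top y$, which bounds those three terms by $\omega^*(z_1) + \omega(y) - z_1^\top y$; then, using that the iteration is initialized so the primal iterates start at $z_0$ (i.e.\ $z_1 = \nabla\omega(z_0)$, equivalently $\nabla\omega^*(z_1) = z_0$), the Fenchel equality $\omega^*(z_1) = z_1^\top z_0 - \omega(z_0)$ turns this into exactly $\omega(y) - \omega(z_0) - \nabla\omega(z_0)^\top(y - z_0) = D_\omega(y\Vert z_0) \le D$.

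Finally, dividing through by $\eta T$ gives $\frac{1}{T}\sum_t \nabla f(y_t)^\top(y_t-y) \le \frac{D}{\eta T} + \frac{1}{2}\sigma\eta\rho^2$; plugging in $\eta = \epsilon/(\sigma\rho^2)$ makes the second term equal to $\epsilon/2$, and $T \ge 2D\sigma\rho^2/\epsilon^2$ forces the first term to be at most $\epsilon/2$, which proves the claim. Everything outside the boundary-term bookkeeping is routine: the smoothness step, the telescoping, and the choice of $\eta$ and $T$ are all mechanical once the potential $\omega^*$ is in place.
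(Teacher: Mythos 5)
Your proof is correct and follows essentially the same route as the paper: a potential argument on $\omega^*$ using its $\sigma$-smoothness, telescoping, the Fenchel inequality to handle the comparison point, and collapsing the boundary terms into $D_\omega(y\Vert y_0)\le D$. The only blemish is a notational slip in the boundary-term step, where you write $z_0$ for what should be the initial primal iterate $y_0=\nabla\omega^*(z_0)$ (so the identity is $\omega^*(z_0)=z_0^\top y_0-\omega(y_0)$, yielding $D_\omega(y\Vert y_0)$); the argument itself is unaffected.
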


\begin{proof}
The idea of the proof is to bound the growth of $\omega^*(z_t)$ using smoothness
property of $\omega^*$:
$$\begin{aligned}
\omega^*(z_t) & \leq \omega^*(z_0) + \sum_{t=0}^{T-1} \nabla \omega^*(z_t)^\top
(z_{t+1} - z_t) + \frac{\sigma}{2} \norm{z_{t+1} - z_t}^2_* \\
& = \omega^*(0) -  \sum_{t=0}^{T-1} \eta  \nabla y_t^\top \nabla f(y_t) +
\frac{\sigma}{2} \eta^2 \norm{\nabla f(y_t)}_*^2
\end{aligned}$$
By the Fenchel inequality $\omega^*(z_t) \geq z_t^\top y - \omega(y) = (z_0
- \sum_t \eta \nabla f(y_t))^\top y - \omega(y^*)$ for all $y \in Q$. Combining
with the previous inequality and re-arranging the terms, we get:
$$\eta \sum_t \nabla f(y_t)^\top (y_t - y) \leq \omega(y) + \omega^*(z_0) -
\nabla \omega(y_0)^\top y  + \frac{\sigma}{2} \eta^2 \rho^2 T$$
The gradient of $\omega^*(z_0) = \sup_y z_0^\top y - \omega(y)$ corresponds by
the envelope theorem to $y$ maximizing $z_0^\top y - \omega(y)$. Therefore,
since $y_0 = \nabla \omega^*(z_0)$, $\omega^*(z_0) = z_0^\top y_0 - \omega(y_0)$.
Substituting $\omega^*(z_0)$ in the above expression and using the definition of
Bregman divergence, we get:
$$\eta \sum_t \nabla f(y_t)^\top (y_t - y) \leq
D_\omega(y \Vert y_0) + \frac{\sigma}{2} \eta^2 \rho^2 T$$
Rearranging the terms and using that $ D_\omega(y \Vert y_0) \leq D$, we obtain:
$$\frac{1}{T} \sum_t \nabla f(y_t)^\top (y_t - y) \leq \frac{D}{\eta T} 
+ \frac{\sigma \eta \rho^2}{2} = \sqrt{\frac{2D\sigma
\rho^2}{T}} \text{  for  } \eta = \sqrt{\frac{2D}{T\sigma \rho^2}}$$
So for $T \geq \frac{2 \sigma D \rho^2 }{\epsilon^2}$, $\frac{1}{T} \sum_t
\nabla f(y_t)^\top (y_t - y) \leq \epsilon$.
\end{proof}

\begin{corollary}\label{cor:mirror-descent}
In the conditions of the previous theorem, for $\bar{y}_t =
\frac{1}{T} \sum_{t=1}^T y_t$, $f(\bar{y}_t) - f^* \leq \epsilon$, where $f^* =
\min_{y \in
Q} f(y)$
\end{corollary}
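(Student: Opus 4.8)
\emph{Proof proposal.} The plan is to convert the averaged-gradient bound of \theoremref{thm:mirror-descent} into a true optimality gap using two standard applications of convexity. First I would apply Jensen's inequality: since $f$ is convex and $\bar{y}_t$ is a convex combination of the iterates $y_1,\dots,y_T$, we get $f(\bar{y}_t) \leq \frac{1}{T}\sum_{t=1}^T f(y_t)$. Next, let $y^\ast \in \argmin_{y \in Q} f(y)$; note $y^\ast \in Q$, so it is an admissible choice of $y$ in the conclusion of \theoremref{thm:mirror-descent}. By the subgradient inequality (the defining property of $\nabla f(y_t) \in \partial f(y_t)$), we have $f(y_t) - f(y^\ast) \leq \nabla f(y_t)^\top (y_t - y^\ast)$ for every $t$. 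Averaging this over $t=1,\dots,T$ and invoking \theoremref{thm:mirror-descent} with $y = y^\ast$ gives
\[
\frac{1}{T}\sum_{t=1}^T \big(f(y_t) - f^\ast\big) \;\leq\; \frac{1}{T}\sum_{t=1}^T \nabla f(y_t)^\top (y_t - y^\ast) \;\leq\; \epsilon .
\]
Combining this with the Jensen bound yields $f(\bar{y}_t) - f^\ast \leq \frac{1}{T}\sum_t f(y_t) - f^\ast \leq \epsilon$, which is exactly the claim.

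The argument is entirely routine, so there is no real obstacle; the only point worth flagging is that \theoremref{thm:mirror-descent} is stated as a bound holding simultaneously for \emph{all} $y \in Q$, and it is precisely this uniformity that lets us instantiate it at the minimizer $y^\ast$ and thereby upgrade the regret-style guarantee into a bound on the actual suboptimality of the average iterate. One should also keep in mind the (implicit) hypothesis that the minimum $f^\ast = \min_{y\in Q} f(y)$ is attained, which is what makes the choice of $y^\ast$ meaningful; if it were not attained one would instead work with a near-minimizer and take a limit, but in the applications of interest $Q$ is compact so this is not an issue.
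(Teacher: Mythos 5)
Your proof is correct and matches the paper's argument essentially verbatim: both use Jensen's inequality to bound $f(\bar{y}_t)$ by the average of $f(y_t)$, then the subgradient inequality to bound each $f(y_t) - f(y^*)$ by $\nabla f(y_t)^\top(y_t - y^*)$, and finally instantiate \theoremref{thm:mirror-descent} at $y = y^*$.
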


\begin{proof}
Let $y^* = \argmin_{y \in Q} f(y)$. Applying the previous theorem with $y = y^*$
we get:
$$f(\bar{y}_t) - f(y^*) \leq \frac{1}{T} \sum_t f(y_t) - f(y^*) \leq \frac{1}{T}
\sum_t \nabla f(y_t)^\top (y_t - y^*) \leq \epsilon$$
where both inequalities follow from convexity of $f$.
\end{proof}

\section{Nearly linear time deterministic algorithm}\label{sec:linear}

In this section, we present a nearly linear time deterministic algorithm for the
approximate Carath\'{e}odory Problem. Barman's original proof \cite{Bar15} involves
solving the exact Carath\'{e}odory problem, i.e. writing $u = \sum_x x \cdot
\lambda_x$, interpreting $\lambda$ as a probability distribution over $X$,
sampling $k$ points from $X$ according to $\lambda$ and arguing using
concentration bounds (Khintchine inequality to be precise) that the expectation
$\E \norm{u - \frac{1}{k} \sum_{i=1}^k x_i}_p \leq \epsilon$. From an
algorithmic point of view, this requires: (i) solving a linear program to
compute $\lambda$; (ii) using randomization to sample $x_i$. Our main theorem
shows that neither is necessary. There is a linear time deterministic
algorithms that doesn't require a solution $\lambda$ to the exact Carath\'{e}odory
problem.

Our algorithm is based on Mirror Descent. The idea is to formulate the
Carath\'{e}odory problem as an optimization problem. Inspired by early positive
Linear Programming solvers such as the one of Plotkin, Shmoys and Tardos \cite{PST91}, we
convert this problem to a saddle point problem and then solve the dual using
Mirror Descent. Using Mirror Descent to solve the dual guarantees a sparse
primal certificate that would act as the desired convex combination.

Recall that we are given a finite set of points
$X = \{v_1, v_2, \hdots, v_m\} \subseteq \BB_p(1)$ and $u \in \hull(X)$. Our
goal is to produce a sparse convex combination of the points in $X$ that is
$\epsilon$-close to $u$ in the $\ell_p$-norm. Dropping the sparsity constraint
for now, we can formulate this problem as:
\begin{equation}\label{cara-primal}\tag{\textsc{P-Cara}}
\min_{x \in \Delta} \norm{Vx - u}_p
\end{equation}
where $V$ is a $d \times m$ matrix where the columns are the vectors $v_1,
\hdots, v_m$ and $\Delta = \{x \in \R^d \vert \sum_i x_i = 1, x \geq 0\}$ is the unit
simplex in $d$-dimensions. We refer to \ref{cara-primal} as the primal
Carath\'{e}odory problem. This problem can be converted to a saddle point problem by
noting that we can write the $\ell_p$ norm as $\norm{x}_p = \max_{y: \norm{y}_q
= 1} y^\top x$ for $\frac{1}{p} + \frac{1}{q} = 1$. So we can reformulate the
problem as:

\begin{equation}\label{cara-saddle}\tag{\textsc{S-Cara}}
\min_{x \in \Delta} \max_{y \in \BB_q(1) } y^\top(Vx - u)
\end{equation}

Sion's Theorem \cite{Sion58} is a generalization of Von Neumann's minimax theorem
that allows us to swap the order of minimization and maximization for any pair
of compact convex sets. This leads to dual version of the Carath\'{e}odory problem:

\begin{equation}\label{cara-dual}\tag{\textsc{D-Cara}}
\max_{y \in \BB_q(1) } \left(\tilde{f}(y) := \min_{x \in \Delta} y^\top(Vx - u)\right)
\end{equation}

The function $\tilde{f}$ is concave, since it is expressed as a minimum over linear
functions in $y$ parametrized by $x$. Maximizing a concave function is
equivalent to minimizing a convex function. To keep the minimization
terminology, which is more standard in optimization, we write:

\begin{equation}\label{cara-dual-2}\tag{\textsc{D-Cara'}}
- \min_{y \in \BB_q(1) } \left(f(y) := \max_{x \in \Delta} y^\top(u - Vx)\right)
\end{equation}

\paragraph{Sparse solution by solving the dual.} Since $u \in \hull(X)$, there
is a vector $x \in \Delta$ such that $u = Vx$. Hence, the
optimal solution for \ref{cara-primal} is zero and therefore are the solution of
all equivalent formulations. Even though we know the optimal solution, it makes
sense to optimize $f(y)$ since in the process we can obtain an
$\epsilon$-approximation in a few number of iterations. If each iteration
updates only one coordinate, then we are guaranteed to obtain an approximation
with sparsity equal to the number of iterations. As it will become clear in a
second, while the updates of variable $y$ are not sparse, the dual certificate
produced by Mirror Descent will be sparse.

To make this statement precise, consider the gradient of $f$, which can be
obtained by an application of the envelope theorem: $\nabla f(y) = u-Vx$ for $x
\in \argmax_{x \in \Delta} y^\top(u - Vx)$. This problem corresponds to
maximizing a linear function over the simplex, so the optimal solution is a
corner of the simplex. In other words, $\nabla f(y) = u - v_i$ where $i =
\argmax_i [-(V^\top y)_i]$. Finally, we can use the Mirror Descent guarantee in
Theorem \ref{thm:mirror-descent} to bound the norm of the average gradient. We
make this precise in the proof of the following theorem.

\begin{remark}
In fact $V$ does not even have to be explicitly given. All we need is to solve   $i =
\argmax_i [-(V^\top y)_i]$. When $V$ is explicitly given, this can be done in $dn$ time by picking the best vertex. Sometimes, especially in combinatorial optimization, we have a polytope (whose vertices are $V$) represented by its constraints. Our result states that for these alternate formulations, we can still obtain a sparse representation efficiently if we can solve linear optimization problems over it fast. This observation will be important for our appication to submodular minimization.
\end{remark}

\begin{theorem}\label{cara-mirror}
Consider a $(1/\sigma)$-strongly convex function $\omega: \BB_q(1) \rightarrow \R$
with respect to the $\ell_q$-norm,
$D = \max_{y \in \BB_q(1)} D_\omega(y \Vert 0)$ and $T \geq 8D\sigma /
\epsilon^2$. Let $y_1 = 0, \hdots, y_T$ be the $T$ first iterates of the
Mirror Descent algorithm (Theorem \ref{thm:mirror-descent}) with mirror map
$\nabla \omega^*$ minimizing function $f$ in \ref{cara-dual-2}. If 
$\nabla f(y_t) = u - v_{i(t)}$, then
$$\norm{u - \frac{1}{T} \sum_{t=1}^T v_{i(t)}}_p \leq \epsilon.$$
\end{theorem}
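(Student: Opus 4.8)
The plan is to invoke the Mirror Descent guarantee of \theoremref{thm:mirror-descent} applied to the convex function $f$ of \eqref{cara-dual-2}, with the specified mirror map $\omega$, and then translate the resulting averaged-gradient bound into the desired $\ell_p$ approximation statement. First I would verify that the hypotheses of \theoremref{thm:mirror-descent} are met: the function $f(y) = \max_{x \in \Delta} y^\top(u - Vx)$ is convex (it is a maximum of linear functions of $y$), and I need its Lipschitz constant $\rho$ with respect to $\norm{\cdot}_q$. Since $\nabla f(y) = u - v_{i(t)}$ with both $u$ and $v_{i(t)}$ in $\BB_p(1)$, the triangle inequality gives $\norm{\nabla f(y)}_p \leq 2$, so $\rho \leq 2$ (the dual of $\ell_q$ is $\ell_p$). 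Plugging $\rho = 2$ into the iteration count $T \geq 2D\sigma\rho^2/\epsilon^2$ of \theoremref{thm:mirror-descent} yields exactly the hypothesis $T \geq 8D\sigma/\epsilon^2$ of the present theorem, so \theoremref{thm:mirror-descent} applies with $\eta = \epsilon/(4\sigma)$ and gives
\[
\frac{1}{T}\sum_{t=1}^T \nabla f(y_t)^\top(y_t - y) \leq \epsilon \qquad \forall y \in \BB_q(1).
\]

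The heart of the argument is to convert this inequality into a norm bound. Write $g := u - \frac{1}{T}\sum_{t=1}^T v_{i(t)} = \frac{1}{T}\sum_t \nabla f(y_t)$, the average gradient. I would exploit two facts. First, since the optimal value of \eqref{cara-primal} is $0$ (because $u \in \hull(X)$), every iterate satisfies $f(y_t) \leq \min_y f(y) \cdot(\text{sign})$... more precisely, since $u = Vx^\star$ for some $x^\star \in \Delta$, we have $f(y) = \max_{x\in\Delta} y^\top(u - Vx) \geq y^\top(u - Vx^\star) = 0$ for all $y$, and also by weak duality the saddle value is $0$, so $\min_y f(y) = 0$; hence $f(y_t) \geq 0$, and more usefully $\nabla f(y_t)^\top y_t \geq f(y_t) - f(0) + \nabla f(y_t)^\top \cdot 0$... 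Let me instead use the cleaner route: by convexity, $\nabla f(y_t)^\top(y_t - y) \geq f(y_t) - f(y) \geq -f(y)$ since $f(y_t)\geq 0$. Taking the average, $-f(y) \leq \frac{1}{T}\sum_t \nabla f(y_t)^\top(y_t-y) \leq \epsilon$ for all $y \in \BB_q(1)$, i.e. $f(y) \geq -\epsilon$; this is not yet what we want. The right move is to instead lower-bound $\frac{1}{T}\sum_t \nabla f(y_t)^\top(y_t - y)$ from below by $-\frac{1}{T}\sum_t \nabla f(y_t)^\top y = -g^\top y$ after discarding the nonnegative terms $\frac{1}{T}\sum_t \nabla f(y_t)^\top y_t$: here I claim $\nabla f(y_t)^\top y_t \geq 0$, which follows because $f(y_t) = \nabla f(y_t)^\top y_t$ (the function $y \mapsto \max_x y^\top(u-Vx)$ is positively homogeneous, so Euler's identity gives $f(y) = \nabla f(y)^\top y$) and $f(y_t)\geq 0$. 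Therefore $-g^\top y \leq \epsilon$ for every $y \in \BB_q(1)$, and taking the supremum over such $y$ gives $\norm{g}_p = \norm{-g}_p = \max_{\norm{y}_q \leq 1} (-g)^\top y \leq \epsilon$, which is the claim.

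The main obstacle is the middle bookkeeping step — correctly matching the Lipschitz constant $\rho = 2$ so that the $8D\sigma/\epsilon^2$ bound comes out exactly, and recognizing that positive homogeneity of $f$ (equivalently, Euler's identity $\nabla f(y)^\top y = f(y)$) together with nonnegativity of $f$ is what lets one drop the $\sum_t \nabla f(y_t)^\top y_t$ terms and pass from the averaged-regret bound to a genuine dual-norm bound on the average gradient. Once that observation is in place, the conclusion $\norm{u - \frac1T\sum_t v_{i(t)}}_p \leq \epsilon$ is immediate from the definition of the $\ell_q$ dual norm.
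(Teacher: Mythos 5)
Your proposal is correct and matches the paper's proof in all essential respects: bounding the Lipschitz constant $\rho = 2$ via $\norm{u - v_{i(t)}}_p \leq 2$, invoking \theoremref{thm:mirror-descent} to get the averaged-regret bound, observing that $\nabla f(y_t)^\top y_t = f(y_t) \geq 0$ (the paper derives $f(y) = \nabla f(y)^\top y$ from the envelope theorem, which is equivalent to your Euler-identity argument for a support function), discarding those nonnegative terms, and taking the supremum over $y \in \BB_q(1)$ to convert the resulting bound on $-g^\top y$ into the dual-norm bound $\norm{g}_p \leq \epsilon$. The brief detour in the middle where you first try and then abandon the bound $\nabla f(y_t)^\top(y_t - y) \geq -f(y)$ is harmless, and the final argument is exactly the paper's.
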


\begin{proof}
We consider the space $y \in \BB_q(1)$ equipped with the $\ell_q$ norm. To apply
the Mirror Descent framework, we need first to show that the dual norm (the
$\ell_p$-norm, in this case) of the gradient is bounded. This is easy, since in
the approximate Carath\'{e}odory problem, $v_i \in \BB_p(1)$, so $\norm{\nabla
f(y)}_p = \norm{u - v_i}_p \leq \norm{u}_p + \norm{v_i}_p \leq 2$. So we can
take $\rho = 2$ in Theorem \ref{thm:mirror-descent}.

Since $f(y) = \max_{x \in \Delta} y^\top (u - Vx)$ and $\nabla f(y) =
(u-Vx)$ for $x \in \argmax_{x \in \Delta} y^\top(u-Vx)$, then $f(y) = \nabla
f(y)^\top y$. Also, since $ f(y)$ can be written as $\norm{u - Vx}_\infty$,
clearly $f(y) \geq 0$ for all $y$. Plugging those two facts in the guarantee of
Theorem \ref{thm:mirror-descent}, we get:

$$\epsilon \geq \frac{1}{T} \sum_{t=1}^T \nabla f(y_t)^\top (y_t - y) =
\frac{1}{T} \sum_{t=1}^T [f(y_t) - \nabla f(y_t)^\top y] \geq
\left[-\frac{1}{T} \sum_{t=1}^T   \nabla f(y_t) \right]^\top y, \forall y \in
\BB_q(1)$$

Taking the maximum over all $y \in \BB_q(1)$ we get:
$$ \norm{u - \frac{1}{T} \sum_{t=1}^T v_{i(t)}}_p = \norm{\frac{1}{T}
\sum_{t=1}^T \nabla f(y_t) }_p = \max_{y \in \BB_q(1) } \left[-\frac{1}{T}
\sum_{t=1}^T   \nabla f(y_t) \right]^\top y \leq \epsilon $$
\end{proof}

To complete the picture, we need to provide a function a $(1/\sigma)$-strongly
convex function $\omega : \BB_q(1) \rightarrow \R$ with a small value of $\sigma
\cdot \max_{y \in \BB_q(1)} D_\omega(y \Vert 0)$.

\begin{proposition}\label{prop:mirror-map}
For $1<q\leq 2$, the function
$\omega: \BB_q(1) \rightarrow \R$, $\omega(y) = \frac{1}{2} \norm{y}_q^2$
is $(q-1)$-strongly convex with respect to the
$\ell_q$ norm and $\max_{y \in \BB_q(1)} D_\omega(y \Vert 0) = \frac{1}{2}$.
\end{proposition}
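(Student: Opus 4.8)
The plan is to verify the two claimed quantities separately, since they are logically independent. For the strong convexity bound, the cleanest route is to reduce to the known statement that $\frac{1}{2}\norm{y}_q^2$ is $(q-1)$-strongly convex with respect to $\ell_q$ on all of $\R^d$ for $1<q\leq 2$; restricting the domain to the convex set $\BB_q(1)$ can only preserve strong convexity, since the defining inequality $\omega(y)-\omega(x)-g^\top(y-x)\geq \tfrac{1}{2}(q-1)\norm{y-x}_q^2$ is inherited verbatim when we quantify over a smaller set of pairs $x,y$. So the real content is the global statement. I would prove it by the Hessian criterion from the preliminaries: compute $\nabla^2\omega(y)$ and show $w^\top\nabla^2\omega(y)\,w\geq (q-1)\norm{w}_q^2$ for all $w$. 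Writing $\omega(y)=\tfrac12\left(\sum_i|y_i|^q\right)^{2/q}$, a direct differentiation (two applications of the chain and product rules) gives an expression for the Hessian; the inequality $w^\top\nabla^2\omega(y)\,w\geq (q-1)\norm{w}_q^2$ then follows from Hölder's inequality applied with the conjugate exponents $\tfrac{2}{q}$ and $\tfrac{2}{2-q}$ to handle the cross term. This is the standard computation behind the $\ell_q$-square mirror map (referenced in the paper as appearing in Ben-Tal--Nemirovski), so I would cite that and present only the Hessian identity and the one Hölder step.

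For the diameter bound, I want $\max_{y\in\BB_q(1)} D_\omega(y\Vert 0)=\tfrac12$. Since $\omega(0)=0$ and $\nabla\omega(0)=0$ (the gradient of $\tfrac12\norm{\cdot}_q^2$ vanishes at the origin for $q>1$), the Bregman divergence from the origin collapses to $D_\omega(y\Vert 0)=\omega(y)-\omega(0)-\nabla\omega(0)^\top(y-0)=\omega(y)=\tfrac12\norm{y}_q^2$. Maximizing $\tfrac12\norm{y}_q^2$ over $y\in\BB_q(1)=\{\norm{y}_q\leq 1\}$ obviously attains its maximum value $\tfrac12$ on the boundary. The only point needing a word of justification is that $\nabla\omega(0)=0$; this follows because $\norm{y}_q^2$ is differentiable at $0$ with zero gradient when $q>1$ (the function $t\mapsto t^{2/q}$ has derivative $0$ at $t=0$ since $2/q>1$), which also implicitly requires $q>1$, matching the hypothesis.

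I do not anticipate a genuine obstacle here; the statement is a self-contained technical lemma. The one place to be slightly careful is the smoothness/differentiability of $\omega$ at points with zero coordinates and at the origin — strictly speaking $\norm{y}_q^q$ is not twice differentiable where some $y_i=0$ when $q<2$, so the Hessian argument should be run on the open region where all coordinates are nonzero and then extended to $\BB_q(1)$ by continuity/density (strong convexity is a closed condition under limits of the defining inequality). I would handle this with a single remark rather than belaboring it, and otherwise lean on the cited references for the global $(q-1)$-strong convexity of the $\ell_q$-square, since re-deriving it in full is exactly the routine calculation the plan is meant to avoid.
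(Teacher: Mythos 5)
Your proposal is correct and takes essentially the same route as the paper: the heart of both arguments is the Hessian bound $w^\top \nabla^2 \omega(y)\, w \geq (q-1)\norm{w}_q^2$ obtained from H\"{o}lder's inequality with conjugate exponents $\tfrac{2}{2-q}$ and $\tfrac{2}{q}$, and the diameter claim reduces to $D_\omega(y \Vert 0) = \tfrac{1}{2}\norm{y}_q^2 \leq \tfrac{1}{2}$ (which you spell out and the paper leaves implicit). One caution: your assertion that restricting to $\BB_q(1)$ preserves strong convexity because the inequality is ``inherited verbatim over fewer pairs'' overlooks that at boundary points the subdifferential of the restricted function is strictly larger than $\{\nabla\omega(x)\}$, and the definition quantifies over all subgradients; the paper devotes a paragraph to the (easy) normal-cone observation that any such $g$ satisfies $g^\top(y-x) \leq \nabla\omega(x)^\top(y-x)$ for $y \in \BB_q(1)$, after which the Hessian bound suffices. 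This is a one-line fix, not a flaw in the approach.
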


\begin{proof}
We want to bound $\omega(y) - \omega(x) - g^\top (y-x)$ for all $g \in \partial
\omega(x)$. For all $x$ in the interior of the ball $\BB_q(1)$ there is a unique
subgradient which we represent by $\nabla \omega(x)$. In the border of
$\BB_q(1)$, however, there are multiple subgradients. First we claim that we
need only to bound $\omega(y) - \omega(x) - \nabla \omega(x)^\top (y-x)$ where
$\nabla \omega(x)$ denotes the gradient of the function $\frac{1}{2}
\norm{y}_q^2$. In order to see that, notice that if $g$ is a subgradient in a
point $x$ and $y \in \BB_q(1)$ then:
$$\omega(x+t(y-x)) - \omega(x) - g^\top (y-x) \geq 0$$
by the definition of subgradient. Dividing the expression by $t$ and taking the
limit when $t \rightarrow 0+$, we get: $\nabla \omega(x)^\top (y-x) \geq g^\top
(y-x)$, so in particular: $\omega(y) - \omega(x) - g^\top (y-x) \geq \omega(y)
- \omega(x) - \nabla \omega(x)^\top (y-x) $. 

This observation allows us to bound the strong convexity parameter of
$\omega$ by looking at the $\norm{\cdot}_q$-eigenvalues of the Hessian of $\omega$.  
In particular, we will
show that for all $w \in \R^d$, $w^\top \nabla^2 \omega(y) w \geq (q-1)
\norm{w}_q^2$.

To make the notation simpler, we define $\pow: \R^d \times \R \rightarrow \R^d$
as $\pow(y,p) = (\abs{y_i}^p \cdot \sgn{y_i})_i$. This allows us to represent
$\nabla \norm{y}_q$ in a succinct form: since $$\partial_i \norm{y}_q =
\frac{1}{q} (\norm{y}_q^q)^{\frac{1}{q}-1} x_i^q q \cdot \sgn{x_i} = \norm{y}_q^{1-q}
x_i^{q-1} \sgn{y_i}$$
so we can write $\nabla \norm{y}_q = \norm{y}_q^{1-q} \cdot \pow(y,q-1)$.
Therefore:
$$\nabla \omega(y) = \nabla \left[ \frac{1}{2}\norm{y}_q^2 \right] =
\norm{y}_q \cdot \nabla \norm{y}_q = \norm{y}_q^{2-q} \cdot \pow(y,q-1)$$
Now, to compute the Hessian, we have:
$$\nabla^2 \omega(y) = (2-q) \norm{y}_q^{2-2q} \cdot \pow(y,q-1)
\pow(y,q-1)^\top + (q-1) \norm{y}^{2-q}_q \diag(\abs{y_i}^{q-2})$$
where $\diag(\abs{y_i}^{q-2})$ is the diagonal matrix with $x_i^{q-2}$ in the
diagonal. Using the fact that $1 < q \leq 2$, we can write:
$$\begin{aligned}
w^\top \nabla^2 \omega(y) w & = (2-q) \cdot \norm{y}_q^{2-q}
[\pow(y,q-1)^\top w]^2 + (q-1) \cdot \norm{y}_q^{2-q} \sum_i \abs{y}_i^{q-2} w_i^2 \\
& \geq (q-1) \left( \sum_i \abs{y}_i^q \right)^{\frac{2-q}{q}} \cdot 
\left( \sum_i \abs{y}_i^{q-2}  w_i^2\right) \\ & = (q-1) \left[\left( \sum_i
\abs{y_i}^{\frac{q(2-q)}{2} \cdot \frac{2}{2-q}} \right)^{\frac{2-q}{2}} \cdot
\left( \sum_i (\abs{y_i}^{\frac{q(q-2)}{2}}  w_i^q)^{\frac{2}{q}}
\right)^{\frac{q}{2}} \right]^{\frac{2}{q}}
\end{aligned}$$

The last equality is a convoluted re-writing of the previous expression, but
allows us to apply H\"{o}lder's inequality. Recall that H\"{o}lder's inequality states
that $\norm{z_1}_a \cdot \norm{z_2}_b \geq z_1^\top z_2$ whenever $\frac{1}{a} +
\frac{1}{b} = 1$. Applying this inequality with $a = \frac{2}{2-q}$ and $b =
\frac{2}{q}$, we get:
$$w^\top \nabla^2 \omega(y) w \geq (q-1) \cdot \left( \sum_i
\abs{y_i}^{\frac{q(2-q)}{2}} \cdot \abs{y_i}^{\frac{q(q-2)}{2}} w_i^q
\right)^{\frac{2}{q}} = (q-1) \cdot \left( \sum_i w_i^q \right)^{\frac{2}{q}} =
(q-1) \cdot \norm{w}_q^2$$
\end{proof}

Finally, we need to show how to compute the Fenchel dual $\omega^*$ and the 
mirror map $\nabla \omega^*$ efficiently:

\begin{proposition}\label{prop:fenchel_dual_computation}
The Fenchel dual of the function $\omega$ defined in Proposition
\ref{prop:mirror-map} can be computed explicitly:
$$\omega^*(z) =
        \begin{cases} 
          \frac{1}{2} \norm{z}_p^2 & \textnormal{if } \norm{z}_p \leq 1 \\
          \norm{z}_p - \frac{1}{2} & \textnormal{if } \norm{z}_p > 1
        \end{cases}$$
Also, $\nabla \omega^*(z) = \phi(z) \cdot \min(1, \norm{z}_p)$ where $\phi(z)$ is
a vector with $\ell_q$-norm $1$ such that $z^\top \phi(z) = \norm{z}_p$. This
function can be explicitly computed as:
$\phi(z)_i = \sgn{z_i} \cdot \abs{z_i}^{p-1} / \norm{z}_p^{p-1}$.
\end{proposition}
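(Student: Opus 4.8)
\emph{Proof proposal.} The plan is to exploit the fact that $\omega$ is a \emph{radial} function. Writing $\psi(t) = \tfrac12 t^2$ for $0 \le t \le 1$ and $\psi(t) = +\infty$ for $t > 1$, we have $\omega(y) = \psi(\norm{y}_q)$ under the Fenchel-transform convention that $\omega(y) = +\infty$ for $y \notin \BB_q(1)$. I would then split the supremum defining $\omega^*$ into a radial part and a spherical part:
\[
\omega^*(z) = \sup_{y}\left[ z^\top y - \psi(\norm{y}_q)\right] = \sup_{t \ge 0}\left[ \Big(\sup_{\norm{y}_q = t} z^\top y\Big) - \psi(t)\right] = \sup_{t \ge 0}\left[ t\,\norm{z}_p - \psi(t)\right],
\]
where the interchange is valid because $\psi$ depends on $y$ only through $\norm{y}_q$, and the last step uses that the $\ell_p$-norm is dual to the $\ell_q$-norm, so $\sup_{\norm{y}_q = t} z^\top y = t\,\norm{z}_p$, attained at $y = t\,\phi(z)$ for any fixed unit-$\ell_q$-norm vector $\phi(z)$ with $z^\top\phi(z) = \norm{z}_p$. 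Hence $\omega^* = \psi^* \circ \norm{\cdot}_p$, and it remains only to compute the one-dimensional conjugate $\psi^*(s) = \sup_{0 \le t \le 1}\big( ts - \tfrac12 t^2\big)$ for $s \ge 0$.

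Second, I would evaluate $\psi^*$ by elementary calculus. The unconstrained maximizer of $ts - \tfrac12 t^2$ is $t = s$, which is feasible precisely when $s \le 1$; in that regime $\psi^*(s) = \tfrac12 s^2$ with maximizer $t^\star = s$, whereas for $s > 1$ the objective is strictly increasing on $[0,1]$, so $t^\star = 1$ and $\psi^*(s) = s - \tfrac12$. Substituting $s = \norm{z}_p$ gives exactly the claimed piecewise formula for $\omega^*(z)$.

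Third, for the mirror map I would use differentiability: by Proposition~\ref{prop:mirror-map} the function $\omega$ is $(q-1)$-strongly convex on $\BB_q(1)$, hence strictly convex, so the maximizer in $\omega^*(z) = \max_y z^\top y - \omega(y)$ is unique and equals $\nabla\omega^*(z)$ by the envelope theorem. Combining the radial and spherical optimizations above, the optimal radius is $t^\star = \min(1,\norm{z}_p)$ and the optimal direction is $\phi(z)$, so $\nabla\omega^*(z) = \min(1,\norm{z}_p)\cdot\phi(z)$. Finally I would verify the closed form $\phi(z)_i = \sgn{z_i}\cdot\abs{z_i}^{p-1}/\norm{z}_p^{p-1}$ directly: since $\tfrac1p + \tfrac1q = 1$ gives $(p-1)q = p$, one gets $\norm{\phi(z)}_q^q = \sum_i \abs{z_i}^{(p-1)q}/\norm{z}_p^{(p-1)q} = \sum_i \abs{z_i}^p/\norm{z}_p^p = 1$ and $z^\top\phi(z) = \sum_i \abs{z_i}^p/\norm{z}_p^{p-1} = \norm{z}_p$, as required; the case $z = 0$ is immediate since $\nabla\omega^*(0) = \argmin_y \omega(y) = 0$.

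The computation is essentially routine; the only points that need care are (i) the interchange of suprema, which is justified above, and (ii) checking that the two branches of $\omega^*$ glue into a $C^1$ function at $\norm{z}_p = 1$ — both the values ($\tfrac12$) and the gradients ($\phi(z)$) match there — so that the identity $\nabla\omega^* = \argmax$ holds globally and Theorem~\ref{thm:mirror-descent} applies with this mirror map. I do not anticipate a genuine obstacle.
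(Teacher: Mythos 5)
Your proof is correct and takes essentially the same route as the paper: both decompose the supremum over $\BB_q(1)$ into a spherical part (yielding $\norm{z}_p$ by $\ell_p$--$\ell_q$ duality) and a one-dimensional radial part over $\lambda \in [0,1]$ (your $\psi^*$), then read off the maximizer via the envelope theorem and verify the closed form of $\phi$. Your added remark on $C^1$ gluing at $\norm{z}_p = 1$ is a nice sanity check but not a new idea.
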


\begin{proof}
By the definition of Fenchel duality:
$$\omega^*(z) = \max_{y \in \BB_q(1)} z^\top y - \frac{1}{2}\norm{y}_q^2 =
\max_{0 \leq \lambda \leq 1} \left[ \max_{\hat{y}; \norm{\hat{y}}_q = 1} \lambda
z^\top \hat{y} - \frac{1}{2}\lambda^2 \right] = \max_{0 \leq \lambda \leq 1}
\lambda \norm{z}_p - \frac{1}{2} \lambda^2$$
where the second equality follows from writting $y = \lambda \hat{y}$ for
$0 \leq \lambda \leq 1$ and $\norm{\hat{y}}_p = 1$. The optimal value of
$\hat{y}$ is $\phi(z)$. The expression $\lambda \norm{z}_p - \frac{1}{2}
\lambda^2$ is maximized at $\lambda = \norm{z}_p$. Since $\lambda$ is restricted
to lie between $0$ and $1$, the optimal $\lambda$ must be $\min(1, \norm{z}_p)$.

If $\norm{z}_p \leq 1$, $\lambda = \norm{z}_p$ and $\omega^*(z) = \frac{1}{2}
\norm{z}_p^2$. If $\norm{z}_p > 1$, then $\lambda = 1$ and $\omega^*(z) =
\norm{z}_p - \frac{1}{2}$.

By the envelope theorem, $\nabla \omega^*(z) = \hat{y} \cdot \lambda =  \phi(z)
\cdot \min(1, \norm{z}_p)$. Now, it simple to check that $\phi$ has the desired
properties:
$$\norm{\phi(z)}_q^q = \sum_i \abs{z_i}^{q(p-1)} / \norm{z}_p^{q(p-1)} = \sum_i
\abs{z_i}^p / \norm{z}_p^p = 1$$
$$z^\top \phi(z) = \sum_i z_i^p / \norm{z}_p^{p-1} = \norm{z}_p$$
\end{proof}

Combining the previous results, we obtain:

\begin{theorem}\label{thm:cara-mirror-main}
Given $n$ points $v_1, \hdots, v_n \in \BB_p(1) \subseteq \R^d$ with $p \geq 2$
and  $u \in \hull\{v_1, \hdots, v_n\}$, there is a deterministic algorithm
of running time $O(nd \cdot p/\epsilon^2)$ that a outputs a multiset 
$v_{i(1)}, \hdots, v_{i(k)}$ for $k = 4(p-1)/\epsilon^2$ 
such that $u' = \frac{1}{k} \sum_{t=1}^k v_{i(t)}$ and $\norm{u'-u}_p \leq
\epsilon$.
\end{theorem}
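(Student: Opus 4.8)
The plan is to assemble the three preceding results — Theorem \ref{cara-mirror}, Proposition \ref{prop:mirror-map}, and Proposition \ref{prop:fenchel_dual_computation} — into a concrete running-time bound. First I would instantiate the mirror map: take $\omega(y) = \tfrac12\norm{y}_q^2$ on $\BB_q(1)$ where $q$ is the conjugate exponent of $p$, so that $1 < q \le 2$ since $p \ge 2$. By Proposition \ref{prop:mirror-map}, this $\omega$ is $(q-1)$-strongly convex with respect to $\ell_q$, i.e. $1/\sigma = q-1$, hence $\sigma = 1/(q-1)$; and $D = \max_{y \in \BB_q(1)} D_\omega(y\Vert 0) = \tfrac12$. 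Plugging into Theorem \ref{cara-mirror}, the iteration count is $T \ge 8D\sigma/\epsilon^2 = 4/((q-1)\epsilon^2)$.

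The second step is to convert this into the stated sparsity $k = 4(p-1)/\epsilon^2$. Here I would use the identity $\tfrac1p + \tfrac1q = 1$, which gives $q - 1 = \tfrac{1}{p-1}$, so $\tfrac{4}{(q-1)\epsilon^2} = \tfrac{4(p-1)}{\epsilon^2}$; thus it suffices to run $k = 4(p-1)/\epsilon^2$ iterations. Theorem \ref{cara-mirror} then directly yields a multiset $v_{i(1)},\dots,v_{i(k)}$ with $\norm{u - \tfrac1k\sum_{t=1}^k v_{i(t)}}_p \le \epsilon$; setting $u' = \tfrac1k \sum_{t=1}^k v_{i(t)}$ gives the approximation guarantee, and since each $v_{i(t)}$ is one of the input vertices, $u'$ is a convex combination (with uniform weights $1/k$) supported on at most $k$ of them.

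The third step is to bound the running time. Each Mirror Descent iteration does three things: (i) compute $\nabla f(y_t) = u - v_{i(t)}$ where $i(t) = \argmax_i [-(V^\top y_t)_i]$, which by the remark following Theorem \ref{cara-mirror} costs $O(nd)$ when $V$ is given explicitly (form $V^\top y_t$ and take the max); (ii) update $z_{t+1} = z_t - \eta\,\nabla f(y_t)$ in $O(d)$ time; (iii) apply the mirror map $y_{t+1} = \nabla\omega^*(z_{t+1})$, which by Proposition \ref{prop:fenchel_dual_computation} is $\phi(z_{t+1})\cdot\min(1,\norm{z_{t+1}}_p)$ and is computable coordinatewise in $O(d)$ time. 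So each iteration is $O(nd)$, and $k = O(p/\epsilon^2)$ iterations give total time $O(nd\cdot p/\epsilon^2)$. I would also note that maintaining the running sum $\sum_{t\le t'} v_{i(t)}$ incrementally costs only $O(d)$ per step and does not affect the bound.

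I do not expect any real obstacle: this theorem is a bookkeeping corollary, and the only mild subtlety is keeping the conjugate-exponent arithmetic straight (that $q - 1 = 1/(p-1)$, turning the $1/(q-1)$ from strong convexity into the advertised factor $p-1$) and confirming that every per-iteration operation, including the mirror-map evaluation, is linear in the size $nd$ of the input. One should also remark that the analysis implicitly requires $\nabla\omega^*$ to be well-defined on all iterates $z_t$, which it is since Proposition \ref{prop:fenchel_dual_computation} gives a closed form valid for every $z \in \R^d$.
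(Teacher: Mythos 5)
Your proposal is correct and follows essentially the same route as the paper's proof: substitute $D = 1/2$ and $\sigma = (q-1)^{-1} = p-1$ from Proposition~\ref{prop:mirror-map} into Theorem~\ref{cara-mirror} to get $T = 4(p-1)/\epsilon^2$ iterations, and observe that each iteration is dominated by the $O(nd)$ computation of $V^\top y$. Your write-up is merely more explicit than the paper's about the conjugate-exponent arithmetic and the cost of the mirror-map evaluation.
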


\begin{proof}
The number of iterations of Mirror Descent (and consequently the sparsity bound)
$T = 4p / \epsilon^2$ can be obtained by substituting $D = 1/2$ and
$\sigma^{-1} = (q-1)^{-1} = p-1$ from Proposition \ref{prop:mirror-map}
in Theorem \ref{cara-mirror}.

For the running time, notice that the time per iteration is dominated by the
computation of the subgradient of $f$. The most expensive step is the
computation of $V^\top y$ which takes $dn$ operations, which is the size of
matrix $V$.

\end{proof}

\subsection{Improved bound when $u$ is far from the boundary}
\begin{theorem}\label{thm:improved_bd}
Let $P$ be a polytope contained inside the unit $\ell_{p}$ ball,
and a point $u\in P$. If $\BB_{p}(r)\subseteq P$, then there exists
$x\in2(1-\epsilon/r)\cdot\Delta$ supported at $k=O\left(\frac{p}{r^{2}}\cdot\log\frac{r}{\epsilon}\right)$ coordinates such
that $\left\Vert \sum_{i\in\textnormal{supp}(x)} x_{i}v_{i}-u\right\Vert _{p}\leq\epsilon$.
\end{theorem}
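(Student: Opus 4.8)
The plan is to bootstrap the basic approximate Carath\'eodory theorem (Theorem~\ref{thm:cara-mirror-main}) into a geometrically convergent refinement scheme by exploiting the inscribed ball $\BB_p(r)\subseteq P$. I will build a sequence $u_0,u_1,\dots,u_T$, each a nonnegative combination of the vertices $v_1,\dots,v_n$ with controlled support and coefficient sum, so that the residual $\norm{u-u_t}_p$ contracts by a constant factor at every step. After $T=O(\log(r/\epsilon))$ steps the residual drops below $\epsilon$, and since each step adds only $O(p/r^2)$ new vertices the final support is $O\left(\frac{p}{r^{2}}\log\frac{r}{\epsilon}\right)$, as claimed. (Throughout I assume $\epsilon<r$; otherwise the statement is vacuous.)

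For the initialization, apply Theorem~\ref{thm:cara-mirror-main} to $u\in P\subseteq\BB_p(1)$ with error parameter $r/2$ to obtain a convex combination $u_0$ of $O(p/r^2)$ vertices with $\norm{u-u_0}_p\le r/2=:\delta_0$. For the refinement step, suppose $u_t$ is a nonnegative vertex combination with coefficient sum $s_t$ and $\norm{u-u_t}_p\le\delta_t$. The key observation is that the rescaled residual $q_t:=\frac{r}{\delta_t}(u-u_t)$ has $\norm{q_t}_p\le r$, hence $q_t\in\BB_p(r)\subseteq P=\hull\{v_1,\dots,v_n\}$; therefore Theorem~\ref{thm:cara-mirror-main} applies to $q_t$ as well and, run with error parameter $r/2$, produces a convex combination $w_t$ of $O(p/r^2)$ vertices with $\norm{q_t-w_t}_p\le r/2$. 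Set $u_{t+1}:=u_t+\frac{\delta_t}{r}w_t$. Then $u-u_{t+1}=\frac{\delta_t}{r}(q_t-w_t)$, so $\norm{u-u_{t+1}}_p\le\delta_t/2=:\delta_{t+1}$; the support of $u_{t+1}$ exceeds that of $u_t$ by at most $O(p/r^2)$, and its coefficient sum is $s_{t+1}=s_t+\delta_t/r$.

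Iterating gives $\delta_t=(r/2)2^{-t}$, so $T=O(\log(r/\epsilon))$ steps bring $\delta_T$ below $\epsilon$; the total support is $(T+1)\cdot O(p/r^2)=O\left(\frac{p}{r^{2}}\log\frac{r}{\epsilon}\right)$, and the coefficient sum is $s_T=1+\sum_{t=0}^{T-1}\delta_t/r=1+\sum_{t=0}^{T-1}2^{-t-1}=2-2^{-T}\le 2$. Writing $u_T=\sum_i x_i v_i$, the vector $x\ge 0$ is supported on the claimed number of coordinates and satisfies $\norm{\sum_i x_i v_i-u}_p=\norm{u_T-u}_p\le\epsilon$. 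Finally, running the last refinement step with a tuned error parameter in $[r/2,r)$ instead of a plain halving, so that $\delta_T=\epsilon$ exactly, makes $2^{-T}=2\epsilon/r$ and hence $s_T=2-2\epsilon/r$, i.e.\ $x\in 2(1-\epsilon/r)\Delta$; the support bound is unaffected since the tuned parameter stays $\ge r/2$.

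I expect the only real content to be the observation in the second paragraph: that $\frac{r}{\delta_t}(u-u_t)$ lands back inside $P$, and can therefore be re-expressed up to error $r/2$ by $O(p/r^2)$ vertices of $P$ at a cost that does not grow as $\delta_t\to0$. Once that is in place the rest is a geometric series. The only delicate point is the last-step accounting needed to pin the coefficient sum to exactly $2(1-\epsilon/r)$ rather than merely to $2$, which forces the non-dyadic contraction factor on the final step; since that factor stays between $r/2$ and $r$ it costs only another $O(p/r^2)$ vertices and does not affect the asymptotic bound.
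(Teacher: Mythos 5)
Your proposal is correct and follows essentially the same route as the paper: both bootstrap Theorem~\ref{thm:cara-mirror-main} at precision $r/2$, rescale the residual back into $\BB_p(r)\subseteq P$ at each step (your $q_t$ is the paper's $e_i$ up to indexing), and sum the resulting geometric series of weights to get support $O\bigl(\tfrac{p}{r^2}\log\tfrac{r}{\epsilon}\bigr)$ and coefficient mass $2(1-\epsilon/r)$.
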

\begin{proof}
Let $\lit{ApproxCara}(u)$ represent the convex combination $x$ of vertices
of $P$ returned by the algorithm from Theorem~\ref{thm:cara-mirror-main}, when receiving as
input the vertices of $P$ and the point $u$, and solving for precision
$r/2$. Then let $e_{0}=u$, $x_{i}=\lit{ApproxCara}(e_{i-1})$, $e_{i}=2(e_{i-1}-Vx_{i})$,
for $i\in\left\{ 1,\dots,\beta\right\}$ where $\beta=\log(r/\epsilon)$.
Note that $\Vert e_{i}\Vert_{p}=2\Vert e_{i-1}-Vx_{i}\Vert_{p}\leq2\cdot\frac{r}{2}\leq r$,
hence $e_{i}\in P$, so the input to $\lit{ApproxCara}$ is always well
defined. Let $\overline{x}=\sum_{i=1}^{\beta}\frac{1}{2^{i-1}}\cdot x_{i}\in\left(\sum_{i=1}^{\beta}2^{-(i-1)}\right)\Delta=\frac{2^{-\beta}-1}{2^{-1}-1}\cdot\Delta=2(1-\epsilon/r)\Delta$.

Let us bound the error when approximating $u$ with $V\overline{x}$:
\begin{align*}
\Vert V\overline{x}-u\Vert_{p} & =\left\Vert V\left(\sum_{i=1}^{\beta}\frac{1}{2^{i-1}}\cdot x_{i}\right)-u\right\Vert _{p}
  =\left\Vert \sum_{i=1}^{\beta}\frac{1}{2^{i-1}}\cdot Vx_{i}-e_{0}\right\Vert _{p}
  =\left\Vert \sum_{i=2}^{\beta}\frac{1}{2^{i-1}}Vx_{i}+Vx_{1}-e_{0}\right\Vert _{p}\\
 & =\left\Vert \sum_{i=2}^{\beta}\frac{1}{2^{i-1}}Vx_{i}-\frac{1}{2}e_{1}\right\Vert _{p}
  =\frac{1}{2}\left\Vert \sum_{i=2}^{\beta}\frac{1}{2^{i-2}}Vx_{i}-e_{1}\right\Vert _{p}
  =\dots\\
 & =\frac{1}{2^{\beta-1}}\left\Vert \sum_{i=\beta}^{\beta}\frac{1}{2^{i-\beta}}Vx_{i}-e_{\beta-1}\right\Vert _{p}
  =\frac{1}{2^{\beta-1}}\left\Vert Vx_{\beta}-e_{\beta-1}\right\Vert _{p}
  \leq\frac{1}{2^{\beta-1}}\cdot\frac{r}{2}\\
 & =r/2^{\beta} =\epsilon
\end{align*}

Each of the $\beta=\log(r/\epsilon)$ iterations requires a call to
$\lit{ApproxCara}$ for precision $r/2$, which produces a solution with
sparsity $O(p/r^{2})$. Hence $\overline{x}$ will have $O\left(\frac{p}{r^{2}}\log\frac{r}{\epsilon}\right)$ nonzero
coordinates.
\end{proof}

\begin{corollary}
If $u\in P$ satisfies $\BB_p(u,r)\subseteq P \subseteq \BB_p(u,1)$, $r \geq 2\epsilon$, then there exists $x\in \Delta$ supported at $k=O\left(\frac{p}{r^{2}}\cdot\log\frac{r}{\epsilon}\right)$ coordinates such that $\left\Vert \sum_{i\in\textnormal{supp}(x)} x_{i}v_{i}-u\right\Vert _{p}\leq\epsilon$.
\end{corollary}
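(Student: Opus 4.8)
The plan is to reduce the corollary to \theoremref{thm:improved_bd} by translating coordinates. The statement of \theoremref{thm:improved_bd} is phrased for a polytope $P$ inside the unit $\ell_p$ ball (centered at the origin) containing $\BB_p(r)$ (also centered at the origin), and it produces a combination $x \in 2(1-\epsilon/r)\cdot\Delta$. The corollary instead assumes $\BB_p(u,r)\subseteq P \subseteq \BB_p(u,1)$, i.e. the ball conditions are centered at $u$, not the origin, and it wants an honest convex combination $x \in \Delta$. So the work is (i) recentering, and (ii) converting the ``scaled simplex'' solution into a genuine convex combination.

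First I would recenter: let $P' = P - u = \{v - u : v \in P\}$ and $u' = 0$. Then $\BB_p(u,r)\subseteq P$ becomes $\BB_p(r) \subseteq P'$, and $P \subseteq \BB_p(u,1)$ becomes $P' \subseteq \BB_p(1)$, so $P'$ and $u'=0$ satisfy the hypotheses of \theoremref{thm:improved_bd}. Applying that theorem to $P'$ and $u'$ gives a vector $\tilde x$ supported on $k = O\!\left(\frac{p}{r^2}\log\frac{r}{\epsilon}\right)$ coordinates with $\tilde x \in 2(1-\epsilon/r)\Delta$ and $\left\Vert \sum_{i} \tilde x_i v_i' - u'\right\Vert_p \leq \epsilon$, where $v_i' = v_i - u$ are the vertices of $P'$ (equivalently vertices $v_i$ of $P$ shifted by $u$). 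Expanding, $\sum_i \tilde x_i (v_i - u) = \sum_i \tilde x_i v_i - \left(\sum_i \tilde x_i\right) u$, so the error bound reads $\left\Vert \sum_i \tilde x_i v_i - s\, u \right\Vert_p \leq \epsilon$ where $s = \sum_i \tilde x_i = 2(1-\epsilon/r) \in [1,2]$ (using $r \geq 2\epsilon$, so $1 \leq s < 2$).

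The second step is to turn $\tilde x$ (which sums to $s \geq 1$, not $1$) into an honest $x \in \Delta$. The natural move is to add a ``correction'' term pulling mass back toward $u$: since $u \in P = \hull\{v_1,\dots,v_n\}$, write $u = \sum_i \mu_i v_i$ for some $\mu \in \Delta$ (exact Carath\'eodory, using at most $d+1$ vertices — but we can afford even more), and set $x = \tilde x + (1-s)\mu$ normalized, or more cleanly: note $\sum_i \tilde x_i v_i \approx s u$, so $\frac{1}{s}\sum_i \tilde x_i v_i \approx u$ with error $\epsilon / s \leq \epsilon$; take $x = \frac{1}{s}\tilde x$. Then $x \in \Delta$ (it sums to $1$ and is nonnegative), $\supp(x) = \supp(\tilde x)$ has the claimed size, and $\left\Vert \sum_i x_i v_i - u\right\Vert_p = \frac{1}{s}\left\Vert \sum_i \tilde x_i v_i - s u\right\Vert_p \leq \frac{\epsilon}{s} \leq \epsilon$ since $s \geq 1$. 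This avoids introducing any new vertices at all, keeping the support size exactly as in \theoremref{thm:improved_bd}.

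The main thing to get right — and the only place any care is needed — is the bookkeeping that $s = 2(1-\epsilon/r) \geq 1$ under the hypothesis $r \geq 2\epsilon$: indeed $r \geq 2\epsilon \iff \epsilon/r \leq 1/2 \iff 2(1-\epsilon/r) \geq 1$, so dividing by $s$ only shrinks the error. There is no genuine obstacle here; the corollary is a routine normalization of \theoremref{thm:improved_bd} plus a trivial translation, and the proof is two or three lines once the recentering is set up. The only subtlety worth a sentence is confirming that shifting $P$ by $u$ does not change which points are vertices (affine translation preserves the vertex set), so the support of $x$ indexes genuine vertices $v_i$ of the original $P$.
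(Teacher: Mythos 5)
Your proof is correct and takes essentially the same route as the paper: translate by $-u$, apply \theoremref{thm:improved_bd} to obtain $\tilde{x}\in 2(1-\epsilon/r)\Delta$ with $\smallnorm{\sum_i\tilde{x}_i(v_i-u)}_p\leq\epsilon$, and normalize $x=\tilde{x}/s$ where $s=2(1-\epsilon/r)\geq1$ (by $r\geq2\epsilon$), so the error only shrinks. The paper's proof is the identical translate-and-normalize computation; its aside that the translation ``at most doubles the radius'' of the circumscribing ball is unnecessary given the hypothesis $P\subseteq\BB_p(u,1)$, and your version cleanly omits it.
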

\begin{proof}
Let $v_i' = v_i - u$ for all $i$. This corresponds to translating $P$ such that $u$ is placed at the origin. By the triangle inequality, this at most doubles the radius of the origin-centered $\ell_p$ ball circumscribing the polytope. Applying Theorem~\ref{thm:improved_bd} we obtain a vector $x \in 2(1-r/\epsilon)\Delta$ such that $\left\Vert \sum_{i\in\textnormal{supp}(x)} x_i v_i' \right\Vert_p \leq \epsilon$. Let $x_i' = x_i / \Vert x \Vert_1 \in \Delta$. This satisfies $\left\Vert \sum_{i\in\textnormal{supp}(x)} x_i' v_i' \right\Vert_p \leq \epsilon/\Vert x\Vert_1$. Hence $\left\Vert \sum_{i\in\textnormal{supp}(x)} x_i' v_i' \right\Vert_p  =\left\Vert \sum_{i\in\textnormal{supp}(x)} x_i' (v_i-u) \right\Vert_p  = \left\Vert \sum_{i\in\textnormal{supp}(x)} x_i' v_i - u \right\Vert_p \leq \frac{\epsilon}{2(1-\epsilon/r)} \leq \epsilon$.
\end{proof}

\section{Lower bound}

We showed that if $V$ is a $d \times n$ matrix whose columns are contained in the unit $\ell_p$ ball,
then for any $x \in \Delta_n$ there is $\tilde{x} \in
\Delta_n$ with $\ssupp{\tilde{x}} \leq O(p / \epsilon^2)$ such that $\norm{Ax -
Ax'}_p \leq \epsilon$, where $\supp(x) = \left\{i | x_i \neq 0  \right\}$.

In this section we argue that no dimension independent bound better then $O(p /
\epsilon^2)$ is possible. This shows that the sparsity bound in Theorem
\ref{thm:cara-mirror-main} is tight and improves Barman's
$\Omega(1/\epsilon^{p(p-1)})$ lower bound \cite{Bar15}. Formally, we show
that:

\begin{theorem} 
There exists a constant $K$ such that for every $p \geq 2$ and $n\geq n_0(p)$, there exists $n \times n$ matrix $V$ with columns of unit $\ell_p$ norm, and a point $u=Vx$, $x\in\Delta_n$, such that for all
$\tilde{x} \in \Delta$ with sparsity $\ssupp{\tilde{x}}\leq K p / \epsilon^2$, one has that $\norm{ V\tilde{x} - u}_p \geq 2\epsilon > \epsilon$.
\end{theorem}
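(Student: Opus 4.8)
The plan is to construct a random instance and use the probabilistic method, following the strategy sketched in the introduction. Take $V$ to be an $n \times n$ matrix whose columns $v_1, \dots, v_n$ are i.i.d.\ random sign vectors, rescaled to have unit $\ell_p$ norm: concretely $v_j = n^{-1/p} \cdot \xi_j$ where $\xi_j \in \{-1,+1\}^n$ has independent Rademacher entries. Let $u = V x$ for $x$ the uniform distribution $x = \frac{1}{n}\one$ on the simplex; so $u = \frac{1}{n}\sum_j v_j = n^{-1/p} \cdot \frac{1}{n}\sum_j \xi_j$. A standard concentration estimate (each coordinate of $\frac{1}{n}\sum_j \xi_j$ is a normalized sum of $n$ signs, hence of order $n^{-1/2}$ with high probability) gives $\norm{u}_p = O(n^{-1/p} \cdot n^{-1/2} \cdot n^{1/p}) = O(n^{-1/2})$, so for $n$ large enough the trivial combination already has value close to zero; the point is to show that \emph{sparse} combinations cannot do well.

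Fix a subset $S \subseteq [n]$ with $|S| = s := \lfloor Kp/\epsilon^2\rfloor$ and a coefficient vector $\tilde x \in \Delta$ supported on $S$. I want a lower bound on $\norm{V\tilde x - u}_p$. The key structural point is anti-concentration: for a fixed direction, the coordinates of $V\tilde x - u$ are (up to the $n^{-1/p}$ scaling) sums $\sum_{j \in S}\tilde x_j \xi_{j,k} - \frac{1}{n}\sum_{j}\xi_{j,k}$, and since $\tilde x$ is a probability vector on only $s$ coordinates, $\sum_{j\in S}\tilde x_j^2 \geq 1/s$, so the variance of the leading term is at least $1/s$ in each coordinate. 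By the Paley–Zygmund / Berry–Esseen–type anti-concentration for weighted Rademacher sums (or by reduction to the binomial case when $\tilde x$ is uniform on $S$, which is the worst case), each coordinate of the (unscaled) difference exceeds a constant multiple of $s^{-1/2}$ with probability bounded below by an absolute constant $c > 0$, \emph{independently across coordinates} $k$. Hence, letting $N_S$ be the number of coordinates $k$ for which $|(V\tilde x - u)_k| \geq \Omega(n^{-1/p} s^{-1/2})$, we have $\E N_S \geq cn$, and by a Chernoff bound $N_S \geq \frac{c}{2} n$ except with probability $\exp(-\Omega(n))$. On that event, $\norm{V\tilde x - u}_p^p \geq \frac{c}{2} n \cdot \Omega(n^{-1} s^{-p/2}) = \Omega(s^{-p/2})$, so $\norm{V\tilde x - u}_p \geq \Omega(s^{-1/2}) = \Omega(\epsilon/\sqrt{Kp})$, which is $\geq 2\epsilon$ once $K$ is a small enough absolute constant (using $p \geq 2$, so $\sqrt{Kp}$ can be made $\leq 1/2$ — more precisely we only need $\Omega(1/\sqrt K) \geq 2$, i.e.\ $K$ below an absolute threshold; the factor $\sqrt p$ in the denominator only helps).

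The final step is the union bound. There are at most $\binom{n}{s} \leq n^s = 2^{s\log n}$ choices of support $S$, and for each fixed $S$ the bad event (some $\tilde x$ on $S$ beats $2\epsilon$) must itself be controlled uniformly over the continuum of $\tilde x$'s on $S$ — this I would handle either by an $\epsilon$-net over the $s$-dimensional simplex face (a net of size $(O(1/\delta))^s$ suffices, contributing another $2^{O(s\log(1/\delta))}$ factor) together with a Lipschitz bound showing $\norm{V\tilde x - u}_p$ changes by $O(\delta)$ under $O(\delta)$ perturbations of $\tilde x$ in $\ell_1$, or by arguing the anti-concentration event holds simultaneously for all $\tilde x$ on $S$ via a slightly stronger (but still $\exp(-\Omega(n))$) bound. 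Either way the total number of "bad configurations" is $2^{O(s\log n)} = \exp(O(\frac{p}{\epsilon^2}\log n))$, while each fails with probability $\exp(-\Omega(n))$; since we are free to take $n = n_0(p)$ as large as we like (in particular $n \gg \frac{p}{\epsilon^2}\log n$), the union bound gives that with positive probability \emph{no} sparse $\tilde x$ achieves value below $2\epsilon$, and the probabilistic method produces the desired deterministic instance. The main obstacle is the anti-concentration step combined with the net argument: one must ensure the per-coordinate lower bound $|(V\tilde x-u)_k| \geq \Omega(s^{-1/2})$ holds with constant probability \emph{uniformly} over all $\tilde x \in \Delta$ supported on $S$, and that the resulting failure probability is strong enough (exponentially small in $n$, not merely in $s$) to survive the union over nets and supports; reducing to the binomial distribution — i.e.\ observing that spreading weight as evenly as possible over $S$ minimizes the spread of the sum, so the uniform $\tilde x$ is the hardest case and its coordinate sums are exactly scaled centered binomials — is what makes the anti-concentration bound clean and explicit.
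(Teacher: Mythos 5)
Your overall strategy (random $\pm 1$ matrix scaled by $n^{-1/p}$, uniform center, union bound over sparse supports) matches the paper's, but the core anti-concentration step as you state it loses the $p$-dependence, which is precisely the content of this theorem beyond Barman's lower bound. You bound each unscaled coordinate of $V\tilde x - u$ below by $\Omega(s^{-1/2})$ with \emph{constant} probability, use Chernoff to find $\Omega(n)$ such coordinates, and conclude $\norm{V\tilde x - u}_p \geq \Omega(s^{-1/2})$. With $s = \lfloor Kp/\epsilon^2\rfloor$ this is $\Omega(\epsilon/\sqrt{Kp})$, and you then assert that ``the factor $\sqrt p$ in the denominator only helps.'' It does not: it sits in the denominator of a \emph{lower} bound, so it weakens it, and the requirement $\Omega(1/\sqrt{Kp}) \geq 2$ fails for large $p$ no matter how small the universal constant $K$ is. As written, your argument can only forbid sparsity $O(1/\epsilon^2)$, not $O(p/\epsilon^2)$.

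The missing idea is to work in the tail of the anti-concentration rather than the bulk. A coordinate sum exceeds $\lambda s^{-1/2}$ with probability $\exp(-O(\lambda^2))$; taking $\lambda \asymp \sqrt p$ gives coordinates of magnitude $\Omega(\sqrt{p/s}) = \Omega(\epsilon/\sqrt K)$, but now only an $\exp(-O(p))$ fraction of them. The $\ell_p$ norm then picks up a factor $\bigl(\exp(-O(p))\bigr)^{1/p} = \Omega(1)$, so the $p$-th root exactly cancels the exponential rarity and the lower bound $\Omega(\epsilon/\sqrt K)$ survives. The paper encodes this by defining ``good'' rows as those whose $S$-restricted sum deviates by $\Theta(\epsilon k)$ from its mean; there are $r = \Theta(n\exp(-O(k\epsilon^2)))$ of them, and the loss $(r/n)^{1/p} = \exp(-O(k\epsilon^2/p))$ from concentrating the dual certificate on them stays $\Omega(1)$ precisely when $k \lesssim p/\epsilon^2$.

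There is also a cleaner way around the uniformity-over-$\tilde x$ issue you correctly flag than an $\epsilon$-net: the paper constructs a single dual vector $y$ with $\norm{y}_q=1$, supported on the good rows, such that \emph{every} coordinate $(y^\top V)_j$ for $j \in S$ is at least $2\epsilon$. Since $\tilde x$ is a convex combination over $S$, this gives $y^\top V\tilde x \geq 2\epsilon$ simultaneously for all such $\tilde x$, and H\"{o}lder then lower-bounds $\norm{V\tilde x - u}_p$. This is why the union bound in the paper only needs to range over the $\binom{n}{s}$ supports and no net over the simplex face is required; it also removes the need for your heuristic (plausible but unproved in your sketch) that the uniform $\tilde x$ on $S$ is the worst case.
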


In other words, even thought $u$ is a convex combination of columns of $V$,
every $(Kp / \epsilon^2)$-sparse convex combination of columns of $V$ has distance
at least $\epsilon$ from $u$ in the $\ell_p$-norm. We first present a simple and
constructive $\Omega(1/\epsilon^2)$ lower bound and later a tight $\Omega(p /
\epsilon^2)$ lower bound based on the probabilistic method.

\subsection{A simple lower bound $\Omega(1/\epsilon^{2})$}

The lower bound relies on Sylvester's construction of Hadamard matrices, which
are defined for all values of $n$ that are powers of $2$.

Sylvester's recursive construction follows by defining $H_1 =
[1]$, and for every $n$ that is a power of $2$:
$$H_{2n} =\begin{bmatrix} H_n & H_n \\ H_n & -H_n\end{bmatrix}$$

\begin{proposition}
The Sylvester matrix $H_n$ defined as above is Hadamard. In other words,  $H_{ij}=\pm1$ for all $i,j$ and
$H^{T}H=nI$ (its columns are mutually orthogonal).
\end{proposition}

Now we consider the polytope $P$ formed by the convex hull of the
normalized columns of $H$. One can easily check that for the construction above
the uniform combination of columns is $H \cdot \vec{1}/n = e_1$ where $\vec{1}$ is the
vector of all $1$'s and $e_i$ is the unit vector in the direction of the the
$i$-th component. We show that $e_1$ is at distance greater than $\epsilon$ from the convex hull
of any $o(1/\epsilon^{2})$ columns of $H$.

\begin{theorem}
Let $H_n$ be a $n\times n$ Sylvester matrix, and let $P$ be the convex
hull of the columns of $\tilde{H} := H/n^{1/p}$. Let $u = \tilde{H}\cdot  \vec{1}/n =  e_1 / n^{1/p}\in P$. Then any
$x \in \Delta_n$ satisfying $\left\Vert \tilde{H}x - u \right\Vert_p \leq \epsilon$ has sparsity $\ssupp{x} \geq \min(1/\epsilon^2, n)$.
\end{theorem}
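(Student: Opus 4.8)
The plan is to reduce the $\ell_p$ claim to a one-line Euclidean computation, using only that $H^\top H = nI$ and that the first row of $H$ is $\vec{1}^\top$. First I would rewrite the error vector: since the first row of $H$ is all ones and every other row is orthogonal to it, $H\vec{1} = n e_1$, so $u = \tilde H\vec{1}/n$ and thus $\tilde H x - u = n^{-1/p} H(x - \vec{1}/n)$. Next, for any $z \in \R^n$ and any $p \ge 2$ one has $\norm{z}_p \ge n^{1/p-1/2}\norm{z}_2$ (the only place $p \ge 2$ is used --- just H\"{o}lder / power mean, tight when $z$ is flat), and $\norm{H w}_2^2 = w^\top H^\top H w = n\norm{w}_2^2$. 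Chaining these gives $\norm{\tilde H x - u}_p \ge n^{-1/p}\cdot n^{1/p-1/2}\cdot\sqrt{n}\,\norm{x-\vec{1}/n}_2 = \norm{x-\vec{1}/n}_2$; that is, the $\ell_p$ distance from $u$ dominates the Euclidean distance from $x$ to the uniform distribution over the $n$ vertices.

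The remaining step is purely combinatorial: lower bound $\norm{x - \vec{1}/n}_2$ over $x \in \Delta_n$ with $\ssupp{x} = s$. Writing $S = \supp(x)$, split $\norm{x - \vec{1}/n}_2^2 = \sum_{i \in S}(x_i - 1/n)^2 + (n-s)/n^2$; expanding the first sum, using $\sum_{i \in S} x_i = 1$, and applying Cauchy--Schwarz in the form $\sum_{i \in S} x_i^2 \ge 1/s$ collapses everything to $\norm{x - \vec{1}/n}_2^2 \ge 1/s - 1/n$, with equality exactly when $x$ is uniform on $S$. Combining with the previous paragraph, $\norm{\tilde H x - u}_p \le \epsilon$ forces $1/s - 1/n \le \epsilon^2$, hence $\ssupp{x} = s \ge (\epsilon^2 + 1/n)^{-1} \ge \tfrac{1}{2}\min(1/\epsilon^2, n)$, which gives the stated bound (up to the constant $2$; when $1/\epsilon^2 \le n$ this is $\Omega(1/\epsilon^2)$ as claimed).

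The main obstacle --- really the only non-mechanical point --- is being confident that the $\ell_p \to \ell_2$ passage costs nothing: the inequality $\norm{z}_p \ge n^{1/p-1/2}\norm{z}_2$ is invoked at exactly its tight (flat-vector) case, which is legitimate here because the Hadamard structure lets $H w$ realize essentially any direction with $\ell_2$ norm $\sqrt{n}\,\norm{w}_2$, and the uniform-on-$S$ choice of $x$ shows the resulting $\sqrt{1/s-1/n}$ cannot be improved. It is worth resisting the more obvious route --- projecting $e_1$ onto the span of the $s$ selected columns of $\tilde H$ --- which is strictly weaker: it yields only $\norm{\tilde H x - u}_p \ge \sqrt{n-s}/n$, vacuous once $\epsilon \gtrsim n^{-1/2}$, precisely the regime we care about. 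So the translate $x - \vec{1}/n$, not the span of the chosen columns, is the right object to work with.
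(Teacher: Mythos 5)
Your proposal is correct and is essentially the paper's own argument: the same power-mean passage from $\ell_p$ to $\ell_2$, the same use of $H^\top H = nI$ and of the all-ones first row (your identity $\tilde H x - u = n^{-1/p}H(x-\vec{1}/n)$ expands to exactly the paper's $\norm{x}_2^2 - 1/n$), and the same Cauchy--Schwarz sparsity bound. Your closing remark about the factor of $2$ is in fact more careful than the paper, whose final step $(\epsilon^2+1/n)^{-1}\geq 1/\max(\epsilon^2,1/n)$ has the inequality backwards and really only yields $\ssupp{x}\geq \tfrac{1}{2}\min(1/\epsilon^2,n)$.
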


\begin{proof}
Let $x\in\Delta_{n}$ be $k$-sparse, i.e. $\ssupp{x} = k$, such that
$\norm{\tilde{H}x - u}_{p} \leq \epsilon$. We would like to lower bound the
sparsity $k$ in terms of $\epsilon$ and $p$.

We will use two main ingredients in the proof: the first is the \emph{power mean
  inequality} which states that for any vector $x \in \R^n$, $\left( \frac{1}{n}
  \sum_i x_i^t\right)^{1/t}$ is non-decreasing in $t$. In particular, this
  implies that $\norm{x}_t \cdot n^{-1/t}$ is non-decreasing\footnote{This can
    be seen by computing the derivative of $M_t(x) = \left( \frac{1}{n}
        \sum_i x_i^t\right)^{1/t}$ with respect to $t$ and showing it is
      non-positive.}. 
    The second fact we
  will use is that for every vector $\norm{x}_1^2 \leq \norm{x}_2^2 \cdot
\ssupp{x}$. This follows from the Cauchy-Schwarz inequality:
  $\norm{x}_1^2 = \left( \sum_{i\in\supp(x)} x_i \cdot 1 \right)^2 \leq \left(\sum_{i\in\supp(x)}
  x_i^2\right) \cdot \left( \sum_{i\in\supp(x)} 1 \right) = \norm{x}_2^2 \cdot \ssupp{x}$.
Combining both results give us a bound involving the $2$-norm of the error:
$$\epsilon \geq \norm{\tilde{H}x - u}_p = \frac{1}{n^{1/p}} \cdot \norm{Hx -
e_1}_p \geq  \frac{1}{n^{1/2}} \cdot \norm{Hx -
    e_1}_2$$
    where the last step follows from the power-mean inequality. Squaring both
    sides, we get:
    $$\epsilon^2 \geq \frac{1}{n} (Hx - e_1)^\top (Hx - e_1) = \frac{1}{n}
    \left[x^\top H^\top H x - 2
    e_1^\top Hx + 1\right] = \norm{x}_2^2 -\frac{1}{n} \geq \frac{\norm{x}_1^2}{ \ssupp{x}} -
    \frac{1}{n} = \frac{1}{k} - \frac{1}{n}$$
    We used the fact that $e_1^\top Hx = \norm{x}_1 = 1$ since the top row of $H$ consists of only
$1$'s. 
Hence $k\geq \left(\epsilon^2 + 1/n \right)^{-1} \geq 1 / \max\left(   \epsilon^2, 1/n     \right) = \min\left(  1/\epsilon^2, n \right)$.
\end{proof}

\subsection{Tight lower bound $\Omega(p/\epsilon^{2})$}

We now establish a tight lower bound via a probabilistic existence argument
inspired by the construction of Klein and Young \cite{KleinY15}. The example
used to exhibit the lower bound is very simple. The proof of its validity,
however, is quite involved and requires a careful probability analysis.
We first give an overview before delving into the details. \\

\paragraph{Overview.} 
Recall the formulation \ref{cara-saddle} of the Carath\'{e}odory problem as a saddle
point problem described in Section \ref{sec:linear}. If we translate all points such
that $u = 0$, then we can write the problem as:
$$\min_{x \in \Delta} \max_{y \in \BB_q(1)} y^\top V x$$
which can be seen as a game between a player controlling $x$ and $y$. The
approximate Carath\'{e}odory theorem states that (under the conditions from Theorem
\ref{thm:cara-mirror-main}), if the value of the game is zero, then the
$x$-player has a $k$-sparse strategy that guarantees that the value of the game
is at most $\epsilon$ for $k = O(p/\epsilon^2)$.

For the lower bound, our goal is to construct an instance of this game with
value $v$ such that for all $k$-sparse strategies of the $x$-player with $k < C p /
\epsilon^2$, the $y$-player can force the game to have a value strictly
larger than  $v + \epsilon$.

\emph{Probabilistic Construction:} We  define the matrix $V$ such that
$V = n^{-1/p} \cdot A$ where $A$ is an $n \times n$ matrix with random $\pm 1$
entries, i.e. each entry of $A$ is chosen at random from $\{-1, +1\}$ independently
with probability
$1/2$. Note the the $\ell_p$ norm of the columns of $A$ is equal to $1$, so we
are in the conditions of Theorem \ref{thm:cara-mirror-main}.
Then we will show that the following events happen with high probability:

\begin{enumerate}
\item The center of the polytope defined by the columns of $V$ is $\epsilon$-close
to zero, i.e., $\norm{V \cdot \one/n}_p \leq \epsilon$.
\item For each set $S$ of $k$-coordinate, if $x$ is restricted to only use those
coordinates, the $y$-player can force the value of
the game to be at least $2\epsilon$. We prove so by exhibiting a strategy for
the $y$-player such that $y^\top V$ is at least $2\epsilon$ for all coordinates in $S$.
\end{enumerate}

After we bound the probabilities of the previous events, the result follows by
taking the union bound over all the $\binom{n}{k}$ possible subsets $S$ of cardinality
$k$. This will show that with nonzero probability a matrix will be constructed, for which
the $y$ player will always be able  force $y^T V x \geq 2\epsilon$, no matter what
$o(p/\epsilon^2)$-sparse strategy player $x$ chooses.

\paragraph{Bounding probabilities.}

All the lemmas and theorems from this of this section are in the conditions of
the previous paragraphs: $A$ and $V$ are the random matrices previously
defined, and $S$ is a fixed subset of $x$-coordinates of size $k$.

\begin{lemma}\sloppy If the $x$-player plays the uniform strategy, then
$\E\left[\norm{ V \cdot \one/n}_p\right] \leq \sqrt{p/n}$, and  $\P\left[\norm{V \cdot \one/n}_p \geq
\epsilon\right] \leq \sqrt{\frac{p}{n \epsilon^2}}$ for 
$n \geq p/\epsilon^2$.
\end{lemma}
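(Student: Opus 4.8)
The plan is to observe that the coordinates of $V\cdot\one/n$ are i.i.d.\ scaled Rademacher sums, and then estimate the expected $\ell_p$-norm via Jensen's inequality, Khintchine's inequality, and Markov's inequality. Concretely, I would first expand $V\cdot\one/n$ coordinatewise: writing $V=n^{-1/p}A$ with the $A_{ij}$ independent and uniform on $\{-1,+1\}$, the $i$-th coordinate is $(V\one/n)_i=n^{-1/p-1}S_i$, where $S_i:=\sum_{j=1}^{n}A_{ij}$ is a sum of $n$ independent Rademacher random variables, and the $S_i$ are i.i.d.\ across $i$ since the rows of $A$ are independent. Hence $\norm{V\one/n}_p^p=n^{-1-p}\sum_{i=1}^n|S_i|^p$, and taking expectations, $\E\norm{V\one/n}_p^p=n^{-p}\,\E|S_1|^p$.

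Next, since $t\mapsto t^p$ is convex for $p\ge1$, Jensen's inequality gives $\E\norm{V\one/n}_p\le\bigl(\E\norm{V\one/n}_p^p\bigr)^{1/p}=n^{-1}\bigl(\E|S_1|^p\bigr)^{1/p}$. The key input is now Khintchine's inequality: for $p\ge2$ and any reals $a_1,\dots,a_n$ one has $\bigl(\E\bigl|\sum_j a_j r_j\bigr|^p\bigr)^{1/p}\le\sqrt{p}\,\norm{a}_2$ for independent Rademachers $r_j$; applying it with all $a_j=1$ yields $\bigl(\E|S_1|^p\bigr)^{1/p}\le\sqrt{pn}$. Substituting gives $\E\norm{V\one/n}_p\le n^{-1}\sqrt{pn}=\sqrt{p/n}$, which is the first assertion. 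The second assertion then follows immediately from Markov's inequality: $\P\bigl[\norm{V\one/n}_p\ge\epsilon\bigr]\le\E\norm{V\one/n}_p/\epsilon\le\sqrt{p/(n\epsilon^2)}$, and the hypothesis $n\ge p/\epsilon^2$ is exactly what makes this right-hand side at most $1$, i.e.\ a nonvacuous probability bound.

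The only step with genuine content is the appeal to Khintchine's inequality: its constant, of order $\sqrt{p}$, is precisely what produces the $\sqrt{p}$ in the bound $\sqrt{p/n}$ (and, downstream, the factor $p$ in the $\Omega(p/\epsilon^2)$ lower bound), so one should invoke the sharp (Haagerup) form with constant at most $\sqrt{p}$ rather than a cruder sub-Gaussian moment estimate that would lose a multiplicative constant. The remaining ingredients — the coordinatewise expansion of $V\one/n$ into Rademacher sums, the Jensen step, and the Markov step — are entirely routine.
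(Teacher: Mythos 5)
Your proof is correct and follows essentially the same route as the paper: Khintchine's inequality with constant $\sqrt{p}$ to bound the expectation, followed by Markov's inequality. The only difference is presentational — the paper invokes the vector-valued form of Khintchine directly on the columns of $V$ (via the symmetrization $v_i \sim r_i v_i$), whereas you rederive that bound from the scalar form by expanding $\norm{V\one/n}_p^p$ coordinatewise and applying Jensen, which is exactly the standard proof of the vector-valued inequality in $\ell_p$.
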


\begin{proof} The bound on the expectation follows from Khintchine's inequality, 
which states that for any given vectors $u_1, \hdots, u_m
\in \R^n$ and iid uniform $\{-1,+1\}$-variables $r_i$
$$\E\norm{\sum_i r_i u_i}_p \leq \sqrt{p} \cdot \left(\sum_i \norm{u_i}_p^2
\right)^{1/2} $$
We refer to \cite{wolfflectures} or \cite{Bar15} for a proof. Let $v_i$ be the
columns of $V$. Since they are iid uniform, $v_i$ has the same distribution of
$r_i v_i$ for some uniform $\{-1,+1\}$-variable $r_i$. So:
$$\E \norm{V \left( \frac{1}{n} \one\right)}_p = \E \norm{\sum_i r_i
\frac{v_i}{n}}_p \leq \sqrt{p} \cdot \left( n \cdot \frac{1}{n^2} \right)^{1/2} =
\sqrt{\frac{p}{n}}$$
The second part of the lemma is direct from Markov's inequality.
\end{proof}

For the second part, consider an $x$-player that is restricted to only use
coordinates from $S$. We want to show that the $y$-player has a strategy that
that would make all columns in $S$ have a high value in $y^\top V$. The idea is
that since $n$ is large and $k$ is small (in fact, a constant independent of
$n$) there should be rows that are very skewed
(i.e. have a lot more $+1$'s than $-1$'s in the $S$ columns). If $y$ plays
a strategy that only uses such rows, then he can force $x$ to have high value.

We call a row of $A$ is \emph{good} for the $y$-player if it  has more than
$(1/2+\epsilon)k$ $1$'s in the $S$-coordinates. Next we show that with high
probability there is a large enough number of good rows available for the $y$-player. For
this result, we need to lower bound the probability in the tail of the
binomial distribution, which requires a tight anti-concentration inequality.

Anti-concentration can be derived by carefully plugging the moment
generating function into the Paley-Zygmund~\cite{PaleyZ32} inequality, or by sharply estimating 
a sum of terms involving binomial coefficients. For further information we refer the reader
to Tao's book \cite{taotopics}.

\begin{lemma}[Chernoff bound]\label{lemma:bin-tail-bound}
  If $0 < \rho \leq 1/2$, $X_i$ are iid $\{0,1\}$-random variables with $\P[X_i
  = 1] = \rho$ and $\sigma^2 = k \rho (1-\rho)$ is the variance of $X = \sum_{i=1}^k
  X_i$, then there exist constants $C,c > 0$ such that for all $\lambda \leq c
  \sigma$,
  $$\P\left[\abs{{\textstyle \sum_{i=1}^k} X_i - \rho\cdot k} \leq \lambda \sigma\right] \geq 
  c \exp\left(-C \lambda^2 \right)$$
  \end{lemma}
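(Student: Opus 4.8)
The plan is to prove the anti-concentration bound of Lemma~\ref{lemma:bin-tail-bound} by a direct estimation of the relevant binomial sum, which is the cleanest route when we want a lower bound on the probability of landing within $\lambda\sigma$ of the mean. First I would reduce to estimating a single central binomial term: since the interval $[\rho k - \lambda\sigma, \rho k + \lambda\sigma]$ contains $\Theta(\lambda\sigma)$ integers and the binomial pmf is unimodal with its peak near $\rho k$, it suffices to show that $\P[\sum X_i = j] \geq c'/\sigma$ for every integer $j$ in this interval (for $\lambda \leq c\sigma$ with $c$ small), and then sum over the $\Theta(\lambda\sigma)$ such integers. Actually a slightly more careful statement is needed: the per-term lower bound degrades as $j$ moves away from $\rho k$, so I would instead show $\P[\sum X_i = j] \geq \frac{c'}{\sigma}\exp(-C'(j-\rho k)^2/\sigma^2)$ for $|j - \rho k| \leq c\sigma$, and then sum this over $j$ in the window, each term being at least $\frac{c'}{\sigma}\exp(-C'\lambda^2)$, giving the claimed $c\exp(-C\lambda^2)$ after absorbing constants.

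The core computation is the pointwise estimate on $\binom{k}{j}\rho^j(1-\rho)^{k-j}$. I would apply Stirling's formula in the quantitative form $m! = \sqrt{2\pi m}\,(m/e)^m e^{\theta_m}$ with $|\theta_m| \leq 1/(12m)$, applied to $k!$, $j!$, and $(k-j)!$. Writing $j = \rho k + t$ with $t = j - \rho k$, the factorial factors combine to give $\binom{k}{j}\rho^j(1-\rho)^{k-j} = \frac{1}{\sqrt{2\pi k\hat\rho(1-\hat\rho)}}\exp\big(-k\, D(\hat\rho\,\|\,\rho) + O(\text{error})\big)$ where $\hat\rho = j/k$ and $D(\cdot\|\cdot)$ is the binary KL divergence. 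Since $\sigma^2 = k\rho(1-\rho)$, the prefactor is $\Theta(1/\sigma)$ as long as $\hat\rho$ stays bounded away from $0$ and $1$, which holds because $|t| \leq c\sigma = c\sqrt{k\rho(1-\rho)} \leq c\sqrt{k}/2$ and $\rho \leq 1/2$, so $\hat\rho \in [\rho/2, 3/4]$ say for $c$ small enough (here I need to be slightly careful when $\rho$ itself is tiny, i.e. $\rho = \Theta(1/k)$; but then $\sigma = \Theta(1)$, $\lambda$ is $O(1)$, and the claim reduces to $\P[X = \rho k] = \Theta(1)$ which follows from the same Stirling estimate, or one just notes the statement is vacuous/trivial for $\sigma$ bounded). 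For the exponent, a second-order Taylor expansion of $D(\hat\rho\|\rho)$ around $\hat\rho = \rho$ gives $k\,D(\hat\rho\|\rho) = \frac{t^2}{2\sigma^2}(1 + O(t/\sigma)) \leq C' t^2/\sigma^2$ uniformly for $|t| \leq c\sigma$, again for $c$ a small absolute constant; the Stirling error terms are $O(1/j) + O(1/(k-j)) = O(1/\sigma^2)$ and are absorbed.

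I expect the main obstacle to be the uniformity of the constants across the full parameter range $0 < \rho \leq 1/2$ and all $k$: one must check that $C$ and $c$ can be chosen as absolute constants, not depending on $\rho$ or $k$, and the degenerate regime where $\sigma$ is small (so $k\rho(1-\rho) = O(1)$) must be handled separately since there Taylor expansion and Stirling's asymptotics are not informative — though in that regime $\lambda \leq c\sigma$ forces $\lambda = O(1)$ and the event $\{|\sum X_i - \rho k| \leq \lambda\sigma\}$ either is impossible (if $\lambda\sigma < $ distance to nearest integer, but then we can take $c$ small enough that $c\sigma < 1$ makes the hypothesis $\lambda \le c\sigma$ only satisfiable by... ) — more cleanly, one restricts attention to $\sigma \geq \sigma_0$ for a suitable absolute $\sigma_0$, and for $\sigma < \sigma_0$ observes the statement holds trivially by taking $c$ small so that only $\lambda = 0$-type cases arise, or simply notes this lemma is only ever invoked with $k \to \infty$. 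Everything else — summing the geometric-Gaussian terms over the integer window, bounding $\binom{k}{j}$ ratios, absorbing lower-order terms — is routine. I would also remark that the Paley–Zygmund approach (computing $\E X$, $\E X^2$, $\E X^4$ for the shifted/truncated variable and applying $\P[X \geq \delta\E X] \geq (1-\delta)^2 (\E X)^2/\E X^2$) gives the same bound and the reader may prefer it; but the Stirling route keeps constants transparent, which matters since they feed into $n_0(p)$ and $K$ in the final lower bound theorem.
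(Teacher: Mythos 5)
The paper does not actually prove this lemma: it states it without proof and merely points the reader to two possible derivations (plugging the moment generating function into Paley--Zygmund, or ``sharply estimating a sum of terms involving binomial coefficients,'' citing Tao's book). Your Stirling-based argument is a worked-out version of the second of these routes, so you are supplying a proof the paper omits rather than diverging from one it gives. The core of your argument is sound: the pointwise estimate $\P[X=j]\ge \frac{c'}{\sigma}\exp(-C'(j-\rho k)^2/\sigma^2)$ for $\abs{j-\rho k}\le c\sigma$ is the standard local-limit computation, and your error accounting (Stirling corrections of order $1/(\rho k)=O(1/\sigma^2)$, relative error $O(t/\sigma)$ in the second-order expansion of the KL divergence) is correct once $\sigma$ is bounded below by an absolute constant.

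The one genuine gap is the small-$\lambda$ regime, which you flag but do not resolve, and which cannot be resolved because the lemma as stated is false there. Summing your pointwise bound over the $O(\lambda\sigma)$ integers in the window yields only $\Omega(\min(1,\lambda))\exp(-C'\lambda^2)$, and no choice of absolute constants closes the gap to $c\exp(-C\lambda^2)$: for $\lambda=0$ with $\rho k\notin\mathbb{Z}$ the left-hand side is $0$, and for $\lambda$ a fixed small constant with $\sigma\to\infty$ the left-hand side tends to $\P[\abs{Z}\le\lambda]=\Theta(\lambda)$ by the central limit theorem while the right-hand side is $\Theta(1)$. Note that this failure occurs even for large $\sigma$, so your proposed fix of ``restricting to $\sigma\ge\sigma_0$'' does not address it. The correct repair is to add the hypothesis $\lambda\ge 1$ (or to replace the constant $c$ on the right by $c\min(1,\lambda)$); this is harmless for the paper, which only invokes the lemma with $\lambda=4\epsilon\sigma=\Omega(1)$ as in \lemmaref{lemma:good-rows}. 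With that restriction made explicit, your argument goes through, and it has the advantage over the Paley--Zygmund route of yielding the matching anti-concentration statement (the second version of the lemma) from the same pointwise estimates by summing over $j$ with $\lambda\sigma\le\abs{j-\rho k}\le 2\lambda\sigma$.
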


\begin{lemma}[Anti-concentration for the binomial distribution]\label{lemma:bin-tail-bound}
  In the same conditions as in the previous lemma, there exist
  constants $\tilde{C},\tilde{c} > 0$ such that for all $\lambda \leq c
  \sigma$,
  $$\P\left[\abs{{\textstyle \sum_{i=1}^k} X_i - \rho\cdot k} \geq \lambda \sigma\right] \geq 
  \tilde{c} \exp\left(-\tilde{C} \lambda^2 \right)$$
  \end{lemma}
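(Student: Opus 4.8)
Write $X = \sum_{i=1}^{k} X_i$. The plan is to prove the stronger one-sided bound $\P[X-\rho k \ge \lambda\sigma] \ge \tilde c\exp(-\tilde C\lambda^2)$, from which the two-sided statement follows since $\{X-\rho k\ge\lambda\sigma\}\subseteq\{|X-\rho k|\ge\lambda\sigma\}$. The engine is the \emph{Paley--Zygmund inequality} applied to the positive random variable $Z=e^{sX}$, with $s=s(\lambda,\sigma)>0$ chosen below: for every $\theta\in[0,1]$,
$$\P[\,Z>\theta\,\E Z\,] \ge (1-\theta)^2\,\frac{(\E Z)^2}{\E[Z^2]}.$$
Let $M(s)=\E e^{sX_1}=1-\rho+\rho e^{s}$ and $L(s)=\log M(s)$, the cumulant generating function of a single $X_i$, so $\E e^{sX}=e^{kL(s)}$, and set $R(s)=M(s)^2/M(2s)$. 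Then $(\E Z)^2/\E[Z^2]=R(s)^k$, and the event $\{Z>\theta\,\E Z\}$ is exactly $\{X>(\log\theta+kL(s))/s\}$.

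First I record the facts I need about $L$. We have $L(0)=0$, $L'(0)=\rho$, $L''(s)=\rho(1-\rho)e^{s}/(1-\rho+\rho e^{s})^2$, and $L'''(s)=L''(s)(1-2L'(s))$, so $|L'''|\le L''$. Using $1-\rho+\rho e^{s}\ge1$ for $s\ge0$ and $1-\rho+\rho e^{s}\le e$ for $\rho\le1,\ s\le1$, one gets $\rho(1-\rho)e^{-2}\le L''(s)\le\rho(1-\rho)e$ on $[0,1]$ (and $L''(2s)\le\rho(1-\rho)e^{2}$). Taylor's theorem with remainder then yields, uniformly for $0\le s\le1$, the estimates $a_1\rho(1-\rho)s^2\le L(s)-\rho s\le a_2\rho(1-\rho)s^2$ and, since $\log R(s)=2L(s)-L(2s)$ with $(\log R)''(0)=-2\rho(1-\rho)$, the estimate $-a_3\rho(1-\rho)s^2\le\log R(s)\le0$, for absolute constants $a_1,a_2,a_3>0$; the upper bound $\log R(s)\le0$ is simply Cauchy--Schwarz, $M(s)^2=\bigl((1-\rho)+\rho e^{s}\bigr)^2\le\bigl((1-\rho)+\rho e^{2s}\bigr)\bigl((1-\rho)+\rho\bigr)=M(2s)$.

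Now I choose the parameters. Set $s=c_0\lambda/\sigma$ for a sufficiently large absolute constant $c_0$; since $\sigma^2=k\rho(1-\rho)$ this obeys $s\le1$ whenever $\lambda\le\sigma/c_0$, which is why the conclusion is stated only for $\lambda\le c\sigma$ (it suffices to take $c\le1/c_0$). Pick $\theta\in(0,1]$ so that the Paley--Zygmund threshold equals $\rho k+\lambda\sigma$, i.e.\ $\log\theta=s\lambda\sigma-k\bigl(L(s)-\rho s\bigr)$. Using $L(s)-\rho s\ge a_1\rho(1-\rho)s^2$ and $k\rho(1-\rho)=\sigma^2$,
$$\log\theta\le s\lambda\sigma-a_1\sigma^2 s^2=c_0\lambda^2-a_1 c_0^2\lambda^2=c_0(1-a_1 c_0)\lambda^2\le-a_4\lambda^2,$$
with $a_4=c_0(a_1 c_0-1)>0$ once $c_0>1/a_1$. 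So $\theta\le1$ is legitimate and $\theta\le e^{-a_4\lambda^2}$, whence for $\lambda\ge1$ we get $(1-\theta)^2\ge(1-e^{-a_4})^2=:\tilde c_0>0$. Also $R(s)^k=e^{k\log R(s)}\ge e^{-a_3\sigma^2 s^2}=e^{-a_3 c_0^2\lambda^2}$. Plugging into Paley--Zygmund,
$$\P[X\ge\rho k+\lambda\sigma]\ge\P[X>\rho k+\lambda\sigma]=\P[Z>\theta\,\E Z]\ge(1-\theta)^2 R(s)^k\ge\tilde c_0\,e^{-a_3 c_0^2\lambda^2},$$
which is the desired form with absolute $\tilde c=\tilde c_0$, $\tilde C=a_3 c_0^2$, valid for all $\lambda\ge1$. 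For $0\le\lambda<1$ use monotonicity: $\P[|X-\rho k|\ge\lambda\sigma]$ is nonincreasing in $\lambda$, hence at least its value at $\lambda=1$, an absolute constant by the above — provided $1\le c\sigma$, i.e.\ $\sigma$ is above an absolute constant (automatic in the application, where $\rho=\tfrac12$ and $k$ is large; the corner $\sigma=O(1)$ with $\rho$ tiny is genuinely degenerate). Shrinking $\tilde c$ to the minimum of the two constants gives the bound for all $\lambda\le c\sigma$.

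The crux is the balancing in the parameter choice: $s$ must be taken a constant factor \emph{larger} than the critical value (roughly $2\lambda/\sigma$, where $\theta$ would be $1$ to leading order) so that the Paley--Zygmund slack $(1-\theta)^2$ stays bounded below, yet $s$ must remain $O(\lambda/\sigma)$ so that the second-moment penalty $R(s)^k$ costs only a factor $e^{-\Theta(\lambda^2)}$; making this quantitative needs exactly the uniform two-sided control of $L(s)-\rho s$ and of $\log R(s)$ on $s\in[0,1]$ (Taylor with a $\rho(1-\rho)$-proportional remainder), and that is what forces the restriction $\lambda\le c\sigma$. A more hands-on alternative avoids Paley--Zygmund: lower-bound $\P[X\ge\rho k+\lambda\sigma]$ by the sum of $\Theta(\sigma)$ consecutive probabilities $\binom{k}{\rho k+j}\rho^{\rho k+j}(1-\rho)^{(1-\rho)k-j}$ for $j$ in a window just above $\lambda\sigma$, using the Stirling/entropy estimate $\binom{k}{\rho k+j}\rho^{\rho k+j}(1-\rho)^{(1-\rho)k-j}=\Theta(\sigma^{-1})e^{-\Theta(j^2/\sigma^2)}$ (valid while $\rho k+j$ stays bounded away from $0$ and $k$); the length-$\Theta(\sigma)$ sum absorbs the $\sigma^{-1}$ and again produces $e^{-\Theta(\lambda^2)}$.
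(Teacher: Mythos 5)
The paper states this lemma without proof: the text only remarks that anti-concentration can be derived ``by carefully plugging the moment generating function into the Paley--Zygmund inequality, or by sharply estimating a sum of terms involving binomial coefficients,'' and defers to Tao's book. Your proposal carries out the first of these routes in full, so you and the authors are taking the same approach, and the computation is correct in the main regime $1 \le \lambda \le c\sigma$: the exponential tilt $Z = e^{sX}$ with $s = c_0\lambda/\sigma$ chosen a constant factor beyond the critical value $\approx 2\lambda/\sigma$, the second-order Taylor control giving both $L(s)-\rho s$ and $-\log R(s)$ of order $\rho(1-\rho)s^2$ on $s\in[0,1]$ (with $\log R(s)\le 0$ by Cauchy--Schwarz), and the resulting balance between $(1-\theta)^2$ and $R(s)^k$ in Paley--Zygmund all check out. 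The one place the argument is incomplete — which you flag yourself — is the corner $0\le\lambda<1$ together with $\sigma<1/c$, where the monotonicity reduction to $\lambda=1$ is unavailable because $\lambda=1$ would then exceed the admissible range $\lambda\le c\sigma$. That corner never arises in the paper's application ($\rho=1/2$ and $k\to\infty$, so $\sigma\to\infty$), so the omission is harmless in context; but since the lemma is stated for all $0<\rho\le 1/2$ and all $k$, a fully self-contained proof would still owe a short separate argument for the bounded-$\sigma$ regime, e.g.\ a direct local bound on the central point masses of the binomial.
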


\begin{lemma}\label{lemma:good-rows}\sloppy
There are $r = \Omega(n \exp(-O(k \epsilon^2))$ good rows with probability
at least $1-\exp(-\Omega(n \exp(-O(k \epsilon^2))))$.
\end{lemma}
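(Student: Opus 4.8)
The plan is to reduce the statement to a Chernoff bound over the (independent) rows of $A$, after first establishing that a single fixed row is good with probability $\Omega(\exp(-O(k\epsilon^2)))$ using the anti-concentration estimate of Lemma~\ref{lemma:bin-tail-bound}.

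First I would fix the set $S$ of $k$ columns and, for each row $j\in[n]$, let $X_j$ denote the number of $+1$ entries of $A$ among the coordinates in $S$. Since the entries of $A$ are i.i.d.\ uniform in $\{-1,+1\}$, each $X_j$ is distributed as $\mathrm{Bin}(k,1/2)$, so $\E X_j = k/2$ and $\sigma^2 = k/4$. Row $j$ is good precisely when $X_j > (1/2+\epsilon)k$, i.e.\ when $X_j - k/2 > \epsilon k = \lambda\sigma$ with $\lambda = 2\epsilon\sqrt{k}$ (up to a harmless $+O(1/\sqrt k)$ adjustment coming from the integrality of $X_j$, which only affects constants). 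Provided $\epsilon$ is below a universal constant we have $\lambda \le c\sigma$, so the anti-concentration bound of Lemma~\ref{lemma:bin-tail-bound} applies and gives $\P[\,|X_j - k/2| \ge \lambda\sigma\,] \ge \tilde c\exp(-\tilde C\lambda^2) = \tilde c\exp(-4\tilde C\,k\epsilon^2)$. By the symmetry of the binomial distribution about $k/2$, exactly half of this probability mass lies in the upper tail, so each row is good independently with probability $p_{\mathrm{good}} = \Omega(\exp(-O(k\epsilon^2)))$.

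Next, since distinct rows of $A$ are independent, the number $R$ of good rows is a sum of $n$ i.i.d.\ $\mathrm{Bernoulli}(p_{\mathrm{good}})$ variables, with mean $\mu := \E R = n\,p_{\mathrm{good}} = \Omega\!\big(n\exp(-O(k\epsilon^2))\big)$. A standard multiplicative Chernoff bound then yields $\P[\,R < \mu/2\,] \le \exp(-\mu/8) = \exp\!\big(-\Omega(n\exp(-O(k\epsilon^2)))\big)$, so with the complementary probability there are at least $\mu/2 = \Omega(n\exp(-O(k\epsilon^2)))$ good rows, which is exactly the claimed $r$.

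The substantive input here is the anti-concentration lemma, which is already available; given that, the only real care-point — and the place I would be most careful — is bookkeeping the hidden constants. The anti-concentration estimate requires $\lambda \le c\sigma$, which forces $\epsilon$ below a universal constant, and the final tail bound is only meaningful once $\mu$ is large, i.e.\ once $n$ dominates $\exp(O(k\epsilon^2))$. In the eventual application $k = O(p/\epsilon^2)$, so $k\epsilon^2 = O(p)$ and $\mu = \Omega(n\,e^{-O(p)})$, which is large for $n \ge n_0(p)$; I would make sure the constant hidden in $O(k\epsilon^2)$ in this lemma is the \emph{same} one that later governs the union bound over the $\binom{n}{k}$ choices of $S$, so that the two competing exponents can be balanced in the final theorem.
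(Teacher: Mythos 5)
Your proposal is correct and follows essentially the same route as the paper's proof: apply the anti-concentration bound of Lemma~\ref{lemma:bin-tail-bound} with $\rho = 1/2$ and deviation $\lambda\sigma = \epsilon k$ (the paper writes $\lambda = 4\epsilon\sigma$, which equals your $2\epsilon\sqrt{k}$), use symmetry of $\mathrm{Bin}(k,1/2)$ to isolate the upper tail, and then run a Chernoff bound over the $n$ independent rows. Your added care about matching the hidden constant in $O(k\epsilon^2)$ with the one used in the later union bound is a worthwhile bookkeeping point that the paper's terser proof leaves implicit.
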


\begin{proof}
Applying Lemma \ref{lemma:bin-tail-bound} with $\rho = 1/2$ and $\lambda = 4 \epsilon
\sigma$ we obtain that the probability that a row is at least
$k\left(\frac{1}{2}+\epsilon\right)k$ $+1$'s (or $-1$'s, due to symmetry) is at least $\exp(-O(k \epsilon^2))$.
So in expectation, there are $n \exp(-O(k \epsilon^2))$ good rows, so the
result follows by applying the Chernoff bounds with $\rho = \exp(-O(k \epsilon^2))$
and $\lambda = c \sigma$.
\end{proof}

With high probability there will be at least
$r = \Omega(n \exp(-O(k \epsilon^2))$ good rows for the $y$-player
to play. We want to argue that if the $y$-player
can find $r$ good rows, then he can play $y_i = r^{-1/q}$ for each good row $i$,
and $0$ otherwise, and he will leave an $x$-player restricted to choosing only
columns from a subset $S$ without any good option to play.

\begin{lemma}\label{lemma:many-plus-ones}
Let $S$ be a fixed subset of columns of $A$, and let $A_S$ be the matrix whose columns
are the columns of $A$ that belong to $S$. Conditioning on $A_S$ having $r$ good rows, 
with probability at least $1-k\exp\left(-\Omega(r \epsilon^2)\right)$,
every column in $S$ contains at least $r(1/2+\epsilon/2)$  $+1$'s in the $r$ good rows.
\end{lemma}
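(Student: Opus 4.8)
The plan is to condition on the set $R$ of $r$ good rows, and then use the fact that, once we restrict attention to the columns in $S$, the entries of $A$ in those $r\times k$ positions are \emph{still} independent uniform $\pm 1$ variables. The subtlety is that ``good'' was defined as a property of each row that depends only on the $S$-entries of that row, so conditioning on which rows are good does change the distribution of the entries within a good row: a good row is one with at least $(1/2+\epsilon)k$ ones among its $k$ $S$-coordinates. So I cannot pretend those entries are unconditioned. Instead I would argue as follows. Fix a column $j\in S$. For each good row $i\in R$, let $Z_{ij}=\mathbf{1}[A_{ij}=+1]$. I want to lower bound $\sum_{i\in R} Z_{ij}$, and show it exceeds $r(1/2+\epsilon/2)$ with probability $1-\exp(-\Omega(r\epsilon^2))$; then a union bound over the $k$ columns of $S$ gives the stated $k\exp(-\Omega(r\epsilon^2))$.

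\emph{Key step: the conditional bias of a single entry in a good row.} The heart of the matter is that even after conditioning a row to be good, the conditional probability that a fixed coordinate of that row is $+1$ is at least $1/2 + \Omega(\epsilon)$ — in fact I expect it to be close to $1/2+\epsilon$, roughly the average excess. More precisely, if $B\sim\mathrm{Bin}(k,1/2)$ counts the $+1$'s in a row's $S$-coordinates, then conditioned on $\{B\ge (1/2+\epsilon)k\}$ we have $\E[B\mid B\ge(1/2+\epsilon)k]\ge (1/2+\epsilon)k$, so by symmetry over the $k$ coordinates each coordinate is $+1$ with conditional probability $\ge 1/2+\epsilon$. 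This is a clean, exact statement: $\P[A_{ij}=+1\mid \text{row }i\text{ good}] = \E[B\mid B\ge(1/2+\epsilon)k]/k \ge 1/2+\epsilon$.

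\emph{Independence across good rows.} Different rows of $A$ are independent, and the event ``row $i$ is good'' depends only on row $i$; moreover, conditioned on the set $R$ of good rows being exactly some prescribed set, the rows indexed by $R$ are independent of each other, each distributed as a uniform $\pm1$ row conditioned on being good. Hence, for fixed $j\in S$, the variables $(Z_{ij})_{i\in R}$ are independent, each with mean $\ge 1/2+\epsilon$. Applying the Chernoff bound (Lemma~\ref{lemma:bin-tail-bound}, the concentration direction) to $\sum_{i\in R} Z_{ij}$ with deviation $\epsilon r/2$ below its mean $\ge(1/2+\epsilon)r$ yields
\[
\P\!\left[{\textstyle\sum_{i\in R}} Z_{ij} < r(1/2+\epsilon/2)\right] \le \exp\!\left(-\Omega(r\epsilon^2)\right),
\]
since the relevant variance is $\Theta(r)$ and the deviation is $\Theta(\epsilon r)=\Theta(\epsilon\sqrt{r}\cdot\sqrt{r})$, i.e. $\lambda=\Theta(\epsilon\sqrt r)$ standard deviations, giving $\exp(-\Omega(\lambda^2))=\exp(-\Omega(r\epsilon^2))$. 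A union bound over the $|S|=k$ columns completes the proof.

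\emph{Main obstacle.} The step I expect to require the most care is making the conditioning argument rigorous: ``good'' is a per-row event that correlates the $k$ $S$-entries within a row, so one must be precise that (i) conditioning on the \emph{identity} of the good rows leaves those rows mutually independent, and (ii) within a good row, each individual $S$-coordinate has conditional $+1$-probability at least $1/2+\epsilon$ — which follows from symmetry and the defining inequality of ``good'', but should be stated explicitly. Everything after that is a routine Chernoff-plus-union-bound computation.
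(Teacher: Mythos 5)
Your proof is correct and follows essentially the same route as the paper's: both arguments reduce to showing that each $S$-entry of a good row is $+1$ with conditional probability at least $\tfrac12+\epsilon$, then apply a Chernoff bound across the $r$ (conditionally independent) good rows and a union bound over the $k$ columns. The only cosmetic difference is that the paper justifies the per-entry bias via a two-phase ``sample, then shuffle each row's $S$-entries'' resampling device, whereas you derive it directly from exchangeability together with $\E\left[B \mid B\geq(1/2+\epsilon)k\right]\geq(1/2+\epsilon)k$.
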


\begin{proof}
Sample matrix $A$ according to the following procedure: in the first phase, sample
each entry of $A$ uniformly and independently from $\{-1,+1\}$. In the second
phase, for each row, shuffle the entries in $S$ (i.e. for each row, sample a
random permutation of $S$ and apply to the entries corresponding to those
columns). In the first phase we can decide which
rows are good, call those $R$. Conditioning on the first phase, and fixing a
column $j \in S$, the entries $A_{ij}$ for $i \in R$ are independent and uniform from
$\{-1, +1\}$;  the probability of $A_{ij}$ being $1$ is at least $\frac{1}{2} +
\epsilon$, since this entry is a random entry from a good row sampled in the
first phase. The result follows by applying the Chernoff bound with $\rho =
\frac{1}{2} + \epsilon$ and $\lambda = (\sigma \epsilon) / (\rho \cdot (1-\rho))$.
\end{proof}

Now, we combine all the events discussed so far using the union bound:

\begin{lemma}\label{lemma:prob-lemma}
Fix $\epsilon$ and $k$. For sufficiently large $n$, there is
a matrix $A$ such that $V = n^{-1/p} \cdot A$ satisfies $\norm{V \cdot \vec{1}/n}_p
\leq \epsilon$, and for every subset $S$ of $k$ rows, there is a subset $R$ of
$r = \Omega(n \exp(-O(k \epsilon^2)))$ rows such that for all $i \in S$,
$\sum_{j \in R} A_{ij} \geq \epsilon r$.
\end{lemma}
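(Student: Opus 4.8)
The plan is to assemble the three probabilistic ingredients established above (the Khintchine bound on $\norm{V\cdot\one/n}_p$, Lemma~\ref{lemma:good-rows} guaranteeing many good rows, and Lemma~\ref{lemma:many-plus-ones} guaranteeing that each column of $S$ is heavily $+1$ on those good rows) and combine them by a union bound over all $\binom{n}{k}$ subsets $S$ of size $k$. First I would fix the target precision $\epsilon$ and sparsity parameter $k$, and record that Event~1, $\norm{V\cdot\one/n}_p\le\epsilon$, fails with probability at most $\sqrt{p/(n\epsilon^2)}$, which tends to $0$ as $n\to\infty$ with $p,\epsilon,k$ held fixed. Next, for a fixed $S$ of size $k$, I would chain Lemma~\ref{lemma:good-rows} (there are $r=\Omega(n\exp(-O(k\epsilon^2)))$ good rows for $S$, except with probability $\exp(-\Omega(n\exp(-O(k\epsilon^2))))$) with Lemma~\ref{lemma:many-plus-ones} (conditioned on $r$ good rows, every column $j\in S$ has at least $r(1/2+\epsilon/2)$ ones among those rows, except with probability $k\exp(-\Omega(r\epsilon^2))$). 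On the intersection of these two events, for every $i\in S$ we get $\sum_{j\in R}A_{ij}\ge r(1/2+\epsilon/2)-r(1/2-\epsilon/2)=\epsilon r$, which is exactly the desired conclusion for that fixed $S$.

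Then I would take the union bound over the $\binom{n}{k}\le n^k$ choices of $S$. The per-$S$ failure probability is at most $\exp(-\Omega(n\exp(-O(k\epsilon^2))))+k\exp(-\Omega(r\epsilon^2))$ where $r=\Theta(n\exp(-O(k\epsilon^2)))$; both terms are of the form $\exp(-\Omega(n\cdot c))$ for a constant $c=c(k,\epsilon)>0$, so the total failure probability over all $S$ is at most $n^k\exp(-\Omega(n\,c))$. Since $k$ and $c$ are constants independent of $n$, this quantity goes to $0$ as $n\to\infty$: the polynomial factor $n^k$ is crushed by the exponential $\exp(-\Omega(n\,c))$. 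Adding the $O(\sqrt{p/(n\epsilon^2)})$ failure probability of Event~1, the overall failure probability is strictly less than $1$ for all sufficiently large $n$, so by the probabilistic method there exists a matrix $A$ with all the stated properties simultaneously.

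The one point that needs care — and which I expect to be the main obstacle — is keeping the dependence of the constants straight: the number of good rows $r$ itself depends exponentially on $k\epsilon^2$, so the second failure bound $k\exp(-\Omega(r\epsilon^2))$ is a doubly-nested exponential in $n$, and I must verify that even after multiplying by $n^k$ it still vanishes. Since for fixed $k,\epsilon$ the quantity $r$ is $\Theta(n)$ up to a constant factor, $\exp(-\Omega(r\epsilon^2))=\exp(-\Omega(n))$ and the union bound closes comfortably; the subtlety is purely bookkeeping of which parameters are constants versus which grow with $n$. I would state explicitly at the outset that $p,\epsilon,k$ are fixed and $n\to\infty$, so that every ``$\Omega$'' and ``$O$'' hides only absolute constants or functions of $(p,\epsilon,k)$, making the final limit transparent.
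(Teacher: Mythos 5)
Your proposal matches the paper's proof essentially step for step: the same union bound over the $\binom{n}{k}$ supports, the same chaining of Lemma~\ref{lemma:good-rows} with Lemma~\ref{lemma:many-plus-ones}, the same observation that the polynomial factor $n^k$ is overwhelmed by the $\exp(-\Omega(n\cdot c))$ failure probability for fixed $k,\epsilon$, and the same separate treatment of the event $\norm{V\cdot\one/n}_p\leq\epsilon$. Your explicit computation $r(1/2+\epsilon/2)-r(1/2-\epsilon/2)=\epsilon r$ is a small but welcome clarification of a step the paper leaves implicit.
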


\begin{proof}
The proof follows from the probabilistic method.
For each subset $S$, with probability at least $1-\exp(-\Omega(n \exp(-O(k
\epsilon^2))))$ there are $r$ good rows (Lemma \ref{lemma:good-rows}) 
and with probability $1-k\exp\left(-\Omega(r \epsilon^2)\right) 
=  1-k\exp\left(-\Omega(n \epsilon^2 \exp(-O(k \epsilon^2)))\right)$
there are at least $(\frac{1}{2}+\epsilon)r$
many $+1$s in each column corresponding to the $r$ rows (Lemma
\ref{lemma:many-plus-ones}), causing $\sum_{j \in R}
A_{ij} \geq \epsilon r$. The probability that both events occur can be bounded
by $1- O\left(  k\exp\left(-\Omega(n
\epsilon^2 \exp(-O(k \epsilon^2)))\right) \right)$.
Applying the union bound over all ${n \choose k}$ subsets $S$, we get:
$$1-{n \choose k} O\left(  k\exp\left(-\Omega(n
\epsilon^2 \exp(-O(k \epsilon^2)))\right) \right) \geq 1-\exp\left( k \log n - O(
\epsilon^2 n \exp(-O(k\epsilon^2) )) \right) $$
which goes to one as $n \rightarrow\infty$ for any fixed $k$ and $\epsilon$.
Also, as $n \rightarrow \infty$ the probability that  $\norm{V(\frac{1}{n})
\one}_p \leq \epsilon$ also goes to $1$.
\end{proof}

\begin{theorem}[Carath\'{e}odory lower bound] There is a matrix $V$ whose columns
have unit $\ell_p$ norm such that $\norm{V \cdot \vec{1}/n}_p \leq
\epsilon$, and for every $x \in \Delta$, $\ssupp{x} \leq k = O(p / \epsilon^2)$,
$\norm{Vx}_p \geq 2 \epsilon$.
\end{theorem}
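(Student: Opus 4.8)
The plan is to derive the theorem from \lemmaref{lemma:prob-lemma}, which already does all the probabilistic heavy lifting, together with a short H\"older-duality argument that turns ``many $+1$'s concentrated in a few rows'' into a lower bound on $\norm{Vx}_p$. I would run the construction of the preceding lemmas with the skew parameter inflated by a constant factor (say $4\epsilon$ in place of $\epsilon$ in the definition of a \emph{good} row) and with $k = Kp/\epsilon^2$ for an absolute constant $K$ to be pinned down at the end. Then \lemmaref{lemma:prob-lemma} — taking $n$ large enough that, in addition, $\norm{V\one/n}_p \le \epsilon$, which is possible since $\E\norm{V\one/n}_p \le \sqrt{p/n}\to 0$ — supplies an $n\times n$ sign matrix $A$ with $V = n^{-1/p}A$ such that for every $k$-subset $S$ of coordinates there is a set $R$ of $r = \Omega(n\exp(-O(k\epsilon^2)))$ rows with $\sum_{j\in R} A_{ij} \ge 4\epsilon\, r$ for all $i\in S$ (here $j$ indexes rows and $i$ indexes columns, following \lemmaref{lemma:prob-lemma}). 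The columns of $A$ are $\pm1$ vectors in $\R^n$ and so have $\ell_p$-norm $n^{1/p}$; hence the columns of $V$ have unit $\ell_p$-norm and $u := V\one/n$ is an honest convex combination of them.

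Given any $x\in\Delta$ with $\ssupp{x}\le k$ and support $S$, take the corresponding $R$ and play the dual vector $y$ with $y_j = r^{-1/q}$ for $j\in R$ and $y_j = 0$ otherwise, where $\tfrac1p+\tfrac1q=1$. Then $\norm{y}_q = (r\cdot r^{-1})^{1/q}=1$, and
\[
y^\top(Ax) \;=\; r^{-1/q}\sum_{i\in S} x_i \sum_{j\in R} A_{ij} \;\ge\; r^{-1/q}\cdot 4\epsilon\, r \sum_{i\in S} x_i \;=\; 4\epsilon\, r^{1/p},
\]
using $\sum_{i\in S}x_i = 1$ and $r^{1-1/q}=r^{1/p}$. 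Since $\norm{Ax}_p = \max_{\norm{y}_q\le 1} y^\top(Ax)$, this gives $\norm{Ax}_p \ge 4\epsilon\, r^{1/p}$, hence $\norm{Vx}_p = n^{-1/p}\norm{Ax}_p \ge 4\epsilon\,(r/n)^{1/p}$.

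It remains to control the penalty $(r/n)^{1/p}$. From $r/n = \Omega(\exp(-O(k\epsilon^2)))$ we get $(r/n)^{1/p} = \Omega(\exp(-O(k\epsilon^2/p)))$; with $k = Kp/\epsilon^2$ the exponent is $O(K)$, an absolute constant, so choosing $K$ small enough forces $(r/n)^{1/p}\ge \tfrac12$ and hence $\norm{Vx}_p \ge 2\epsilon$ for every $k$-sparse $x\in\Delta$. This is exactly the claimed bound, and it also yields the Carath\'eodory statement: $u$ lies in the convex hull of the columns while $\norm{Vx - u}_p \ge \norm{Vx}_p - \norm{u}_p \ge 2\epsilon - \epsilon = \epsilon$ for every such $x$.

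Since \lemmaref{lemma:prob-lemma} carries the probabilistic argument, the only delicate point here is the order of quantifiers over constants, and this is also where the shape of the bound comes from: the factor $(r/n)^{1/p}$ can be kept bounded away from $0$ precisely when $k\epsilon^2 = O(p)$, i.e. $k = O(p/\epsilon^2)$, which is why the lower bound matches the upper bound of \theoremref{thm:cara-mirror-main}. Concretely I would fix the inflation factor first (to create headroom above $2\epsilon$) and only then shrink $K$ so that the $(r/n)^{1/p}$ loss consumes at most that headroom, and I would double-check that the hypothesis $n\ge n_0(p)$ is strong enough both to force the center term below $\epsilon$ (via $\sqrt{p/n}$) and to make $r\ge 1$.
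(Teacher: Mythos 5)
Your proposal is correct and follows essentially the same route as the paper: invoke \lemmaref{lemma:prob-lemma}, let $R$ be the good-row set associated with the support $S$, play the $\ell_q$-normalized indicator of $R$, and use H\"older duality to obtain $\norm{Vx}_p \geq c\,\epsilon\,(r/n)^{1/p}$, then shrink the constant $K$ in $k\leq Kp/\epsilon^2$ so that $(r/n)^{1/p}$ stays bounded below. Your writeup also quietly repairs two defects in the paper's own argument: the paper asks for $(r/n)^{1/p}\geq 2$, which is impossible since $r\leq n$ --- you instead inflate the skew to $4\epsilon$ so that $(r/n)^{1/p}\geq 1/2$ suffices and the needed headroom is created in advance --- and the paper's dual vector is written $y_i=r^{1/q}$, which does not lie in $\BB_q(1)$; the correct exponent is $-1/q$, as you have it.
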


\begin{proof}
Let $V$ be the matrix obtained in Lemma \ref{lemma:prob-lemma}. From there,
we have that $\norm{V \cdot \vec{1}/n}_p \leq \epsilon$. Now, fix any $x \in
\Delta$ with $\ssupp{x} \leq k$, and let $S$ be the set of columns corresponding to the
support of $x$. Let also $R$ be the set of rows for which $\sum_{j \in R} A_{ij}
\geq \epsilon r$  for all $i \in S$. Now, define $y \in \BB_q(1)$ such that $y_i =
r^{1/q}$ uf $i \in R$, and $y_i = 0$ otherwise:
$$\norm{Vx}_p \geq y^\top V x = n^{-1/p} \cdot (y^\top A) x \geq \frac{r
\epsilon}{r^{1/q} \cdot n^{1/p}} = \epsilon \left(\frac{r}{n}\right)^{1/p}$$
We want to choose the parameters such that $\left(\frac{r}{n}\right)^{1/p} \geq
2$. Substituting $r = \Omega(n \exp(-O(k \epsilon^2)))$:
$$\left(\frac{r}{n}\right)^{1/p} = \exp\left(-O\left(\frac{k
\epsilon^2}{p}\right)\right)$$
If $k \leq C \cdot \frac{p}{\epsilon^2}$ for a suitable constant $C$, we get
$\norm{Vx}_p \geq 2 \epsilon$. 
\end{proof}

\providecommand{\tabularnewline}{\\}

\section{Applications}

The approach presented in the previous sections can be easily generalized
or directly applied to a series of applications. Here we identify
three representative applications to illustrate the usefulness of
our approach. We note that there are many other possible applications
in combinatorial optimization, game theory and machine learning, where
a convex combination is often maintained as a subroutine of the algorithm.

\subsection{Warm-up: fast rounding in polytopes with linear optimization oracles}

The most direct application of our approach is to efficiently round
a point in a polytope whenever it admits a fast linear optimization
oracle. An natural such instance is given by the matroid polytope.
We denote a $n$-element matroid by $\mathcal{M}$ and its rank by
$r$.

\begin{proposition}
There is an algorithm which, given a fractional point $x^{*}$ contained
inside the base polytope of a matroid $\mathcal{M}$, and a norm parameter
$p\geq2$, produces a distribution $\mathcal{D}$ over matroid bases
supported on $O\left(\frac{p\cdot r{}^{2/p}}{\epsilon^{2}}\right)$
points, such that $\left\Vert \mathbb{E}_{x\sim\mathcal{D}}\left[x\right]-x^{*}\right\Vert _{p}\leq\epsilon$.
Furthermore the algorithm requires $O\left(nr^{2/p}p/\epsilon^{2}\right)$
calls to $\mathcal{M}$'s independence oracle.
\end{proposition}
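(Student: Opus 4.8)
The plan is to derive the proposition as a rescaled corollary of \theoremref{thm:cara-mirror-main}. First I would invoke Edmonds' characterization: the base polytope of $\mathcal{M}$ is exactly $\hull\{\mathbf{1}_B : B \text{ a basis of } \mathcal{M}\}$, so any $x^{*}$ in it is automatically a convex combination of base indicators (we never need the combination itself), and each such indicator has exactly $r$ ones, hence $\norm{\mathbf{1}_B}_p = r^{1/p}$. Setting $D = r^{1/p}$ and dividing all vertices and $x^{*}$ by $D$, we land in the regime of \theoremref{thm:cara-mirror-main} (points in $\BB_p(1)$); running its algorithm to precision $\epsilon/D$ produces a multiset of $k = 4(p-1)(D/\epsilon)^2 = O(p\, r^{2/p}/\epsilon^2)$ base indicators whose scaled average is within $\epsilon/D$ of $x^{*}/D$ in $\ell_p$. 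Multiplying back by $D$, the unscaled average $u'$ of these bases satisfies $\norm{u' - x^{*}}_p \le \epsilon$. Taking $\mathcal{D}$ to be the uniform distribution on this multiset gives $\E_{x \sim \mathcal{D}}[x] = u'$, which is the claimed guarantee, with the claimed support size $O(p\, r^{2/p}/\epsilon^2)$.

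For the oracle complexity I would use the remark following \theoremref{cara-mirror}: the only place the algorithm queries the polytope is the linear-optimization step $i = \argmax_i [-(V^\top y)_i]$, i.e.\ finding a vertex minimizing $\langle \cdot, y\rangle$ over the vertex set. For the base polytope this is precisely the problem of computing a minimum-weight basis under the (arbitrary) weight vector $y$ given by the current dual iterate, which the matroid greedy algorithm solves with $O(n)$ calls to the independence oracle (after an $O(n\log n)$ sort). Since \theoremref{thm:cara-mirror-main} runs for $T = O(p\, r^{2/p}/\epsilon^2)$ iterations, the total is $O(n\, p\, r^{2/p}/\epsilon^2)$ independence-oracle calls, as claimed; the remaining per-iteration work (the mirror-map evaluation $\nabla\omega^{*}$ from \propositionref{prop:fenchel_dual_computation} and the dual update) is $O(n)$ arithmetic and never touches the oracle.

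I do not expect a genuine obstacle — this is essentially a packaging of the earlier theorem — but the step requiring the most care is the bookkeeping of the rescaling, in two respects. One must check that scaling by $D = r^{1/p}$ really moves the $\epsilon^{-2}$ dependence to $r^{2/p}\epsilon^{-2}$ (the radius enters squared, and $D^{2} = r^{2/p}$), and one must be careful to report the \emph{oracle-call} count rather than a generic arithmetic running time: the base polytope may have exponentially many vertices, so the explicit-$V$ form of the running-time bound in \theoremref{thm:cara-mirror-main} does not apply, and it is exactly the remark after \theoremref{cara-mirror} (linear optimization over the vertices suffices) that rescues the argument.
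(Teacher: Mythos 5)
Your proof is correct and follows essentially the same route as the paper: apply the mirror-descent Carath\'{e}odory algorithm to the base polytope, use $\norm{\mathbf{1}_B}_p = r^{1/p}$ to get $D^2 = r^{2/p}$ and hence the $O(p\, r^{2/p}/\epsilon^2)$ sparsity, and implement the per-iteration linear-optimization step via the matroid greedy algorithm at $O(n)$ oracle calls. Your write-up is more explicit than the paper's (which just cites Theorem~\ref{cara-mirror} and remarks on the $r^{1/p}$ column norms), but the rescaling bookkeeping you spell out is exactly what is implicit there.
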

\begin{proof}
The result follows from applying Theorem~\ref{cara-mirror} for $x^{*}$ in the convex
hull of the characteristic vectors for matroid bases. Note that each
of these vectors has sparsity $r$ so their $p$ norm is precisely
$r^{1/p}$. Hence we have the desired sparsity for the support of
$\mathcal{D}$. Each iteration requires maximizing a linear function
over the bases of the polytope, which can be done using the standard
greedy algorithm, and requires $O(n)$ calls to the independence oracle.
\end{proof}

Of course, there are other nice polytopes where the existence of an
efficient linear optimization oracle offers advantages. To this aspect,
we mention the $s$-$t$-flow polytope (i.e. the convex hull of all $s$-$t$
paths), whose oracle is implemented with a single shortest path computation.
This enables us to speed up the path stripping subroutine in the Raghavan-Thompson
randomized rounding algorithm for approximating minimum congestion
integral multicommodity flows~\cite{RaghavanT91}. As described in~\cite{RaghavanT91} the
algorithm takes $O(m^{2})$, which can be improved to near linear
time by carefully using link-cut trees~\cite{KangP15}.
By contrast, approximate Carath\'{e}odory provides a lightweight algorithm
for producing an approximate decomposition into integral paths, without
the need of complicated data structures.

\begin{proposition}
There is an algorithm which, given a fractional $s$-$t$-flow $f^{*}$
routing one unit of demand in $G$, and a norm parameter $p\geq2$,
produces a distribution $\mathcal{D}$ over $s-t$-paths supported
on $O\left(\frac{p\cdot n^{2/p}}{\epsilon^{2}}\right)$ points,
such that $\left\Vert \mathbb{E}_{f\sim\mathcal{D}}\left[f\right]-f^{*}\right\Vert _{p}\leq\epsilon$.
Furthermore the algorithm requires $O\left(\frac{p\cdot n{}^{2/p}}{\epsilon^{2}}\right)$shortest
path computations.
\end{proposition}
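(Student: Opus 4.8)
The plan is to transcribe the matroid rounding argument almost verbatim, invoking Theorem~\ref{cara-mirror} (equivalently the concrete algorithm of Theorem~\ref{thm:cara-mirror-main}) with the $s$-$t$-path polytope in place of the matroid base polytope. First I would recall that the $s$-$t$-flow polytope is the convex hull of the $\{0,1\}$-incidence vectors $\mathbf 1_P$ of simple $s$-$t$ paths, and that a flow $f^{*}$ routing one unit of demand lies in it: cancelling any flow-carrying cycles changes neither the value nor increases any edge load, leaving an acyclic unit flow, and the standard flow decomposition then writes $f^{*}=\sum_P \lambda_P \mathbf 1_P$ with $\sum_P \lambda_P = 1$, i.e. $f^{*}\in\hull\{\mathbf 1_P\}$. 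Each $\mathbf 1_P$ has at most $n-1$ nonzero coordinates, hence $\norm{\mathbf 1_P}_p \leq (n-1)^{1/p}\leq n^{1/p}=:D$.

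Next I would rescale into the unit ball: put $\hat v_P = \mathbf 1_P/D$ and $\hat u = f^{*}/D$, so $\hat u \in \hull\{\hat v_P\}\subseteq\BB_p(1)$, and run the algorithm of Theorem~\ref{thm:cara-mirror-main} with target precision $\epsilon/D$. This returns a multiset of $k = 4(p-1)D^{2}/\epsilon^{2}=O(p\,n^{2/p}/\epsilon^{2})$ paths whose average incidence vector $\hat u'$ satisfies $\norm{\hat u'-\hat u}_p\leq \epsilon/D$; multiplying through by $D$, the uniform distribution $\mathcal D$ over these $k$ paths obeys $\norm{\E_{f\sim\mathcal D}[f]-f^{*}}_p\leq\epsilon$, which is exactly the claimed support size and error. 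For the number of oracle calls, I would use the fact (as the remark preceding Theorem~\ref{cara-mirror} notes) that each of the $k$ mirror-descent iterations requires only the linear optimization step $i=\argmax_i[-(V^{\top}y)_i]$, i.e. the simple $s$-$t$ path minimizing $\sum_{e\in P} y_e$ under edge weights $y_e$; rescaling by $D$ does not change this argmax, so the whole algorithm makes $O(p\,n^{2/p}/\epsilon^{2})$ path computations.

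The step needing care is precisely this path oracle. Since the dual iterate $y\in\BB_q(1)$ can have negative coordinates, the subproblem is a shortest-path problem with signed edge weights, and minimizing over \emph{simple} paths with arbitrary signs is in general intractable. I would resolve this by observing that what is actually required is $\min_P \sum_{e\in P} y_e$ over the \emph{vertices} of the path polytope, which coincides with a minimum-cost unit $s$-$t$ flow: provided $y$ induces no negative-weight cycle, Bellman--Ford computes it in polynomial time and the optimum is attained at a simple path; if negative cycles arise one may first cancel them (or simply work in the acyclic setting in which the Raghavan--Thompson rounding is applied). Checking that this oracle indeed supplies the envelope-theorem gradient $\nabla f(y)=u-v_{i}$ demanded by Theorem~\ref{cara-mirror}, and that cycle cancellation does not disturb $\hat u\in\hull\{\hat v_P\}$, is the only genuine subtlety; the remainder is a direct copy of the matroid proof, with $r^{1/p}$ replaced by $n^{1/p}$ and the greedy-basis oracle replaced by a shortest-path oracle.
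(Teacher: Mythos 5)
Your proposal is exactly the paper's intended argument: the paper gives no separate proof for this proposition, treating it as the verbatim analogue of the matroid rounding proposition, with the sparsity-$r$ bound $r^{1/p}$ replaced by the path bound $n^{1/p}$ (hence $D^2=n^{2/p}$), and the greedy independence-oracle step replaced by one shortest-path computation per iteration. Your rescaling, the $k=O(pD^2/\epsilon^2)$ count, and the observation that the argmax is scale-invariant all match.

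The one place where you go beyond the paper --- the linear-optimization oracle over $s$-$t$ paths with the signed dual weights $y$ --- is a genuine subtlety that the paper silently glosses over, and you are right to flag it: minimizing $\sum_{e\in P}y_e$ over \emph{simple} paths with arbitrary signs is NP-hard in general, and Bellman--Ford neither applies when negative cycles exist nor does ``cancelling negative cycles'' make sense for a weight function (as opposed to a flow). The clean repair, which your last sentence gestures at, is to cycle-cancel $f^{*}$ once at the outset so that its support is an acyclic subgraph $G'$, take $V$ to be the incidence vectors of $s$-$t$ paths of $G'$ only (this still contains $f^{*}$ in its convex hull by flow decomposition), and note that linear optimization over paths of a DAG with arbitrary signed weights is a single shortest-path computation in topological order. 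With that substitution your argument is complete; everything else is, as you say, a direct copy of the matroid proof.
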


In the setting of Raghavan-Thompson, fixing $p=\Theta(\log n)$ yields
an approximate path stripping routine that runs in time $\tilde{O}(m/\epsilon^{2})$.

\subsection{Submodular function minimization}

Submodular function minimization is a primitive that has been studied
in combinatorial optimization~\cite{mccormicksurvey}, and that has become increasingly more
popular in machine learning literature~\cite{suborg}. In submodular minimization
we are given a function $f$ defined over the subsets of a finite
ground set $X$, satisfying $f(S)+f(T)\geq f(S\cup T)+f(S\cap T)$
for all $S,T\subseteq X$, and have to find a subset of $X$ minimizing
the value of $f$. Examples of submodular functions include matroid
rank functions, graph cut functions, and entropy functions.

The most theoretically efficient algorithms for the problem require the use of
the ellipsoid algorithm or complex combinatorial procedures. For this reason,
practitioners typically employ the method called \emph{minimum-norm point
algorithm}, devised by
Fujishige~\cite{fujishige1984submodular,fujishige2011submodular,bach2010convex}.
Traditionally this has been implemented using variants of Wolfe's
algorithm~\cite{Wolfe76}, but no provable
convergence analysis has been available until recently. In~\cite{ChakrabartyJK14}
Chakrabarty et al. provided a robust version of Fujishige's theorem,
showing that minimizing the $2$-norm of a point in the base polyhedron
of $f$ to accuracy $1/n^{2}$ yields exact submodular function minimization.
Their approach requires $O\left(\left(n^{5}\cdot\mathcal{T}+n^{7}\right)F^{2}\right)$
time, where $\mathcal{T}$ is the time required to answer a single
function value query to $f$, and $F=\max_{i\in S\subseteq X}|f(S)-f(S\backslash\{i\})|$
is the maximum marginal difference in absolute value.

Our contribution is to show that our algorithm for the approximate
Carath\'{e}odory problem from Section~\ref{sec:linear} can replace Wolfe's
algorithm. This immediately yields an
$\tilde{O}\left(n^{6}F^{2}\cdot\mathcal{T}\right)$
time algorithm for exact submodular function minimization,
and a $\tilde{O}\left(n^{6}F^{2}/k^{4}\cdot\mathcal{T}\right)$
time algorithm for a $k$-additive approximation. The first algorithm
enjoys a better running time than Wolfe's algorithm but has a worse
oracle complexity. We leave it as an open problem if it is possible
to combine ingredients of both algorithms to achieve a running time
of $\tilde{O}\left(n^{5}F^{2}\cdot\mathcal{T}\right)$.

\begin{proposition}\label{prop:cara_submod}
Let $f:2^{X}\rightarrow\mathbb{Z}$ be a submodular function. There
exists an algorithm returning an exact minimizer of $f$ in time $\tilde{O}\left(n^{6}F^{2}\cdot\mathcal{T}\right)$,
respectively a $k$-additive approximation to the minimizer in time
$\tilde{O}\left(n^{6}F^{2}/k^{4}\cdot\mathcal{T}\right)$.
\end{proposition}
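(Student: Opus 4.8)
The plan is to follow the template of Chakrabarty, Jain, and Kothari~\cite{ChakrabartyJK14}, but replace their use of Wolfe's algorithm with our Mirror Descent-based approximate Carath\'{e}odory algorithm from Section~\ref{sec:linear}. Recall that their robust version of Fujishige's theorem states that if one finds a point $z$ in the base polyhedron $B(f)$ with $\norm{z}_2^2 \leq \min_{y \in B(f)} \norm{y}_2^2 + \delta$, then one can recover the exact minimizer of $f$ by thresholding the coordinates of $z$, provided $\delta$ is small enough (roughly $\delta = \Theta(1/n^2)$ suffices when $f$ is integer-valued, since the relevant ``gaps'' in the Fujishige correspondence are at least $1/n$). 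So the algorithmic content reduces to: minimize $\norm{\cdot}_2$ over $B(f)$ to additive accuracy $\epsilon$ (in the norm), where $\epsilon$ needs to be $O(1/(n \cdot \text{something}))$.

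First I would set up the minimum-norm problem as an instance amenable to Theorem~\ref{cara-mirror}. The base polyhedron $B(f)$ is a polytope whose vertices are the $n!$ extreme points obtained by the greedy algorithm over orderings of $X$; each vertex has $\ell_\infty$-radius at most $nF$, hence after scaling by $1/(nF)$ it sits in $\BB_2(1) \subseteq \BB_p(1)$ (we will use $p=2$ here, or a slightly larger $p$ if it helps, but $p=2$ is the natural choice for the $2$-norm objective). The point $u$ we want to approximate is the origin: we want the point of $B(f)$ closest to the origin, so we run the saddle-point formulation $\min_{x \in \Delta}\max_{y \in \BB_2(1)} y^\top V x$ where the columns of $V$ are the (scaled) greedy vertices. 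Crucially, the linear optimization oracle $i = \argmax_i [-(V^\top y)_i]$ over vertices of $B(f)$ is exactly a greedy computation: given the direction $-y$, sort coordinates and build the greedy vertex, which takes $O(n \log n)$ function-value queries, i.e.\ $O(n\mathcal{T} \cdot \text{polylog})$ time. By the Remark after Theorem~\ref{cara-mirror}, we never need to write down $V$ explicitly. Each Mirror Descent iteration also updates the dual point $y_t \in \BB_2(1)$, which by Proposition~\ref{prop:fenchel_dual_computation} (with $p=q=2$) is a straightforward projection/scaling costing $O(n)$ arithmetic.

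Then I would count iterations and translate the accuracy requirement. By Theorem~\ref{thm:cara-mirror-main} applied with $p = 2$, to get an error of $\epsilon'$ in the (scaled) problem we need $O(1/\epsilon'^2)$ iterations. Unscaling, an error of $\epsilon'$ in the normalized problem corresponds to an error of $nF\epsilon'$ in $\norm{\cdot}_2$ on $B(f)$; but what Chakrabarty et al.\ actually need is an additive guarantee on $\norm{z}_2^2$ rather than $\norm{z}_2$, and one must be a little careful: the output $\frac1k\sum_t v_{i(t)}$ is a genuine convex combination of vertices, hence genuinely in $B(f)$, and $\norm{Vx}_2 \leq \norm{Vx^*}_2 + nF\epsilon'$ where $x^*$ is optimal gives $\norm{Vx}_2^2 \leq \norm{Vx^*}_2^2 + 2nF\cdot\norm{Vx^*}_2\cdot\epsilon' + (nF\epsilon')^2 \leq \norm{Vx^*}_2^2 + O(n^2F^2\epsilon')$ using $\norm{Vx^*}_2 \leq nF$. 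To drive this below $\delta = \Theta(1/n^2)$ we therefore need $\epsilon' = \Theta(1/(n^4 F^2))$, giving $O(n^8 F^4)$ iterations --- which is worse than claimed. This tells me the right move is to use the far-from-boundary refinement (Theorem~\ref{thm:improved_bd}) or, more simply, to observe that the Mirror Descent guarantee can be stated directly in terms of the squared-norm objective: running Mirror Descent on $f(y) = \max_{x\in\Delta} y^\top(u-Vx)$ bounds $\norm{\frac1T\sum_t \nabla f(y_t)}_2$, and by the same zero-sum-game/duality reasoning used in Theorem~\ref{cara-mirror} one gets that the squared distance $\norm{V\bar x}_2^2 - \norm{Vx^*}_2^2$ is controlled directly by $1/T$ (not $1/\sqrt{T}$) because we are measuring a smooth objective --- so $T = O(n^2 F^2 \cdot n^2) = O(n^4 F^2)$ suffices for $\delta = 1/n^2$, matching the claimed $\tilde O(n^4 F^2)$ iterations; multiplying by the $\tilde O(n^2\mathcal{T})$ cost per iteration yields $\tilde O(n^6 F^2 \mathcal{T})$ total. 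For the $k$-additive approximation, the Fujishige correspondence only requires $\delta = \Theta(k^2/n^2)$ rather than $\Theta(1/n^2)$, cutting the iteration count by a factor $k^2$ in each of two places (or $k^4$ once squared appropriately), giving $\tilde O(n^6 F^2/k^4 \cdot \mathcal{T})$.

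The main obstacle I expect is exactly this accounting: pinning down the precise accuracy $\delta$ needed in the squared $2$-norm for the thresholding step of~\cite{ChakrabartyJK14} to recover the exact (resp.\ $k$-approximate) minimizer, and then showing that our Mirror Descent converges to that accuracy with only $\tilde O(n^4 F^2)$ (resp.\ $\tilde O(n^4 F^2/k^4)$) iterations. This requires either invoking a squared-norm variant of the Mirror Descent convergence bound (valid because the objective here is smooth, so we get an $O(1/T)$ rate on function value, not the generic $O(1/\sqrt T)$ Lipschitz rate), or bootstrapping via Theorem~\ref{thm:improved_bd}; I would go with whichever gives the cleanest constants. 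Everything else --- the greedy oracle, the $O(n)$ cost of the mirror step, the fact that the output is a bona fide point of $B(f)$ --- is routine, and the final running time is the product of the iteration count and the per-iteration cost $\tilde O(n^2\mathcal{T})$ (the $n\log n$ from sorting times $n$ queries, or simply $n$ queries if one maintains the marginal-value table incrementally).
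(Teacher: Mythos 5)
Your plan is the same as the paper's: replace Wolfe's algorithm in the Chakrabarty--Jain--Kothari framework by the Mirror Descent routine of Section~\ref{sec:linear}, use the robust Fujishige Theorem~\ref{thm:chakra_submod} with $\delta=(k/n)^2$, and implement the linear optimization step via Edmonds' greedy algorithm (Lemma~\ref{lem:edmonds_base_poly}). You also correctly note that the oracle never needs $V$ written down explicitly. So the structure matches. What differs is the bookkeeping, where you have two compensating errors, and one spot where you raise a legitimate concern but propose a fix that does not actually work.

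On the bookkeeping: the vertices of $\mathcal{B}_f$ output by the greedy procedure have coordinates $f([i])-f([i-1])$, so their $\ell_\infty$-norm is at most $F$ (not $nF$), and their $\ell_2$-norm is at most $\sqrt{n}F$; the paper's ``squared $2$-norm $\leq nF^2$'' is the right normalization. For the per-iteration cost, the greedy step sorts once ($O(n\log n)$) and then makes $n$ queries to $f$, so it costs $\tilde O(n\mathcal{T})$, not $\tilde O(n^2\mathcal{T})$: the $n\log n$ and the $n\mathcal{T}$ add, they do not multiply. The paper's iteration count is $O(n^5 F^2/k^4)$ (i.e.\ $nF^2/\epsilon^2$ with $\epsilon=(k/n)^2$), not your $O(n^4 F^2)$; your overcount of the per-iteration cost by a factor of $n$ coincidentally cancels your undercount of iterations by a factor of $n$, so the product lands on the correct $\tilde O(n^6 F^2/k^4\cdot\mathcal{T})$, but the intermediate quantities are both wrong relative to the paper.

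Your concern about translating a bound on $\Vert V\bar x\Vert_2$ into the squared-norm/Wolfe-gap condition $\bar x^\top(\bar x - z)\leq(k/n)^2$ required by Theorem~\ref{thm:chakra_submod} is a genuine subtlety -- the paper's own proof glosses over it by simply saying ``solve to accuracy $(k/n)^2$.'' However, the fix you propose does not go through: the function being run through Mirror Descent is the dual $f(y)=\max_{x\in\Delta} y^\top(u-Vx)$, which is a pointwise maximum of linear functions and hence piecewise-linear, not smooth, so there is no generic $O(1/T)$ rate for it. The $O(1/T)$ rate you invoke would apply to a smooth \emph{primal} objective (e.g.\ running Frank--Wolfe on $\tfrac12\Vert Vx\Vert_2^2$ over $\Delta$), which is a different algorithm from the one used here. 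Similarly, \theoremref{thm:improved_bd} is about points far inside the polytope and is not applicable to the min-norm point, which is generically on the boundary of $\mathcal{B}_f$. So the proposal identifies the right gap, but the two patches offered are not the ones that close it.
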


Before proving the statement, we define the base polyhedron $\mathcal{B}_{f}$
of a submodular function, and state the robust Fujishige's theorem.

\begin{definition}
Given a submodular function $f$ defined on subsets of a ground set
$X$, its base polyhedron is defined as $\mathcal{B}_{f}=\left\{ x\in\mathbb{R}^{n}:x(A)\leq f(A),\forall A\subset X,\textnormal{ and }x(X)=f(X)\right\} $.
\end{definition}

The following lemma, due to Edmonds, is required for an efficient
implementation of the step problem in mirror descent. It turns out
that in our setting, the step problem requires optimizing a linear
function over the base polyhedron $\mathcal{B}_{f}$, which can be
done efficiently using only $O(n)$ queries to $f$.

\begin{lemma}[Linear optimization over the submodular base polyhedron]\label{lem:edmonds_base_poly}
Given $c\in\mathbb{R}^{n}$, renumber the indices such that $c_{i}\leq c_{i+1}$ for
all $i$. Setting $q_{i}=f([i])-f([i-1])$ yields the solution
to the optimization problem $\min\left\{  c^\top x :x\in\mathcal{B}_{f}\right\}$.
\end{lemma}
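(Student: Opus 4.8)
The plan is to verify two things: that the proposed point $q$ is feasible, i.e. $q \in \mathcal{B}_f$, and that it minimizes $c^\top x$ over $\mathcal{B}_f$. Throughout I will assume (as is standard for base polyhedra) that $f(\emptyset)=0$, and write $[i]=\{1,\dots,i\}$ with $[0]=\emptyset$, and $x(A)=\sum_{i\in A}x_i$.

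First I would check feasibility. The equality constraint $q(X)=f(X)$ is immediate, since $\sum_{i=1}^n q_i = \sum_{i=1}^n \bigl(f([i])-f([i-1])\bigr)$ telescopes to $f([n])-f([0])=f(X)$. For the inequality constraints, fix $A=\{i_1<i_2<\dots<i_\ell\}\subseteq X$. Applying submodularity to the nested pair $\{i_1,\dots,i_{j-1}\}\subseteq [i_j-1]$ together with the element $i_j\notin[i_j-1]$ gives $f([i_j])-f([i_j-1]) \le f(\{i_1,\dots,i_j\})-f(\{i_1,\dots,i_{j-1}\})$, i.e. $q_{i_j}\le f(\{i_1,\dots,i_j\})-f(\{i_1,\dots,i_{j-1}\})$. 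Summing over $j=1,\dots,\ell$ telescopes the right-hand side to $f(A)-f(\emptyset)=f(A)$, so $q(A)\le f(A)$. Hence $q\in\mathcal{B}_f$.

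Next I would prove optimality by an Abel summation (summation-by-parts) argument that rewrites the objective in terms of the prefix sums $x([k])$, which are exactly the quantities the constraints control. Writing $c_i = c_n - \sum_{k=i}^{n-1}(c_{k+1}-c_k)$ and exchanging the order of summation, for any $x\in\mathcal{B}_f$ one obtains
\[
  c^\top x = c_n\, x(X) - \sum_{k=1}^{n-1}(c_{k+1}-c_k)\, x([k]).
\]
Because the indices are sorted so that $c_{k+1}-c_k\ge 0$, and because every $x\in\mathcal{B}_f$ satisfies $x(X)=f(X)$ and $x([k])\le f([k])$, each term $-(c_{k+1}-c_k)x([k])$ is smallest when $x([k])=f([k])$; this yields the lower bound $c^\top x \ge c_n f(X) - \sum_{k=1}^{n-1}(c_{k+1}-c_k) f([k])$. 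Finally, the point $q$ attains this bound with equality, since $q([k])=\sum_{i=1}^k q_i = f([k])$ for every $k$ (again by telescoping). Therefore $c^\top q \le c^\top x$ for all $x\in\mathcal{B}_f$, which is the claim.

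The only subtle point is the feasibility verification: one must choose the right nested sets when invoking submodularity — pairing $\{i_1,\dots,i_{j-1}\}$ with the prefix $[i_j-1]$ (which contains it) rather than with an arbitrary superset — so that the marginals telescope correctly to $f(A)$. The optimality half is a routine rearrangement once the objective is expressed through prefix sums, so no genuine obstacle remains there; one should just remember to carry the normalization $f(\emptyset)=0$ (or keep an additive $-f(\emptyset)$ term) consistently.
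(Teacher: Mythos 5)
Your proof is correct. The paper itself states this lemma as a known result of Edmonds and gives no proof, so there is nothing to compare against; what you have written is the standard argument: feasibility of the greedy point $q$ via the diminishing-marginals form of submodularity applied to the nested pair $\{i_1,\dots,i_{j-1}\}\subseteq[i_j-1]$, and optimality via Abel summation, using that the sorted order makes the coefficients $c_{k+1}-c_k$ nonnegative so that the prefix-sum constraints $x([k])\le f([k])$ should all be tight. Both halves check out, including the direction of the submodularity inequality and the sign analysis in the summation-by-parts step.
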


\begin{theorem}[robust version of Fujishige's
Theorem \cite{ChakrabartyJK14}]\label{thm:chakra_submod}
Let $x\in\mathcal{B}_{f}$ such
that $ x^\top(x-z)\leq(k/n)^{2}$ for all $z\in\mathcal{B}_{f}$.
Then $x$ can be used to recover a $k$-additive approximation to
the minimizer of $f$ in $\tilde{O}(n)$ time.\end{theorem}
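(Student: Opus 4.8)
The plan is to follow Chakrabarty, Jain and Kothari~\cite{ChakrabartyJK14}: the set we return will be one of the $n+1$ \emph{prefix sets} obtained by sorting the coordinates of $x$. Relabel so that $x_1 \le x_2 \le \cdots \le x_n$ and set $S_i = \{1,\dots,i\}$, with $S_0 = \emptyset$. The algorithm sorts $x$, queries $f$ on each of $S_0,\dots,S_n$, and returns the one minimizing $f(S_i)$; this is $O(n\log n)$ comparisons plus $O(n)$ oracle calls, i.e.\ $\tilde O(n)$ time counting an oracle call as unit cost, so the only thing to prove is correctness: $\min_i f(S_i) \le f^{*} + k$, where $f^{*} = \min_{S\subseteq X} f(S)$ and we normalize $f(\emptyset)=0$.

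For a lower bound on $f^{*}$, since $x\in\mathcal B_f$ we have $f(A)\ge x(A)$ for every $A$, hence $f^{*}\ge \min_A x(A)=x(S_m)$ where $m=|\{i:x_i<0\}|$ (in the sorted order the strictly negative coordinates come first). For the matching upper bound I would bring in the greedy vertex $q\in\mathcal B_f$ produced by Edmonds' algorithm (Lemma~\ref{lem:edmonds_base_poly}) for the order $x_1\le\cdots\le x_n$; it makes every prefix tight, $q(S_i)=f(S_i)$. Setting $d_i := f(S_i)-x(S_i)=q(S_i)-x(S_i)\ge 0$ (with $d_0=d_n=0$, the latter because $x(X)=f(X)$ on the base polyhedron) and using $x_i-q_i=d_{i-1}-d_i$, a summation by parts gives the identity
\[
x^{\top}(x-q)\;=\;\sum_{i=1}^{n-1}(x_{i+1}-x_i)\,d_i .
\]
Every term on the right is nonnegative, and the hypothesis applied with $z=q$ bounds the left side by $(k/n)^2$.

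It remains to extract from $\sum_i (x_{i+1}-x_i)d_i \le (k/n)^2$ a single index $i^{\star}$ with $f(S_{i^{\star}})\le f^{*}+k$; by the lower bound on $f^{*}$ it suffices to get $x(S_{i^{\star}})-x(S_m)\le k/2$ together with $d_{i^{\star}}\le k/2$. The first holds for \emph{every} prefix whose newly added coordinates all lie in the window $[-k/(2n),0)$: if $x_{i+1},\dots,x_m\in[-k/(2n),0)$ then $x(S_i)-x(S_m)=\sum_{j=i+1}^m|x_j|\le n\cdot\tfrac{k}{2n}=k/2$. Let $a$ be the largest index with $x_a<-k/(2n)$, so that $S_a,S_{a+1},\dots,S_m$ are exactly these ``window'' prefixes. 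If all of $d_a,\dots,d_m$ exceeded $k/2$, then $\sum_{i=a}^m (x_{i+1}-x_i)d_i > \tfrac k2\,(x_{m+1}-x_a) > \tfrac k2\cdot\tfrac{k}{2n}=\tfrac{k^2}{4n}$, contradicting $(k/n)^2=k^2/n^2$ once $n>4$; hence some $d_{i^{\star}}$ with $i^{\star}\in\{a,\dots,m\}$ is at most $k/2$, and then $f(S_{i^{\star}})=x(S_{i^{\star}})+d_{i^{\star}}\le x(S_m)+k\le f^{*}+k$.

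The main obstacle is precisely this last extraction: the weights $x_{i+1}-x_i$ in the key identity can be arbitrarily small when the coordinates of $x$ cluster near $0$, so one cannot simply average the $d_i$. The window width $\Theta(k/n)$ must be tuned so that simultaneously (i) the coordinates inside the window contribute at most $k/2$ to $x(S_i)-x(S_m)$, and (ii) the window still spans total weight $x_{m+1}-x_a\ge k/(2n)$ because of the ``jump'' across its left endpoint, which is what forces the weighted average of the $d_i$ down to $\lesssim (k/n)^2/(k/n)=k/n\ll k$. The remaining loose ends are routine: breaking ties in the sort consistently (Edmonds' lemma is indifferent to the choice), the degenerate case $m=n$ (all coordinates negative, where $S_n=X$ is exactly optimal), the case $a=0$ (no coordinate below $-k/(2n)$, where $x(S_m)\in[-k/2,0]$ so $f^{*}\ge-k/2$ and $S_0=\emptyset$ already suffices), and small $n$.
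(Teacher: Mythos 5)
The paper does not actually prove this statement: it imports it wholesale from Chakrabarty, Jain and Kothari~\cite{ChakrabartyJK14} and uses it as a black box, so there is no in-paper proof to compare against. Your reconstruction is a correct, self-contained derivation in the spirit of the original. The two pillars are sound: the lower bound $f^{*}\ge x(S_m)$ from $x\in\mathcal{B}_f$, and the Abel-summation identity $x^{\top}(x-q)=\sum_{i=1}^{n-1}(x_{i+1}-x_i)d_i$ with the greedy vertex $q$, whose terms are all nonnegative because $x$ is sorted and each $d_i=f(S_i)-x(S_i)\ge 0$ with $d_0=d_n=0$. The extraction step is also right, and you correctly identify it as the only delicate point: one cannot average the $d_i$ directly since the weights $x_{i+1}-x_i$ may vanish, and your window $[-k/(2n),0)$ is calibrated exactly so that (i) the in-window coordinates shift $x(S_i)$ away from $x(S_m)$ by at most $k/2$ while (ii) the jump $x_{m+1}-x_a>k/(2n)$ guarantees enough total weight to force some $d_{i^{\star}}\le k/2$, giving $f(S_{i^{\star}})\le x(S_m)+k\le f^{*}+k$; the numerics ($k^2/(4n)$ versus $k^2/n^2$, valid for $n>4$) check out. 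The loose ends you flag (the normalization $f(\emptyset)=0$, the cases $m=n$, $a=0$, and constant $n$) are genuinely routine and you dispose of them correctly. This is a complete and correct proof of a result the paper only cites.
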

\begin{proof}[Proof of Proposition~\ref{prop:cara_submod}]
Consider the problem $\min_{x\in\Delta}\left\Vert Vx\right\Vert _{2}$
where the columns of $V$ are vertices of $\mathcal{B}_{f}$. Note
that the squared 2-norm of the columns of $V$ is upper bounded by
$nF^{2}$. Using the algorithm from Section~\ref{sec:linear} to solve the problem
to accuracy $(k/n)^{2}$ yields $O(n^{5}F^{2}/k^{4})$ iterations,
each of them involving a linear optimization problem over the submodular
base polyhedron. As seen in Lemma~\ref{lem:edmonds_base_poly}, this requires sorting the vector
of weights, and querying the submodular function $O(n)$ times. Hence
the algorithm runs in $\tilde{O}(n^{6}F^{2}\mathcal{T}/k^{4}$). Setting
$k$ to any constant smaller than $1$ yields an exact solution, since
by Theorem~\ref{thm:chakra_submod}, we obtain a set $S$ satisfying $f(S)\leq f(S^{*})+k<f(S^{*})+1$
and $f$ takes values over $\mathbb{Z}$.
\end{proof}

\subsection{SVM training}
Support vector machines (SVM) are an extremely popular
classification method, and have found ample usage in machine learning, with
applications ranging from finance to neuroscience. In the era of big data it is
crucial for any such method to be able to train on huge datasets. While a
number of implementations (\textsc{Liblinear}~\cite{FanCHWL08},
P-packSVM~\cite{ZhuCWZC09}, Pegasos~\cite{ShalevShwartzSSC11}) achieve excellent
convergence rates in the case of linear SVM's, handling arbitrary kernels raises
a significantly harder problem. \textsc{LibLinear} and Pegasos achieve
$O(\log(1/\epsilon))$, respectively $O(1/\epsilon)$ convergence rate, but cannot be extended beyond linear kernels. The $\epsilon$ dependence for P-packSVM
scales as $O(1/\epsilon)$, but it requires knowing the Cholesky factorization of
the kernel matrix in advance. In our case, a simple extension of the method
described in Section~\ref{sec:linear} gives $O(1/\epsilon^2)$ convergence, while
only requiring matrix-vector multiplications involving the kernel matrix. So
Cholesky factorization is no longer required, and the matrix does not need to be
stored explicitly. In the case of linear SVM's, our method runs in nearly linear
time.

Our approach is inspired from a reformulation of the training problem of
Kitamura, Takeda, and Iwata~\cite{KitamuraTI14}, who present a method for SVM
training based on Wolfe's algorithm. Their algorithm relies on a dual
formulation introduced by Sch\"{o}lkopf et al.~\cite{ScholkopfSWB00} which can
be easily reformulated as a convex problem over a product of two convex sets.
More specifically, we are given empirical data
$(x_{i,}y_{i})\in\mathcal{X}\times\left\{ \pm1\right\} $, $1\leq i\leq n$, along
with a function that maps features to a Hilbert space
$\Phi:\mathcal{X}\rightarrow\mathbb{\mathcal{H}}$, which determines a kernel
function $k(x,y)=\langle\Phi(x),\Phi(y)\rangle$.  Let $K\in\mathbb{R}_{n\times
n}$, where $K_{ij}=k(x_{i,}x_{j})$, $E_{+}=\left\{ e_{i}:y_{i}=+1\right\} $,
$E_{-}=\left\{ e_{i}:y_{i}=-1\right\}$.

In~\cite{KitamuraTI14}, the $\nu$-SVM problem is reformulated as:

\begin{align*}
\min                              &\quad \left(\lambda_{+}-\lambda_{-}\right)^\top K\left(\lambda_{+}-\lambda_{-}\right)\\
\textnormal{subject to} &\quad \lambda_{+}\in \RCH_{\eta}(E_{+})\\
                                    &\quad \lambda_{-}\in \RCH_{\eta}(E_{-})
\end{align*}

where $\eta=\frac{2}{\nu n}$ and $\RCH_{\eta}(A):=\left\{ \sum_{a\in
A}\lambda_{a}a | 0\leq\lambda_{a}\leq\eta,\sum_{a\in A}\lambda_{a}=1\right\} $
is the \emph{restricted convex hull} of set $A$.

Our approach to solve this problem will be to rephrase it as a saddle point
problem (similar to what was done for the approximate Carath\'{e}odory problem)
and apply Mirror Descent, with a suitable Mirror Map, to solve the dual.
Before doing that, we introduce a few useful definitions and facts:

\begin{definition}
Let $K$ be a symmetric positive definite matrix. Then $\left\Vert x\right\Vert _{K}:=\sqrt{x^{\top}Kx}$.
\end{definition}

\begin{proposition}
The dual norm of $\left\Vert x\right\Vert _{K}$ is $\left\Vert x\right\Vert
_{K^{-1}}$.
In other words $\left\Vert x\right\Vert _{K}=\max_{y:\left\Vert y\right\Vert
_{K^{-1}}\leq1}\left\langle y,x\right\rangle $.\end{proposition}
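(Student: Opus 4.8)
The plan is to reduce everything to the self-duality of the Euclidean norm by the substitution $x \mapsto K^{1/2}x$. Since $K$ is symmetric positive definite, it admits a symmetric positive definite square root $K^{1/2}$ with inverse $K^{-1/2}$; this is also what guarantees that $\norm{\cdot}_K$ and $\norm{\cdot}_{K^{-1}}$ are genuine norms. First I would record the identities $\norm{x}_K = \sqrt{x^\top K x} = \norm{K^{1/2}x}_2$ and, in exactly the same way, $\norm{y}_{K^{-1}} = \norm{K^{-1/2}y}_2$.

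Next I would rewrite the bilinear form as $\langle y, x\rangle = y^\top x = (K^{-1/2}y)^\top (K^{1/2}x)$ and apply the Cauchy--Schwarz inequality (i.e.\ H\"older with $p=q=2$), obtaining $\langle y, x\rangle \le \norm{K^{-1/2}y}_2 \cdot \norm{K^{1/2}x}_2 = \norm{y}_{K^{-1}}\cdot\norm{x}_K$. In particular every $y$ with $\norm{y}_{K^{-1}} \le 1$ satisfies $\langle y,x\rangle \le \norm{x}_K$, so $\max_{y : \norm{y}_{K^{-1}}\le 1}\langle y, x\rangle \le \norm{x}_K$.

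For the reverse inequality it suffices to exhibit a feasible $y$ attaining the bound: if $x=0$ both sides vanish, and otherwise take $y = Kx/\norm{x}_K$, so that $\norm{y}_{K^{-1}}^2 = y^\top K^{-1} y = (x^\top K x)/\norm{x}_K^2 = 1$ (hence $y$ is feasible) and $\langle y, x\rangle = (x^\top K x)/\norm{x}_K = \norm{x}_K$. This yields $\norm{x}_K = \max_{y:\norm{y}_{K^{-1}}\le 1}\langle y,x\rangle$, which is precisely the statement that $\norm{\cdot}_{K^{-1}}$ is the dual norm of $\norm{\cdot}_K$; running the identical argument with $K$ replaced by $K^{-1}$ gives the converse direction. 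There is no real obstacle here: the only point needing a moment's care is verifying that the proposed maximizer $Kx/\norm{x}_K$ actually lies in the unit ball of $\norm{\cdot}_{K^{-1}}$, which the displayed computation confirms.
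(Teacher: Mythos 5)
Your proof is correct, and it takes a somewhat different (and more complete) route than the paper's. The paper simply asserts, via Lagrange multipliers, that the maximum of $y^\top x$ over the unit $\norm{\cdot}_{K^{-1}}$-ball is attained at $y = Kx/\norm{Kx}_{K^{-1}} = Kx/\sqrt{x^\top K x}$, and then checks that this $y$ yields the value $\norm{x}_K$; it never explicitly verifies that no other feasible $y$ does better. You instead establish the upper bound rigorously by the substitution $y^\top x = (K^{-1/2}y)^\top(K^{1/2}x)$ and Cauchy--Schwarz, reducing the claim to the self-duality of the Euclidean norm, and then exhibit the very same maximizer (your $y = Kx/\norm{x}_K$ coincides with the paper's, since $\norm{Kx}_{K^{-1}} = \norm{x}_K$) together with an explicit feasibility check. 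Your argument buys full rigor on the "$\leq$" direction and makes transparent why positive definiteness is needed (existence of $K^{1/2}$ and $K^{-1/2}$), at the cost of a few extra lines; the paper's version is terser but leaves the optimality of the candidate point to the reader.
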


\begin{proof}
  This can be verified using Lagrange multipliers: over the unit
  $\norm{\cdot}_{K^{-1}}$-ball the term $y^\top x$ attains its maximum at 
  $y=Kx/\left\Vert Kx\right\Vert _{K^{-1}}=Kx/\sqrt{x^{\top}Kx}$.
We can verify that for this choice of $y$, $y^\top x =\frac{x^{\top}Kx}{\sqrt{x^{\top}Kx}}=\Vert x\Vert_{K}$.
\end{proof}

\begin{definition} 
Let $\mathcal{S_{\eta}}=\left\{ \lambda_{+}-\lambda_{-} | \lambda_{+}\in \RCH_{\eta}(E_{+}),\lambda_{-}\in \RCH_{\eta}(E_{-})\right\} $.
\end{definition}

\begin{proposition}[Linear optimization over $S_{\eta}$]\label{prop:lin_opt_s_eta}
 Linear optimization over $S_{\eta}$ can be implemented in $\tilde{O}(n)$ time.
\end{proposition}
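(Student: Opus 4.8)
The plan is to reduce linear optimization over $S_\eta$ to two independent linear optimization problems, one over $\RCH_\eta(E_+)$ and one over $\RCH_\eta(E_-)$, and then to observe that each of these feasible sets is a \emph{capped simplex}, over which a linear function is optimized by a one-pass greedy rule, exactly as in the greedy step used elsewhere in the paper.

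First I would decouple the objective. Fix a cost vector $c\in\R^n$. Every element of $S_\eta$ has the form $z=\lambda_+-\lambda_-$ with $\lambda_+\in\RCH_\eta(E_+)$ and $\lambda_-\in\RCH_\eta(E_-)$ ranging independently, so $\min_{z\in S_\eta} c^\top z = \min_{\lambda_+\in\RCH_\eta(E_+)} c^\top\lambda_+ \; - \; \max_{\lambda_-\in\RCH_\eta(E_-)} c^\top\lambda_-$. Hence it suffices to optimize a linear function over $\RCH_\eta(E_+)$, and symmetrically over $\RCH_\eta(E_-)$ (with cost $-c$), and subtract.

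Next I would make the geometry explicit. Since $E_+=\{e_i:y_i=+1\}$ consists of standard basis vectors, $\RCH_\eta(E_+)=\{\lambda\in\R^n : \lambda_i=0 \text{ for } y_i=-1,\ 0\le\lambda_i\le\eta,\ \sum_i\lambda_i=1\}$: a simplex on the $+1$-coordinates with every coordinate capped at $\eta$. (It is nonempty precisely when $\eta\,|E_+|\ge 1$, which I would note is the standard feasibility assumption built into the $\nu$-SVM parameterization, since $\eta=2/(\nu n)$.) Minimizing $c^\top\lambda$ over such a polytope is a fractional-knapsack problem: sort the relevant coordinates so that $c_{i_1}\le c_{i_2}\le\cdots$, set $\lambda_{i_j}=\eta$ for $j=1,\dots,\lfloor 1/\eta\rfloor$, place the leftover mass $1-\eta\lfloor 1/\eta\rfloor$ on the next coordinate, and $0$ elsewhere. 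Optimality follows from a one-line exchange argument: in any optimal solution, if some coordinate with a smaller $c$-value carries less than $\eta$ mass while a coordinate with a larger $c$-value carries positive mass, shifting mass from the latter to the former cannot increase the objective; iterating yields the greedy solution.

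The only remaining point is the running time, and there is essentially no obstacle here: the sort costs $O(n\log n)$ and the greedy fill is $O(n)$, so running this once for $\RCH_\eta(E_+)$ and once for $\RCH_\eta(E_-)$ gives the optimizer over $S_\eta$ in $O(n\log n)=\tilde O(n)$ time. (If one wishes to remove the logarithmic factor, a linear-time selection to locate the saturation threshold replaces the sort and yields $O(n)$, but this is not needed for the statement.) The mildest care required is the bookkeeping of the single fractional coordinate and of the feasibility condition $\eta\,|E_\pm|\ge 1$; neither affects the time bound.
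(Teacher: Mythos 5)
Your proof is correct and takes the same route as the paper's: decouple the objective over the two independent factors $\RCH_\eta(E_+)$ and $\RCH_\eta(E_-)$, then fill mass greedily in increasing order of cost subject to the per-coordinate cap $\eta$. You simply spell out more of the details (the capped-simplex geometry, the exchange argument, the feasibility condition $\eta|E_\pm|\ge 1$) that the paper's terse proof leaves implicit.
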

\begin{proof}
The implementation of the linear optimization routine is done in near-linear
time via a simple greedy algorithm. The first thing to notice is that the objective is separable, so it is sufficient to optimize
separately on $E_+$ and $E_-$. 
This can be done easily, since we need to distribute one unit of
mass over the coordinates that span $E_{+}$ and one unit of mass over
the coordinates that span $E_{-}$, such that no coordinate receives
more than $\eta$. Therefore adding mass to the coordinates spanning
$E_{+}$ in increasing order of the weights $y$, and vice-versa to
those spanning $E_{-}$ yields the optimal solution.
\end{proof}

With these facts on hand, we can now proceed to describing our equivalent
formulation as a saddle-point problem, which will then be solved using
a similar method to the one we employed for the previous applications. 

Note that instead of directly using the kernel matrix $K$ in the
formulation, we replace it with $\tilde{K}=K+\mbox{\ensuremath{\frac{\epsilon}{2}}}I$.
This only changes the value of the objective by at most $\epsilon/2$ and it has
the advantage of making $\tilde{K}$ positive semidefinite, since it is now
guaranteed to be non-degenerate. This allows us to write the objective function
$(\lambda_+ - \lambda_-)^\top \tilde{K} (\lambda_+ - \lambda_-)$ as
$\norm{\lambda_+ - \lambda_-}_{\tilde{K}}^2$. This formulation can be easily
converted to a saddle point problem:

$$ \min_{\lambda\in\mathcal{S}}\Vert\lambda\Vert_{\tilde{K}} 
=\min_{\lambda\in\mathcal{S_{\eta}}}\max_{y:  \Vert
y\Vert_{\tilde{K}^{-1}}\leq1} y^\top \lambda = 
-\min_{y:\Vert
y\Vert_{\tilde{K}^{-1}}\leq1}\left(-\min_{\lambda\in\mathcal{S_{\eta}}}
y^\top \lambda\right) = -\min_{y:\Vert y\Vert_{\tilde{K}^{-1}}\leq1}f(y) $$
for $f(y) := -\min_{\lambda\in\mathcal{S_{\eta}}}  y^\top \lambda $ defined
over the $\norm{\cdot}_{\tilde{K}^{-1}}$-ball.

The subgradients of $f$ are easy to compute, since they require a
simple linear optimization over $\mathcal{S}$: $$\partial f(y)= -
\arg\min_{\lambda\in\mathcal{S}} y^\top \lambda$$
which can be done in time $\tilde{O}(n)$ using the greedy algorithm
described in Proposition~\ref{prop:lin_opt_s_eta}. The mirror map of choice for the domain
$\left\{ y:\Vert y\Vert_{\tilde{K}^{-1}}\leq1\right\} $will be $\omega:\left\{ y:\Vert y\Vert_{\tilde{K}^{-1}}\leq1\right\} \rightarrow\mathbb{R}$, $\omega(y)=\frac{1}{2}\Vert y\Vert_{\tilde{K}^{-1}}^{2}$,
with
$$\omega^{*}(z)=\begin{cases}
\frac{1}{2}\Vert z\Vert_{\tilde{K}}^{2} & \textnormal{if }\Vert z\Vert_{\tilde{K}}\leq1\\
\Vert z\Vert_{\tilde{K}}-\frac{1}{2} & \textnormal{if }\Vert z\Vert_{\tilde{K}}>1
\end{cases}.$$ Also, similarly to
Proposition~\ref{prop:fenchel_dual_computation}, we
have $\nabla\omega^{*}(z)=\tilde{K}z\cdot\min(1,1/\Vert z\Vert_{\tilde{K}})$,
hence $\Vert\nabla\omega^{*}(Z)\Vert_{\tilde{K}^{-1}}\leq1$.

The only thing left to do is to analyze the algorithm's iteration
count by bounding the strong convexity of $\omega$ and the Lipschitz
constant of $f$. We will do this with respect to $\Vert\cdot\Vert_{2}$.

\begin{proposition}
$\omega$ is $\min\left(\frac{\epsilon}{2},\left(\Vert K\Vert+\frac{\epsilon}{2}\right)^{-1}\right)$-strongly
convex with respect to $\Vert\cdot\Vert_{2}$, where $\Vert K\Vert$ is
the spectral norm of $K$.
\end{proposition}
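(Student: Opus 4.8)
The plan is to apply the Hessian criterion for strong convexity recorded in the preliminaries: since $\omega(y)=\tfrac12\,y^{\top}\tilde{K}^{-1}y$ is a quadratic form, its Hessian is the constant matrix $\nabla^{2}\omega(y)=\tilde{K}^{-1}$ for every $y$ in the domain, so it suffices to produce a $\mu$ with $w^{\top}\tilde{K}^{-1}w\ge\mu\,\|w\|_{2}^{2}$ for all $w\neq 0$, i.e.\ $\lambda_{\min}(\tilde{K}^{-1})\ge\mu$, and then verify that $\mu=\min\!\left(\tfrac{\epsilon}{2},(\|K\|+\tfrac{\epsilon}{2})^{-1}\right)$ works. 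The whole argument thus collapses to a single eigenvalue estimate; the only thing to get right is the direction, namely that strong convexity of a quadratic is governed by the \emph{smallest} eigenvalue of its Hessian, which after inversion is $1/\lambda_{\max}(\tilde{K})$.

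First I would use that $K$, being a kernel (Gram) matrix $K_{ij}=\langle\Phi(x_{i}),\Phi(x_{j})\rangle$, is positive semidefinite, so $\|K\|=\lambda_{\max}(K)$ and the spectrum of $K$ lies in $[0,\|K\|]$. Shifting by $\tfrac{\epsilon}{2}I$ moves the spectrum of $\tilde{K}=K+\tfrac{\epsilon}{2}I$ into $[\tfrac{\epsilon}{2},\,\|K\|+\tfrac{\epsilon}{2}]$, which in passing re-proves that $\tilde{K}$ is positive definite, hence invertible, with eigenvalues of $\tilde{K}^{-1}$ lying in $[(\|K\|+\tfrac{\epsilon}{2})^{-1},\,\tfrac{2}{\epsilon}]$. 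In particular $\nabla^{2}\omega(y)=\tilde{K}^{-1}\succeq(\|K\|+\tfrac{\epsilon}{2})^{-1}I$ uniformly in $y$, so $\omega$ is $(\|K\|+\tfrac{\epsilon}{2})^{-1}$-strongly convex with respect to $\|\cdot\|_{2}$.

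Finally, since $(\|K\|+\tfrac{\epsilon}{2})^{-1}\ge\min\!\left(\tfrac{\epsilon}{2},(\|K\|+\tfrac{\epsilon}{2})^{-1}\right)$, a function that is $(\|K\|+\tfrac{\epsilon}{2})^{-1}$-strongly convex is a fortiori $\min(\tfrac{\epsilon}{2},(\|K\|+\tfrac{\epsilon}{2})^{-1})$-strongly convex, which is exactly the stated claim; phrasing it with the minimum is what turns the subsequent iteration count $T=O(D\sigma\rho^{2}/\epsilon^{2})$, with $\sigma=1/\mu=\max(2/\epsilon,\|K\|+\tfrac{\epsilon}{2})$, $D\le\tfrac12$, and $\rho$ bounded by the $\ell_2$-diameter of $\mathcal{S}_\eta$, into the single uniform running-time bound $O(\max(1/\epsilon,\|K\|)/(\nu n\epsilon^{2}))$ quoted for the algorithm, valid in both the $\|K\|$-small and $\|K\|$-large regimes. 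I do not expect a genuine obstacle here --- it is elementary linear algebra --- the only pitfall being to confuse this strong-convexity (smallest-eigenvalue) estimate with the companion smoothness (largest-eigenvalue) estimate $\tilde{K}^{-1}\preceq\tfrac{2}{\epsilon}I$ that comes from the same spectral bracket and is used elsewhere for $\nabla\omega^{*}$.
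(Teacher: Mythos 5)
Your proof is correct and follows essentially the same route as the paper's: compute the constant Hessian $\nabla^{2}\omega(y)=\tilde{K}^{-1}$ and lower-bound its smallest eigenvalue, which for PSD $K$ is $(\Vert K\Vert+\tfrac{\epsilon}{2})^{-1}\ge\min\bigl(\tfrac{\epsilon}{2},(\Vert K\Vert+\tfrac{\epsilon}{2})^{-1}\bigr)$. You merely spell out the spectral bracket that the paper states in one line, and correctly observe that the stated constant with the minimum is just a (weaker) consequence of the sharper eigenvalue bound.
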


\begin{proof}
Writing down the Hessian of the mirror map, we obtain $\nabla^{2}\omega(y)=\tilde{K}^{-1}=\left(K+\frac{\epsilon}{2}I\right)^{-1}\succeq\min\left(\frac{\epsilon}{2},\left(\Vert K\Vert+\frac{\epsilon}{2}\right)^{-1}\right)I$.
The reason for using $\tilde{K}$ instead of $K$ in the formulation
is now evident: if $K$ is not full rank, then $\omega$ is not strongly
convex. Adding a small multiple of the identity forces all the eigenvalues
of $\tilde{K}$ to be at least $\epsilon/2$, and avoids the degeneracy
where some of them may be zero.
\end{proof}

\begin{proposition}
$f$ is $2\sqrt{\eta}$-Lipschitz with respect to $\Vert\cdot\Vert_{2}$.
\end{proposition}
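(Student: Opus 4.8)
The plan is to use the standard characterization of the Lipschitz constant stated in the preliminaries: a convex function $f$ is $\rho$-Lipschitz with respect to $\norm{\cdot}_2$ provided $\norm{\nabla f(y)}_2 \le \rho$ for every subgradient (the dual of $\ell_2$ is $\ell_2$ itself). So the whole claim reduces to bounding the $\ell_2$-norm of the subgradients of $f$.

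First I would compute the subgradient. Since $f(y) = \max_{\lambda \in \mathcal{S}_\eta} y^\top(-\lambda)$ is a maximum of linear functions of $y$, the envelope theorem (as already used for \ref{cara-dual-2}) gives $\partial f(y) \ni -\lambda^\star$ where $\lambda^\star \in \arg\min_{\lambda \in \mathcal{S}_\eta} y^\top \lambda$; this is exactly the expression for $\partial f$ noted just before the proposition. Hence it suffices to show $\norm{\lambda}_2 \le 2\sqrt{\eta}$ for every $\lambda \in \mathcal{S}_\eta$.

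Next I would bound $\norm{\lambda}_2$ using the structure of the restricted convex hull. Write $\lambda = \lambda_+ - \lambda_-$ with $\lambda_+ \in \RCH_\eta(E_+)$ and $\lambda_- \in \RCH_\eta(E_-)$. By definition $\lambda_+ = \sum_{a \in E_+} \mu_a a$ with $0 \le \mu_a \le \eta$ and $\sum_a \mu_a = 1$; since the elements of $E_+$ are distinct standard basis vectors, the coordinates of $\lambda_+$ are exactly the $\mu_a$, so $\norm{\lambda_+}_2^2 = \sum_a \mu_a^2 \le \eta \sum_a \mu_a = \eta$, i.e. $\norm{\lambda_+}_2 \le \sqrt{\eta}$, and likewise $\norm{\lambda_-}_2 \le \sqrt{\eta}$. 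The triangle inequality then gives $\norm{\lambda}_2 \le \norm{\lambda_+}_2 + \norm{\lambda_-}_2 \le 2\sqrt{\eta}$, which combined with the first paragraph yields that $f$ is $2\sqrt{\eta}$-Lipschitz with respect to $\norm{\cdot}_2$.

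There is essentially no hard step here; the only thing to get right is the observation that the weight cap $\mu_a \le \eta$ turns the $\ell_1$ constraint $\sum_a \mu_a = 1$ into the $\ell_2^2$ bound $\sum_a \mu_a^2 \le \eta$ (the $\ell_\infty$-times-$\ell_1$ estimate), and that $E_+$ and $E_-$ index the coordinates via standard basis vectors so no cross terms appear. One could instead note $\norm{\lambda}_2^2 = \norm{\lambda_+}_2^2 + \norm{\lambda_-}_2^2 \le 2\eta$ using disjoint supports, giving the slightly sharper $\sqrt{2\eta} \le 2\sqrt{\eta}$; the stated bound $2\sqrt{\eta}$ is all that is needed for the iteration-count analysis.
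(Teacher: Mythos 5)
Your proof is correct, and it reaches the same bound by a slightly different estimate at the key step. Both you and the paper reduce the claim to bounding the $\ell_2$-norm of the subgradient $-\lambda^\star$ with $\lambda^\star \in \arg\min_{\lambda\in\mathcal{S}_\eta} y^\top\lambda$. The paper then exploits the explicit output of the greedy oracle from Proposition~\ref{prop:lin_opt_s_eta}: the returned vertex has $2\lceil 1/\eta\rceil$ nonzero coordinates, $2\lfloor 1/\eta\rfloor$ of them equal to $\eta$, and the norm is computed directly as $\sqrt{2\left(\lfloor 1/\eta\rfloor\eta^{2}+(1-\eta\lfloor 1/\eta\rfloor)^{2}\right)}\leq 2\sqrt{\eta}$. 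You instead use the generic $\ell_\infty$-cap-times-$\ell_1$-mass estimate $\sum_a\mu_a^2\leq\eta\sum_a\mu_a=\eta$, which bounds \emph{every} point of $\RCH_\eta(E_\pm)$, not just the greedy vertex; this is cleaner, covers the entire subdifferential (relevant when the minimizer is not unique), and as you note gives the sharper constant $\sqrt{2\eta}$ via the disjointness of the supports of $E_+$ and $E_-$. The paper's computation buys nothing extra here beyond confirming the same constant for the particular iterate the algorithm uses, so your route is, if anything, preferable.
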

\begin{proof}
We simply need to bound the $2$-norm of the subgradient. By the construction
presented in Proposition~\ref{prop:lin_opt_s_eta} the subgradient contains $2\cdot\lceil1/\eta\rceil$
nonzero coordinates, $2\cdot\lfloor1/\eta\rfloor$ of which are precisely
$\eta$. This enables us to obtain a better upper bound than one would
usually expect on the $2$-norm of the subgradient, namely $\sqrt{2\cdot\left(\lfloor1/\eta\rfloor\cdot\eta^{2}+(1-\eta\cdot\lfloor1/\eta\rfloor)^{2}\right)}\leq\sqrt{2\cdot(2/\eta)\cdot\eta^{2}}=2\sqrt{\eta}$.
\end{proof}

\begin{proposition}
$\max_{y:\Vert y\Vert_{\tilde{K}}\leq1}\frac{1}{2}\Vert y\Vert_{\tilde{K}}^{2}\leq\frac{1}{2}$
\end{proposition}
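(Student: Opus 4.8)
The plan is to read this bound as the SVM counterpart of the diameter computation $\max_{y\in\BB_q(1)}D_\omega(y\Vert 0)=\tfrac12$ in Proposition~\ref{prop:mirror-map}: what is really being bounded is the Bregman diameter $D=\max_y D_\omega(y\Vert 0)$ of the mirror map $\omega(y)=\tfrac12\Vert y\Vert_{\tilde K^{-1}}^2$ over its domain, the unit $\Vert\cdot\Vert_{\tilde K^{-1}}$-ball, which is the constant that enters the Mirror Descent iteration count of Theorem~\ref{thm:mirror-descent}. So the only real step is to compute $D_\omega(y\Vert 0)$ and then maximize it over the domain.

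First I would observe that $\omega(y)=\tfrac12\,y^\top\tilde K^{-1}y$ is a nonnegative quadratic form, so $\nabla\omega(y)=\tilde K^{-1}y$, and in particular $\omega(0)=0$ and $\nabla\omega(0)=0$. Hence $D_\omega(y\Vert 0)=\omega(y)-\omega(0)-\nabla\omega(0)^\top y=\tfrac12\Vert y\Vert_{\tilde K^{-1}}^2$ — the Bregman divergence from the origin is just $\omega$ itself, since the origin is the zero-gradient minimizer of $\omega$. Maximizing over the domain, for every $y$ with $\Vert y\Vert_{\tilde K^{-1}}\le 1$ one gets $D_\omega(y\Vert 0)=\tfrac12\Vert y\Vert_{\tilde K^{-1}}^2\le\tfrac12$, with equality attained on the boundary since $\tilde K\succ0$. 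Feeding $D=\tfrac12$, together with the strong-convexity constant $\sigma^{-1}=\min\!\big(\tfrac{\epsilon}{2},(\Vert K\Vert+\tfrac{\epsilon}{2})^{-1}\big)$ and the Lipschitz bound $\rho=2\sqrt{\eta}$ (with $\eta=2/(\nu n)$) from the two preceding propositions, into Theorem~\ref{thm:mirror-descent} then yields the claimed $O(\max(1/\epsilon,\Vert K\Vert)/(\nu n\epsilon^2))$ iteration count.

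I do not expect any genuine obstacle here: the only thing that needs care is the bookkeeping identification — already used for approximate Carath\'{e}odory — between ``$y$ ranging over the $\Vert\cdot\Vert_{\tilde K^{-1}}$-ball, measured by $D_\omega$ for $\omega=\tfrac12\Vert\cdot\Vert_{\tilde K^{-1}}^2$'' and the displayed quantity; once that is fixed, the estimate reduces to the trivial fact that a squared norm does not exceed $1$ on its own unit ball.
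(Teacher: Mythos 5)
Your proof is correct; the paper in fact gives no proof at all for this proposition, since once the quantity is recognized as the Bregman diameter $D=\max_y D_\omega(y\Vert 0)=\max_y\frac12\Vert y\Vert^2$ of the quadratic mirror map over its own unit ball, the bound $\le\frac12$ is immediate, exactly as you argue. Your reading of the norm as $\Vert\cdot\Vert_{\tilde K^{-1}}$ (matching the mirror map's domain, despite the statement's apparent typo writing $\Vert\cdot\Vert_{\tilde K}$) and your identification of how $D=\frac12$ feeds into the iteration count are the intended ones.
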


Finally we can put everything together:
\begin{theorem}
An $\epsilon$-approximate solution to $\nu$-SVM can be found in
$O\left(\eta\cdot\max\left(\frac{2}{\epsilon}\Vert K\Vert+\frac{\epsilon}{2}\right)/\epsilon^{2}\right)=O\left(\max\left(\frac{1}{\epsilon}\Vert K\Vert\right)/\left(\nu n\epsilon^{2}\right)\right)$
iterations.
\end{theorem}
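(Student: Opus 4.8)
The plan is to instantiate the Mirror Descent guarantee of Theorem~\ref{thm:mirror-descent} (together with Corollary~\ref{cor:mirror-descent}) on the dual function $f(y)=-\min_{\lambda\in\mathcal S_\eta}y^\top\lambda$ over the $\Vert\cdot\Vert_{\tilde K^{-1}}$-ball, with mirror map $\omega(y)=\tfrac12\Vert y\Vert_{\tilde K^{-1}}^2$, in exactly the way Theorem~\ref{cara-mirror} applied Mirror Descent to the dual Carath\'{e}odory problem. All three quantities Theorem~\ref{thm:mirror-descent} needs have already been assembled in the preceding propositions, working throughout with respect to $\Vert\cdot\Vert_2$: the mirror map is $(1/\sigma)$-strongly convex with $1/\sigma=\min\!\big(\tfrac\epsilon2,(\norm K+\tfrac\epsilon2)^{-1}\big)$, hence $\sigma=\max\!\big(\tfrac2\epsilon,\norm K+\tfrac\epsilon2\big)$; the objective $f$ is $\rho$-Lipschitz with $\rho=2\sqrt\eta$; and the Bregman radius $D=\max_{\Vert y\Vert_{\tilde K^{-1}}\le1}D_\omega(y\Vert 0)$ is just the maximum of $\omega$ over the ball, so $D\le\tfrac12$. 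Substituting into $T\ge 2D\sigma\rho^2/\epsilon^2$ gives $T=O\!\big(\eta\cdot\max(\tfrac2\epsilon,\norm K+\tfrac\epsilon2)/\epsilon^2\big)$, and plugging in $\eta=\tfrac2{\nu n}$ collapses this to $O\!\big(\max(\tfrac1\epsilon,\norm K)/(\nu n\epsilon^2)\big)$, the claimed count. Each iteration is one subgradient evaluation of $f$ — a linear optimization over $\mathcal S_\eta$, done in $\tilde O(n)$ time by Proposition~\ref{prop:lin_opt_s_eta} — plus one mirror step $\nabla\omega^*$, which by $\nabla\omega^*(z)=\tilde K z\cdot\min(1,1/\Vert z\Vert_{\tilde K})$ is a single matrix--vector product with $\tilde K$, so no Cholesky factorization or explicit storage of $K$ is required.

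Next I would certify that the algorithm truly outputs an approximate $\nu$-SVM solution, mirroring the last step of Theorem~\ref{cara-mirror}. The subgradients have the form $\nabla f(y_t)=-\lambda^{(t)}$ with $\lambda^{(t)}\in\argmin_{\lambda\in\mathcal S_\eta}y_t^\top\lambda$, so $f(y_t)=\nabla f(y_t)^\top y_t$; feeding this together with $f(y_t)\ge f^\ast=-\min_{\lambda\in\mathcal S_\eta}\Vert\lambda\Vert_{\tilde K}$ into the guarantee $\tfrac1T\sum_t\nabla f(y_t)^\top(y_t-y)\le\epsilon$ and maximizing over $y$ in the dual ball shows that $\bar\lambda:=\tfrac1T\sum_t\lambda^{(t)}\in\mathcal S_\eta$ satisfies $\Vert\bar\lambda\Vert_{\tilde K}\le\min_{\lambda\in\mathcal S_\eta}\Vert\lambda\Vert_{\tilde K}+\epsilon$ (equivalently one can invoke Corollary~\ref{cor:mirror-descent} on the convex reformulation and dualize). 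Then undo the regularization: since $\tilde K=K+\tfrac\epsilon2 I$ and every $\lambda\in\mathcal S_\eta$ has $\Vert\lambda\Vert_2\le1$, we get $\big|\Vert\lambda\Vert_{\tilde K}^2-\Vert\lambda\Vert_K^2\big|=\tfrac\epsilon2\Vert\lambda\Vert_2^2\le\tfrac\epsilon2$, so a point $\epsilon$-optimal for the $\tilde K$-objective is $O(\epsilon)$-optimal for the original $K$-objective, and rescaling $\epsilon$ by a constant absorbs the loss without affecting the asymptotics.

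The step needing the most care is pinning down $\rho=2\sqrt\eta$, because this is exactly what produces the $1/(\nu n)$ savings in the final bound. A naive bound on $\Vert\nabla f(y)\Vert_2$ for a subgradient with $2\lceil1/\eta\rceil$ nonzeros only gives $\sqrt2$; the sharp bound uses the explicit structure of the greedy optimum from Proposition~\ref{prop:lin_opt_s_eta} — namely $2\lfloor1/\eta\rfloor$ coordinates equal to $\pm\eta$ plus at most two residual coordinates — to obtain $\sqrt{2\big(\lfloor1/\eta\rfloor\eta^2+(1-\eta\lfloor1/\eta\rfloor)^2\big)}\le 2\sqrt\eta$, and this value is squared inside $\rho^2$ in the iteration bound, so an error here propagates directly into the final count. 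Everything else is a mechanical substitution of the constants from the preceding propositions into Theorem~\ref{thm:mirror-descent}, so once these bounds are in place the theorem follows.
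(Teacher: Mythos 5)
Your proof is correct and follows essentially the same route as the paper's: plug the strong-convexity constant of $\omega$, the Lipschitz constant $\rho=2\sqrt{\eta}$, and the Bregman radius $D\le 1/2$ into Theorem~\ref{thm:mirror-descent} and substitute $\eta = 2/(\nu n)$. You go further than the paper's one-line proof by also certifying the primal output $\bar\lambda$ (mirroring Theorem~\ref{cara-mirror}) and bounding the error introduced by the $\tilde K = K + \tfrac{\epsilon}{2}I$ regularization, and you correctly read $1/\sigma = \min(\tfrac{\epsilon}{2},(\|K\|+\tfrac{\epsilon}{2})^{-1})$ where the paper's proof writes $\sigma$; both additions are sound and consistent with what the paper intends.
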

\begin{proof}
Follows from plugging in the parameters $\sigma=\min\left(\epsilon/2,\left(\Vert K\Vert+\epsilon/2\right)^{-1}\right)$,
$L=2\sqrt{\eta}$, $R=O(1)$ into the mirror descent algorithm.
\end{proof}

At this point, it makes sense to analyze the performance of our algorithm
for the most common choices of SVM kernels, which only requires bounding
the spectral norm of the kernel matrix; for this purpose we will simply
use the trace bound. The results are summarized in the table below.
The last column of the table contains the number of iterations required
to find a solution down to a precision of $\epsilon$, given that
all the vectors $x_{i}$ belong to the unit $\ell_{2}$ ball.

\begin{center}
\hspace*{-1cm}
\begin{tabular}{|c|c|c|}
\hline 
Kernel type & Upper bound on $\Vert K\Vert$ & Iteration count\tabularnewline
\hline\hline
\shortstack{Polynomial (homogeneous): \\ $K_{ij}=\langle x_{i},x_{j}\rangle^{d}$} & $n\cdot\max_{i}\Vert x_{i}\Vert_{2}^{2d}$ & $O\left(\max\left(\frac{1}{n\nu\epsilon^{3}},\frac{1}{\nu\epsilon^{2}}\right)\right)$\tabularnewline
\hline 
\shortstack{Polynomial (inhomogeneous): \\ $K_{ij}=\left(1+\langle
x_{i},x_{j}\rangle\right)^{d}$} & $n\cdot\left(1+\max_{i}\Vert x_{i}\Vert_{2}^{2}\right){}^{d}$ & $O\left(\max\left(\frac{1}{n\nu\epsilon^{3}},\frac{2^{d}}{\nu\epsilon^{2}}\right)\right)$\tabularnewline
\hline 
\shortstack{RBF: \\ $K_{ij}=\exp(-\Vert x_{i}-x_{j}\Vert^{2}/2\sigma^{2})$} & $n$ & $O\left(\max\left(\frac{1}{n\nu\epsilon^{3}},\frac{1}{\nu\epsilon^{2}}\right)\right)$\tabularnewline
\hline 
\shortstack{Sigmoid: \\ $K_{ij}=\tanh(\alpha \cdot \langle x_i, x_j \rangle + c)$} & $n$ & $O\left(\max\left(\frac{1}{n\nu\epsilon^{3}},\frac{1}{\nu\epsilon^{2}}\right)\right)$\tabularnewline
\hline 
\end{tabular}
\end{center}

It is worth mentioning that each iteration requires $\tilde{O}(n)$
time for computing the subgradient, and a multiplication of the kernel
matrix with a vector; one advantage is that the kernel matrix does
not need to be explicitly stored, as its entries can be computed on
the fly, whenever needed. In the case of linear kernels, this computation
is implemented in linear time since $\tilde{K}z=\left[x_{1}\vert\dots\vert x_{n}\right]^{\top}\left[x_{1}\vert\dots\vert x_{n}\right]z+\frac{\epsilon}{2}z$,
which requires computing a linear combination $h=\sum_{i}x_{i}\cdot z_{i}$
of the vectors $x$, and $n$ dot products between vectors from the
training set and $h$.
 
\bibliographystyle{plain}
\footnotesize
\bibliography{biblio}
\end{document}